\def\P{\mathbb{P}}
\def\R{\mathbb{R}}
\def\11{\mathbbm{1}}
\def\Gc{\mathcal{G}}
\def\ER{Erd\H{o}s-R\'enyi\ }
\def\blue{\textcolor{blue}}
\def\K{\mathcal{K}}
\def\Qb{\mathbb Q}
\newcommand{\Pb}{\mathbb P}
\newtheorem{thm}{Theorem}[section]
\theoremstyle{definition}
\newtheorem{claim}[thm]{Claim}
\newtheorem{DEF}[thm]{Definition}
\newtheorem{proposition}[thm]{Proposition}
\newtheorem{lemma}[thm]{Lemma}
\newtheorem{remark}[thm]{Remark}
\numberwithin{equation}{section}
\begin{document}
	\title{Low-Degree Hardness of Detection for Correlated \ER Graphs}
	
	\author{Jian Ding\footnote{J. Ding is partially supported by NSFC Key Program Project No. 12231002.}\\Peking University \and Hang Du\\MIT \and Zhangsong Li\\Peking University}

	\maketitle

\begin{abstract}
    Given two \ER graphs with $n$ vertices whose edges are correlated through a latent vertex correspondence, we study complexity lower bounds for the associated correlation detection problem for the class of low-degree polynomial algorithms. We provide evidence that any degree-$O(\rho^{-1})$ polynomial algorithm fails for detection, where $\rho$ is the edge correlation. Furthermore, in the sparse regime where the edge density $q=n^{-1+o(1)}$, we provide evidence that any degree-$d$ polynomial algorithm fails for detection, as long as $\log d=o\big( \frac{\log n}{\log nq} \wedge \sqrt{\log n} \big)$ and the correlation $\rho<\sqrt{\alpha}$ where $\alpha\approx 0.338$ is the Otter's constant. Our result suggests that several state-of-the-art algorithms on correlation detection and exact matching recovery 
    may be essentially the best possible.
\end{abstract}

\section{Introduction and main result}

In this paper, we consider the correlated \ER graph model (as defined below) and we study the computational complexity lower bounds for the corresponding correlation detection problem.
For any integer $n$, denote by $\operatorname{U}_n$ the set of unordered pairs $(i,j)$ with $1\le i\neq j\le n$. 

\begin{DEF}[Correlated \ER graph model] {\label{def-correlated-random-graph}}
    Given an integer $n\ge 1$ and two parameters $p,s\in (0,1)$, for $(i,j) \in \operatorname{U}_n$ let $I_{i,j}$ be independent Bernoulli variables with parameter $p$, and let $J_{i,j}$ and $K_{i,j}$ be independent Bernoulli variables with parameter $s$. In addition, let $\pi_*$ be an independent uniform permutation on $[n]=\{1,\dots,n\}$. Then, we define a triple of correlated random graphs $(G,A,B)$ such that for $(i,j) \in \operatorname{U}_n$ (note that we identify a graph with its adjacency matrix)
    \[
G_{i,j}=I_{i,j},A_{i,j}=I_{i,j}J_{i,j},B_{i,j}=I_{i,j}K_{\pi_*(i),\pi_*(j)}\,.
    \]
    For ease of presentation, we shall reparameterize such that $q=ps$ and $\rho=\frac{s(1-p)}{1-ps}$ respectively. We denote the joint law of $(\pi_*,G,A,B)$ as $\Pb_{*,n,q,\rho}$, and the marginal law of $(A,B)$ as $\Pb_{n,q,\rho}$. 
\end{DEF}

Two basic problems regarding the correlated \ER graphs are as follows: (1) the detection problem, i.e., testing $\mathbb P_{n,q,\rho}$ against $\mathbb Q_{n,q}$ where $\mathbb Q_{n,q}$ is the law of two independent \ER graphs on $[n]$ with edge density $q$; (2) the matching problem, i.e., recovering the latent matching $\pi_*$ from $(A,B) \sim \P_{n,q,\rho}$. In this paper, we provide evidence for computational hardness on the detection problem by analyzing a specific class of algorithms known as \emph{low-degree polynomial algorithms}. Our main results are informally stated as below; see Theorems~\ref{thm-dense-regime} and \ref{thm-sparse-regime} for precise statements.
\begin{thm}[Informal]
    For sufficiently large integer $n$ and parameters $q,\rho\in (0,1)$, there is evidence suggesting that algorithms based on polynomials of degree $O(\rho^{-1})$ fail for detection in the correlated \ER graph model. 
    
    Furthermore, if $q,\rho$ satisfies $nq=n^{o(1)}$ and $\rho^2<{\alpha}-\varepsilon$ for some arbitrary constant $\varepsilon>0$ (where $\alpha \approx 0.338$ denotes the Otter's constant), then there is evidence suggesting that algorithms based on polynomials of degree $d$ fail for detection as long as
    \begin{equation}
             \label{eq-D(n,q)}
          \log d= o\left(\frac{\log n}{\log nq}\wedge \sqrt{\log n}\right)\,.   
    \end{equation}
\end{thm}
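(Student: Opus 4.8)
The plan is to argue within the low-degree polynomial framework. Writing $\mathcal A_{\le d}$ for the degree-$d$ advantage
\[
\mathcal A_{\le d}\;:=\;\sup\bigl\{\,\E_{\Pb_{n,q,\rho}}P-\E_{\Qb_{n,q}}P\;:\;\deg P\le d,\ \mathrm{Var}_{\Qb_{n,q}}P\le 1\,\bigr\},
\]
it is enough to show $\mathcal A_{\le d}=O(1)$, which is the standard ``low-degree evidence'' for hardness of detection. I would begin with the Fourier expansion in the orthonormal basis $\chi_{S,T}(A,B)=\prod_{e\in S}\tfrac{A_e-q}{\sqrt{q(1-q)}}\prod_{f\in T}\tfrac{B_f-q}{\sqrt{q(1-q)}}$ of $L^2(\Qb_{n,q})$, indexed by subgraphs $S,T$ of the complete graph on $[n]$. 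Since a matched pair of edges is a $\rho$-correlated pair of $\mathrm{Bernoulli}(q)$ variables, every local factor of the likelihood ratio $L=d\Pb_{n,q,\rho}/d\Qb_{n,q}$ equals $1+\rho\,\chi_{\{e\}}(A)\,\chi_{\{\pi(e)\}}(B)$, so $\widehat L(S,T)=\rho^{|S|}\,\Pr_{\pi}[\pi(S)=T]$ for $|S|=|T|$ and $0$ otherwise. Choosing $P\propto L^{\le d}$ and using that $\sum_{S:\,e(S)=k}\Pr_{\pi,\pi'}[\pi(S)=\pi'(S)]$, after grouping copies by isomorphism type, equals $f(k):=\#\{\text{graphs with }k\text{ edges, no isolated vertices}\}$, one obtains
\[
\mathcal A_{\le d}^{\,2}\;=\;\sum_{1\le k\le d/2}\rho^{2k}\,f(k).
\]
For the first part I would simply use $f(k)\le(2ek)^k$: each summand is $(2e\rho^2 k)^k$, which for $k\le d/2=O(\rho^{-1})$ is dominated by a fixed geometric sequence, so the series is $O(1)$.

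For the second part this is too weak --- $f(k)^{1/k}\to\infty$, so the series diverges, reflecting the classical failure of the plain second moment method ($\|L\|_{L^2(\Qb_{n,q})}=\infty$) --- and the remedy will be to pass to a conditional problem. In the sparse regime $q=n^{-1+o(1)}$, let $E_n$ be the event that neither $A$ nor $B$ contains a subgraph on at most $K$ vertices of excess above a fixed small threshold (so every small subgraph is essentially a forest), with $K=\omega(\log d)$ and
\[
K\;=\;o\!\left(\frac{\log n}{\log nq}\wedge\sqrt{\log n}\right).
\]
A first-moment estimate over subgraph shapes shows $\Pb_{n,q,\rho}(E_n^c)$ and $\Qb_{n,q}(E_n^c)$ are small; the two terms bounding $K$ arise respectively from the $(nq)^v$ first-moment threshold for an excess structure on $v$ vertices and from the $2^{\binom v2}$ count of shapes on $v$ vertices. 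It then suffices to bound $\|L_{E_n}^{\le d}\|_{L^2(\Qb_{n,q})}$ for the conditioned planted law, where $L_{E_n}=d\,\Pb_{n,q,\rho}(\cdot\mid E_n)/d\Qb_{n,q}$.

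The crux will be the estimate that conditioning on $E_n$ kills exactly the ``dense-overlap'' terms of the second moment, so that the subgraph shapes with $k$ edges surviving with non-negligible weight in $\|L_{E_n}^{\le d}\|^2$ are $K$-locally tree-like, hence of excess $O(k/K)$; the number of such shapes is at most $\alpha^{-k}\,e^{O(k\log k/K)}\,\mathrm{poly}(k)$, with $\alpha^{-k}\mathrm{poly}(k)$ (indeed $\sim c\,\alpha^{-k}k^{-5/2}$ for trees, by Otter's theorem, times a subexponential factor for forests) accounting for the tree part and $k^{O(k/K)}$ for the few long ``excess'' edges. Then
\[
\|L_{E_n}^{\le d}\|^2\;\lesssim\;\sum_{k\le d/2}\Bigl(\tfrac{\rho^2}{\alpha}\Bigr)^{k}e^{O(k\log k/K)}\,\mathrm{poly}(k),
\]
and since $\rho^2<\alpha-\varepsilon$ and $\log d=o(K)$, the correction $e^{O(\log k/K)}$ is $1+o(1)$ uniformly over $k\le d/2$, so the sum is $O(1)$ --- yielding the range \eqref{eq-D(n,q)}.

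The main obstacle will be the conditioning step, which is also where the constraints on $d$ originate. Because $\Qb_{n,q}(\cdot\mid E_n)$ is no longer a product measure, the clean orthogonality behind the identity above is lost, so one must show that restricting to tame graphs truncates the likelihood-ratio expansion to $K$-locally sparse shapes up to a negligible remainder (equivalently, one may work with a hand-truncated likelihood ratio and show it is close to $L$ in the relevant sense). Most delicately, controlling the error from replacing $\Pb_{n,q,\rho}$ by its conditioning requires $\Pb_{n,q,\rho}(E_n^c)$ to be small enough to absorb the moment blow-up of degree-$d$ polynomials under the (biased, hence lossy) null measure; exploiting sparsity to keep this loss well below the naive hypercontractive bound is what should allow $d$ as large as $e^{o(\sqrt{\log n})}$ rather than merely $\mathrm{polylog}(n)$. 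Finally, the count of $K$-locally sparse graphs has to be done carefully enough that the exponential rate is exactly Otter's constant, with all other factors subexponential in $k$.
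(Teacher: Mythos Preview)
Your Part~1 argument is correct and matches the paper's Section~2 essentially verbatim: both compute the Fourier coefficients of the likelihood ratio to obtain $\mathcal A_{\le d}^{\,2}=\sum_{\mathbf H:|E(\mathbf H)|\le d/2}\rho^{2|E(\mathbf H)|}$ and then bound the number of unlabeled graphs on $k$ edges by $\exp(O(k\log k))$.

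For Part~2 your overall plan---condition on the absence of small dense subgraphs, reduce the surviving shape count to near-trees, invoke Otter---is the right one, but you mislocate the technical difficulty in two places, and the paper's execution differs substantially.

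First, in the paper's framework (their Definition~1.5) one may freely replace $\Pb$ by any $\Pb'$ with $\operatorname{TV}(\Pb,\Pb')=o(1)$ and then bound the low-degree advantage of $\Pb'$ against the \emph{unchanged} null $\Qb$; the orthonormal basis and the identity $\mathcal A_{\le d}^{\,2}=\sum(\E_{\Pb'}\phi_{S_1,S_2})^2$ survive intact. So there is no ``moment blow-up to absorb'' via hypercontractivity, and that is not where the constraint $\log d=o(\sqrt{\log n})$ comes from. The paper moreover conditions on an event $\mathcal G$ about the \emph{mother graph} $G$ (which is independent of $\pi_*$) rather than on $(A,B)$; this choice is exploited repeatedly.

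Second, the conditioning does not simply ``kill dense-overlap terms'': $\E_{\Pb'}[\phi_{S_1,S_2}]$ for inadmissible $(S_1,S_2)$ is not small, merely intractable. Instead the paper proves (Proposition~3.3) that any $f\in\mathcal P_{n,d}$ agrees on $\mathcal G$ with an \emph{admissible} polynomial $f'$ satisfying $\E_\Qb[(f')^2]\le 8\,\E_\Qb[f^2]$, which reduces the sum to admissible pairs. The genuine crux---which your sketch does not really confront---is then bounding $|\E_{\Pb'}[\phi_{S_1,S_2}]|$ for \emph{admissible} $S_1,S_2$: conditioning on $\mathcal G$ destroys the edge-independence of $G$, so one no longer has the clean formula $\rho^{|E|}\mathbf 1_{\{\pi(S_1)=S_2\}}$. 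The paper combines a trivial $L^1$ bound (Lemma~3.7) with a delicate vertex-count bound (Proposition~3.8) whose proof analyzes which realizations of the unconditioned edges are ``improper'' (non-cancelling in an alternating sum) and occupies most of the appendix, including an auxiliary weighted-graph argument. Finally, the two constraints on $d$ arise in the paper not from the probability of the conditioning event (Lemma~3.2 is easy) but from forcing admissible graphs to have edge-to-vertex ratio at most $1+O(1/\log d)$ (Lemma~A.2), which is exactly what allows the Otter count $C(\alpha-\varepsilon/2)^{-N}$ to apply to them (Lemma~A.3).
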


Given that there is standard reduction from detection to exact matching recovery (one can do correlation detection by running any exact matching algorithm and then checking the size of overlap, see e.g. \cite{WXY20}), our results also suggest that the (exact) graph matching problem is computationally hard in the aforementioned regimes. 
In particular, our results suggest that several of the state-of-the-art algorithms in the existing literature have nearly reached the limit of efficient algorithms, and thus give rise to a characterization of the computational threshold for detection and exact matching in the correlated \ER graph model. See Table~\ref{table:algorithm-vs-hardness} for a detailed comparison between efficient algorithms and the hardness results. 

It is also noteworthy that when $nq = n^{O((\log \log n)^{-1})}$, our result implies that as long as $\rho$ is below the threshold $\sqrt{\alpha}$, all polynomial-based algorithms with degree $(\log n)^{o(1)}$ fail for correlation detection. Remarkably, the situation would change drastically when one is equipped with the power of higher degree polynomials; it was shown in \cite{BCL+19} that whenever $q$ satisfies
$
n^{\Omega((\log \log n)^{-1})}\le nq\le n^{o(1)}$,
there are polynomial-based algorithms with degree $(\log n)^{O(1)}$ that can achieve detection and matching provided with $\rho \ge (\log n)^{-o(1)}$ (although their algorithms require pseudo-polynomial running time). In view of this, our upper bound for the degree in \eqref{eq-D(n,q)} is also tight in a certain sense.

\subsection{Backgrounds on the correlated random graph model}

The network alignment problem has rich connections from various applied fields such as social network analysis \cite{NS08,NS09}, computer
vision \cite{CSS06, BBM05}, computational biology \cite{SXB08, VCL+15} and natural language processing \cite{HNM05}. For theoretical study of network alignment, the correlated \ER graph is a simple and arguably a canonical probabilistic model and therefore has been extensively studied recently. On the one hand, we have a fairly complete understanding of the information thresholds for both the problems of detection and matching thanks to \cite{CK16, CK17, HM20, WXY20, WXY21, GML20, DD22a, DD22b}. On the other hand, progressively improved algorithms on both problems have been obtained; see \cite{PG11, YG13, LFP14, KHG15, FQRM+16, SGE17, BCL+19, DMWX21, FMWX22a, FMWX22b, BSH19, CKMP19, DCKG19, MX20, GM20, GML20, MRT21, MRT23, MWXY21+, GMS22+, MWXY23, DL22+, DL23+}. Notably, in a series of recent works \cite{GML20, GMS22+, MRT23, MWXY23, DL22+, DL23+}, various rather efficient algorithms have been proposed with provable guarantees on successes. In particular, when $q>\frac{\log n}{n}$, \cite{MWXY23} proposed an efficient algorithm for exact matching that succeeds when $\rho>\sqrt{\alpha}$; See also \cite{MWXY21+} (respectively, \cite{GML20,GMS22+}) for remarkable results on detection (respectively, partial recovery) of a similar flavor. Furthermore, in the dense regime when $q \geq n^{-1+\delta+o(1)}$ for some constant $\delta>0$, \cite{DL23+} proposed an efficient algorithm for exact matching (which also serves for the goal of detection) that succeeds as long as the correlation is non-vanishing.

Note that the lower bounds required on the correlation for the aforementioned state-of-the-art algorithms are substantially larger than the information thresholds. It is then natural to wonder whether this information-computation gap reflects the fundamental nature of the problem or merely reflects the lack of understanding by researchers. In light of this, the main goal of this work is to provide evidence supporting the former scenario (at least for detection and exact matching), i.e., the algorithms proposed in \cite{MWXY21+, MWXY23, DL23+} for correlation detection or exact matching may just essentially be the best possible. Since at this point it seems rather elusive to prove hardness on the graph matching problem for a typical instance under the assumption of P$\neq$NP, we can only hope to prove hardness under even (much) stronger hypothesis. For this purpose, the framework of low-degree polynomials emerges as a natural and compelling choice. 
Indeed, it has been proved that the class of low-degree polynomial algorithms is a useful proxy for computationally efficient algorithms, in the sense that the best-known polynomial-time algorithms for a wide variety of high-dimensional inference problems are captured by the low-degree class; see e.g. \cite{Hop18, SW22, KWB22}. Furthermore, it is conjectured in \cite{Hop18} that the failure of degree-$d$ polynomial algorithms implies the failure of all ``robust'' algorithms with running time $n^{\Tilde{O}(d)}$ (here $\Tilde{O}$ means having at most this order up to a $\operatorname{poly} \log n$ factor).
We remark that the state-of-the-art algorithms for correlation detection and matching for correlated \ER graphs also belong to such class: the detection algorithm in \cite{MWXY21+} and the matching algorithm in \cite{MWXY23} are based on counting specific trees, and it can be generalized to counting graphs with bounded tree-width when the graph is dense. Also, the partial recovery algorithm in \cite{GML20} is a message-passing algorithm which exploits the local tree structure. In addition, although the algorithm proposed in \cite{DL23+} is not a low-degree polynomial algorithm $\emph{per se}$, it can be modified into a low-degree algorithm if one uses subgraph counts to replace the initialization procedure in \cite{DL23+}. 
As a result, it seems plausible that our impossibility result from low-degree polynomials captures the intrinsic average-case hardness of graph matching problems, or at least it is expected that breaking such impossibility results would require a major breakthrough in algorithms. With this belief in mind, our complexity lower bounds more or less match the state-of-the-art algorithms for correlation detection and exact matching, as summarized in the following table. Meanwhile, we do feel that the complexity lower bound in our result is also true for partial recovery (see developments on the algorithmic side in \cite{GML20, GMS22+} and predictions on hardness in \cite{PSS22, GMS22+}). But since there is no evident reduction from partial matching to detection, so far we do not have solid evidence on the computational hardness of partial matching. We believe such a hardness result requires a novel formulation of low-degree hardness framework for estimation, and we leave it for future work.

\begin{table}[ht!]
    \centering
    \begin{tabular}{|l|c|c|}
    \hline
    \ & Algorithms & Hardness \\
    \hline
    Detection & \thead{$nq\ge n^{\Omega(1)},\rho=\Omega(1);d=O(1)$ \cite{DL23+} \\ $nq\ge \Omega(1),\rho>\sqrt{\alpha};d=O(1)$ \cite{MWXY21+}} & \thead{$nq=n^{\Omega(1)},\rho=o(1);d=O(\rho^{-1})$\\ $nq=n^{o(1)},\rho<\sqrt{\alpha};\log d=o(D(n,q))$} \\
    \hline
    ExaMatch & \thead{$nq\ge n^{\Omega(1)}, \rho=\Omega(1); d=O(1)$ \cite{DL23+}\\ $nq\ge \log n,q>\sqrt{\alpha};d=O(\log n)$ \cite{MWXY23}} & \thead{$nq=n^{\Omega(1)},\rho=o(1);d=O(\rho^{-1})$\\ $nq=n^{o(1)},\rho<\sqrt{\alpha};\log d=o(D(n,q))$} \\
    \hline
    ParMatch & \thead{$nq=O(1),\rho>\sqrt{\alpha};d=O(\log n)$ \cite{GML20, GMS22+}}& Unknown \\
    \hline
    \end{tabular}
    \caption{Comparison between algorithms and hardness}
    \caption*{Let us further explain our convention regarding the triple $(n, q, d)$ in the table, and we take the first row for instance. In the column of ``Algorithms'', it means that when $nq \geq n^{\Omega(1)}$ and $\rho = \Omega(1)$, there exists a polynomial-based algorithm with $d = O(1)$ (and with polynomial running time) which succeeds on detection; in the column of ``Hardness'', it means that when $nq = n^{\Omega(1)}$ and $\rho = o(1)$, there is evidence suggesting that polynomial-based algorithms with degree below $d$  (for $d= O(\rho^{-1}$) fail for detection. The similar interpretations apply to other rows.  We write $D(n,q)$ for the quantity $\frac{\log n}{\log nq}\wedge \sqrt{\log n}$ as in \eqref{eq-D(n,q)}. 
    In addition, we expect that the ``Unknown'' block can be filled with similar results as above.}
        \label{table:algorithm-vs-hardness}
\end{table}

\subsection{The low-degree polynomial framework}
Drawing inspiration from the sum-of-squares hierarchy, the low-degree polynomial method offers a promising approach for establishing computational lower bounds in high-dimensional inference problems. In broad terms, this approach focuses on analyzing algorithms that rely on the evaluations of a collection of polynomials with moderate degrees. The impetus for delving into this category of algorithms stems from the examination of high-dimensional hypothesis testing problems \cite{BHK+19, HS17, HKP+17, Hop18}, with an extensive overview provided in \cite{KWB22}. Expanding upon this foundation, the low-degree framework has been subsequently applied to explore random optimization and constraint satisfaction problems.

As an appealing feature, the approach of low-degree polynomials has yielded tight hardness results for a range of problems, with prominent examples including but not limited to detection problems such as planted clique, planted dense subgraph, community detection, sparse-PCA (see \cite{HS17, HKP+17, Hop18, KWB22, SW22, DMW23+, BKW19, DKW19}), optimization problems such as maximal independent sets in sparse random graphs \cite{GJW20, Wein20}, and constraint satisfaction problems such as random $k$-SAT \cite{BH21}.
In the remaining of this paper, we will focus on applying this framework in the context of correlation detection for \ER graphs.


We first specify the class of algorithms we shall study.  For $n,d\ge 1$, denote $\mathcal P_{n,d}$ for the set of polynomials from $\{0,1\}^{2|\!\operatorname{U}_n\!|}$ to $\mathbb R$ with degree no more than $d$. 

\begin{DEF}[Polynomial algorithms for detection]\label{def-poly-test}
    For an algorithm $\mathcal A$ which tests $\mathbb P_{n, q,\rho}$ against $\mathbb Q_{n,q}$, we say it is a \emph{polynomial test with degree $d$} if the following holds:
    there exists a polynomial $f \in \mathcal{P}_{n,d}$ and some threshold $\tau\in \R$, such that $\mathcal A$ accepts $\Pb_{n,q,\rho}$ if and only if $f\big(\{A_{i,j}\}_{(i,j)\in \operatorname{U}_n},\{B_{i,j}\}_{(i,j)\in \operatorname{U}_n}\big)\ge \tau$.
\end{DEF}

We begin with the classic framework for assessing the computational complexity of polynomial tests, as raised in \cite{HS17}. In the remaining part of this paper, we will write $f=O(g)$ 
to indicate that $|f|\le Cg$ for some absolute constant $C>0$. 

\begin{DEF}{\label{def-evidence-hardness}}
    For a quadruple $(n,q,\rho,d)$ with $q,\rho,d$ possibly depending on $n$, we say that there is evidence that polynomial tests with degree at most $d$ fail for detection if the following holds: the signal-to-noise ratio is uniformly bounded for polynomial tests with $f\in \mathcal P_{n,d}$, i.e. as $n \to \infty$,
    \begin{equation}\label{eq-evidence-detection}
        \sup_{f\in \mathcal P_{n,d}}\frac{\mathbb E_{\Pb_{n,q,\rho}} [f]}{\sqrt{\mathbb E_{\Qb_{n,q}} [f^2}]}=O(1)\,.
    \end{equation}
\end{DEF}
\begin{remark}
This framework is closely related to the concept of strong separation introduced in \cite[Definition~1.8]{BAH+22}. To be specific, a real polynomial \(f\) defined on $$\big(\{A_{i,j}\}_{(i,j)\in \operatorname{U}_n},\{B_{i,j}\}_{(i,j)\in \operatorname{U}_n}\big)$$ is said to strongly separate \(\mathbb{P}_{n,q,\rho}\) and \(\mathbb{Q}_{n,q}\) if

\[
\sqrt{\operatorname{Var}_{\mathbb{P}_{n,q,\rho}}f(A,B) \vee \operatorname{Var}_{\mathbb{Q}_{n,q}}f(A,B)} = o\left(\left| \mathbb{E}_{\mathbb{P}_{n,q,\rho}}f(A,B) - \mathbb{E}_{\mathbb{Q}_{n,q}}f(A,B) \right|\right),\text{ as }n\to\infty.
\]

It can be deduced from Chebyshev's inequality that strong separation implies that the polynomial test based on $f$ with some appropriate threshold $\tau$ (see Definition~\ref{def-poly-test}) has vanishing type-I and type-II errors. Moreover, it is evident that if the optimal signal-to-noise ratio between \(\mathbb{P}_{n,q,\rho}\) and \(\mathbb{Q}_{n,q}\) remains bounded, no polynomial in \(\mathcal{P}_{n,d}\) can strongly separate these two measures. Consequently, \eqref{eq-evidence-detection} provides evidence of computational hardness in this context.
\end{remark}

Taking this framework, our first result indicates the computational hardness of detection up to degree $d=O(\rho^{-1})$ polynomials.

\begin{thm}\label{thm-dense-regime}
For any triple $(n,q,\rho)$, we have \eqref{eq-evidence-detection} holds for $d=O(\rho^{-1})$. This suggests that polynomial tests with degree at most $O(\rho^{-1})$ fail for detection. 
\end{thm}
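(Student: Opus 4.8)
The plan is to pass from \eqref{eq-evidence-detection} to the norm of the \emph{low-degree likelihood ratio} and to compute that norm exactly. Recall the standard variational identity: writing $L=\dif\Pb_{n,q,\rho}/\dif\Qb_{n,q}$ (well defined and bounded since $\Qb_{n,q}$ is fully supported), one has
\[
\sup_{f\in\mathcal P_{n,d}}\frac{\E_{\Pb_{n,q,\rho}}[f]}{\sqrt{\E_{\Qb_{n,q}}[f^2]}}=\bigl\|L^{\le d}\bigr\|_{L^2(\Qb_{n,q})}\,,
\]
where $L^{\le d}$ is the $L^2(\Qb_{n,q})$-orthogonal projection of $L$ onto $\mathcal P_{n,d}$, the supremum being attained at $f\propto L^{\le d}$ by Cauchy--Schwarz. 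Since $\Qb_{n,q}$ makes all $2|\!\operatorname{U}_n\!|$ entries i.i.d.\ Bernoulli$(q)$, I would expand in the corresponding product Fourier basis: for edge sets $\alpha,\beta\subseteq\operatorname{U}_n$ set $\chi_{\alpha,\beta}(A,B)=\prod_{e\in\alpha}\tfrac{A_e-q}{\sqrt{q(1-q)}}\prod_{e\in\beta}\tfrac{B_e-q}{\sqrt{q(1-q)}}$; this is an orthonormal basis of $L^2(\Qb_{n,q})$ and $\chi_{\alpha,\beta}$ has degree $|\alpha|+|\beta|$, so $\|L^{\le d}\|^2=\sum_{|\alpha|+|\beta|\le d}\widehat L(\alpha,\beta)^2$ with $\widehat L(\alpha,\beta)=\E_{\Pb_{n,q,\rho}}[\chi_{\alpha,\beta}]$.

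To evaluate $\widehat L(\alpha,\beta)$ I would condition on the planted permutation. Conditionally on $\pi_*$, the observation decouples into mutually independent pairs $(A_e,B_{\pi_*(e)})_{e\in\operatorname{U}_n}$ (with $\pi_*$ acting on $\operatorname{U}_n$ via its action on vertices), each a pair of Bernoulli$(q)$ variables of correlation $\rho$. Rewriting $\prod_{f\in\beta}\tfrac{B_f-q}{\sqrt{q(1-q)}}=\prod_{e\in\pi_*^{-1}(\beta)}\tfrac{B_{\pi_*(e)}-q}{\sqrt{q(1-q)}}$ and expanding over coordinates, any coordinate in the symmetric difference of $\alpha$ and $\pi_*^{-1}(\beta)$ produces a mean-zero factor while a coordinate in $\alpha\cap\pi_*^{-1}(\beta)$ produces the factor $\rho$, so that $\widehat L(\alpha,\beta)=\rho^{|\alpha|}\,\Pr_{\pi_*}[\pi_*(\alpha)=\beta]$; in particular this vanishes unless $|\alpha|=|\beta|$ and does not depend on $q$. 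Squaring, summing, and introducing a second independent uniform permutation $\sigma$ to collapse the square, this gives
\[
\|L^{\le d}\|^2=\sum_{k=0}^{\lfloor d/2\rfloor}\rho^{2k}\sum_{|\alpha|=k}\Pr_{\sigma}\bigl[\sigma(\alpha)=\alpha\bigr]\,.
\]

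The crux is to identify the inner sum as a count of graphs up to isomorphism. For an edge set $\alpha$ with incident-vertex set $V(\alpha)$ of size $v(\alpha)$, one has $\sigma(\alpha)=\alpha$ precisely when $\sigma$ restricts to an automorphism of the graph $(V(\alpha),\alpha)$ and acts arbitrarily off $V(\alpha)$, so $\Pr_\sigma[\sigma(\alpha)=\alpha]=|\mathrm{Aut}(\alpha)|\,(n-v(\alpha))!/n!$. Grouping the $k$-edge sets by isomorphism type $H$ and using that $K_n$ contains exactly $n!/\bigl((n-v_H)!\,|\mathrm{Aut}(H)|\bigr)$ copies of $H$ (none if $v_H>n$), each isomorphism type contributes exactly $1$, whence
\[
\|L^{\le d}\|^2\;\le\;\sum_{k=0}^{\lfloor d/2\rfloor}N_k\,\rho^{2k}\,,
\]
where $N_k$ is the number of isomorphism classes of graphs with exactly $k$ edges and no isolated vertices. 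A crude bound suffices: such a graph has at most $2k$ vertices, so $N_k\le\binom{\binom{2k}{2}}{k}\le(2ek)^k$.

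It then remains to sum. If $d\le c\rho^{-1}$ for a small absolute constant $c$, then for every $k\le d/2$ we have $N_k\rho^{2k}\le(2ek\rho^2)^k\le(ec\rho)^k$, so $\|L^{\le d}\|^2\le\sum_{k\ge 0}(ec\rho)^k=O(1)$ as soon as $\rho\le\tfrac1{2ec}$; and when $\rho>\tfrac1{2ec}=\Omega(1)$ the degree satisfies $d=O(1)$, so the sum has $O(1)$ terms each $O(1)$, and is again $O(1)$. In all cases \eqref{eq-evidence-detection} holds. The step I expect to demand the most care is the isomorphism-class collapse (keeping careful track of isolated vertices, the automorphism factors, and the $v_H\le n$ truncation); note also that $N_k$ grows super-exponentially ($\sim k^{\Theta(k)}$), which is precisely why this argument only controls degree up to $d=O(\rho^{-1})$, and a sharper bound on $N_k$ would be the natural way to extend the range.
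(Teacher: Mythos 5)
Your proposal is correct and follows essentially the same route as the paper: it uses the same variational identity (Lemma~\ref{lem-optimal-signal-to-noise-ratio}), the same orthonormal basis $\phi_{S_1,S_2}=\chi_{\alpha,\beta}$, the same conditioning on $\pi_*$ to get $\widehat L(\alpha,\beta)=\rho^{|\alpha|}\Pr[\pi_*(\alpha)=\beta]$ as in Lemma~\ref{lem-expectation-without-conditioning}, and the same collapse to $\sum_{\mathbf H}\rho^{2|E(\mathbf H)|}$ followed by a crude $k^{O(k)}$ bound on the number of $k$-edge isomorphism classes. The $\sigma=\pi_2^{-1}\pi_1$ trick is a cosmetic repackaging of the paper's direct squaring of the automorphism factor; the content is identical.
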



Our next goal is to provide a more precise characterization of the computational threshold for correlation detection in the sparse regime where $q=n^{-1+o(1)}$, and particularly when \(nq=\operatorname{poly}\log n\). In this regime, the computational thresholds for both problems are expected to be at \(\rho=\sqrt{\alpha}\), where the upper bound has been (mostly) established in \cite{GML20, GMS22+, MWXY21+, MWXY23} and the lower bound is predicted by \cite{PSS22, GMS22+}. From the perspective of low-degree hardness for detection, it is tempting to argue that polynomial tests with moderate degrees are ineffective when \(\rho\) falls below this threshold. However, it is not hard to see that in this regime the previous framework cannot be applied directly, since some straightforward computations give that the left hand side of \eqref{eq-evidence-detection} is unbounded whenever $\rho d \to \infty$. That being said, this explosion does not necessarily indicate low computational complexity, since it is possible that the predominant contribution to the optimal signal-to-noise ratio originates from certain rare events. In order to address this, we note that the computational complexity for an instance sampled from a law $\mathbb P'$ should be statistically the same as that from $\mathbb P$, provided that $\mathrm{TV}(\mathbb P, \mathbb P') = o(1)$ (where $\mathrm{TV}(\mathbb P, \mathbb P')$ is the total variation distance between $\mathbb P$ and $\mathbb P'$). Therefore, we may choose such a $\mathbb P'$ under which those rare events with major contribution to the signal-to-noise ratio are excluded. This leads to our modified framework as follows. 
\begin{DEF}\label{def-framework-truncated-version}
    We say that there is evidence for the computational hardness of detection for polynomial tests with degree at most $d$, if \eqref{eq-evidence-detection} holds for $\Pb_{n,q,\rho}$ replaced by some $\Pb_{n,q,\rho}'$ with $\operatorname{TV}(\Pb_{n,q,\rho},\Pb_{n,q,\rho}')=o(1)$ as $n\to \infty$. 
    In addition, we refer to $\eqref{eq-evidence-detection}'$ the version of \eqref{eq-evidence-detection} after the aforementioned replacement. 
\end{DEF}
\begin{remark}\label{rmk-strong-separation}
The relationship between strong separation and Definition~\ref{def-framework-truncated-version} is as follows: according to \cite[Proposition~6.2]{BAH+22}, if the optimal signal-to-noise ratio of \(f\in \mathcal P_{n,d}\) between \(\mathbb{P}_{n,q,\rho}'\) and \(\mathbb{Q}_{n,q}\) remains bounded, then it implies that no polynomial in \(\mathcal P_{n,d}\) can strongly separate \(\mathbb{P}_{n,q,\rho}\) and \(\mathbb{Q}_{n,q}\), which provides evidence for the computational hardness of detection.
\end{remark}

The framework as in Definition~\ref{def-framework-truncated-version} enables us to demonstrate the low-degree hardness for detection in the correlated \ER graph model in the sparse regime, provided that $\rho$ is below the conjectured computational threshold.

\begin{thm}\label{thm-sparse-regime}
    For any fixed $\varepsilon>0$, assume $1 \le nq \le n^{o(1)}$ and $\rho^2<\alpha-\varepsilon$. 
    Then for some $\Pb_{n,q,\rho}'$ with $\operatorname{TV}(\Pb_{n,q,\rho},\Pb_{n,q,\rho}')=o(1)$, we have that $\eqref{eq-evidence-detection}'$ holds as long as the degree $d$ satisfies \begin{equation*}
        \log d=o\left(\frac{\log n}{{\log nq}}\wedge \sqrt{\log n}\right)\,.
    \end{equation*} As a result, this suggests the failure of low-degree polynomial tests. 
\end{thm}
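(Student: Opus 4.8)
The plan is to run the standard low-degree second-moment argument in the Fourier--Walsh basis of $\mathbb Q_{n,q}$, but with a truncation tuned to delete exactly the subgraph configurations responsible for the divergence in the untruncated case. Throughout abbreviate $\mathbb P=\mathbb P_{n,q,\rho}$ and $\mathbb Q=\mathbb Q_{n,q}$, and put $\bar A_e:=(A_e-q)/\sqrt{q(1-q)}$, $\bar A_S:=\prod_{e\in S}\bar A_e$, with the same notation for $B$, so that $\{\bar A_S\bar B_T\}_{S,T}$ is orthonormal in $L^2(\mathbb Q)$. For any $\mathbb P'$ with $\operatorname{TV}(\mathbb P,\mathbb P')=o(1)$ the left-hand side of $\eqref{eq-evidence-detection}'$ equals $\|(L')^{\le d}\|_{L^2(\mathbb Q)}$ with $L'=\mathrm d\mathbb P'/\mathrm d\mathbb Q$, so it suffices to prove $\sum_{|S|+|T|\le d}\big(\mathbb E_{\mathbb P'}[\bar A_S\bar B_T]\big)^2=O(1)$. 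I would begin by recording the untruncated calculation. Conditioning on $\pi_*$ and using that, given $\pi_*$, the pairs $(A_e,B_{\pi_*(e)})$ are independent with correlation $\rho$, one gets $\mathbb E_{\mathbb P}[\bar A_S\bar B_T]=\rho^{|S|}\,\Pr_{\pi_*}[\pi_*(S)=T]$, which vanishes unless $S$ and $T$ are isomorphic as edge sets; summing the squares over $T$ collapses (via $\sigma=(\pi')^{-1}\pi$) to $\rho^{2|S|}\Pr_\sigma[\sigma(S)=S]$, and summing over $S$ by isomorphism type yields $\sum_{k\le d/2}\rho^{2k}N_k$, with $N_k$ the number of isomorphism classes of $k$-edge graphs with no isolated vertex. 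Since $N_k=e^{\Theta(k\log k)}$, driven by \emph{dense} graphs (many more edges than vertices), this is unbounded --- exactly the obstruction the text refers to. The structural point is that the tree-like part is harmless: the number of $k$-edge graphs all of whose components have excess $e-v+1\le 1$ is only $\mathrm{poly}(k)\,\alpha^{-k}$ by Otter's theorem (together with the disjoint-union and bounded-excess corrections, which preserve the exponential rate $1/\alpha$), so it contributes $\sum_k(\rho^2/\alpha)^k\mathrm{poly}(k)=O(1)$ precisely when $\rho^2<\alpha$.

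Accordingly I would take $\mathbb P'$ to be $\mathbb P$ conditioned on the event $\mathcal E$ that neither $A$ nor $B$ contains a copy of any connected graph with excess $\ge 2$ and at most $M$ edges, for a threshold $M\asymp\log n/\log(nq)$ with a small implied constant. Writing $\beta:=1/\alpha$, the number of connected graphs with $m$ edges and excess $c$ is at most $\beta^m(Cm^2)^c$ (choose a spanning tree, by Otter, and then $c$ extra edges), while $\mathbb E_{G(n,q)}[\#H]\le n^{v(H)}q^{e(H)}=n^{1-\operatorname{exc}(H)}(nq)^{e(H)}$, so a first-moment union bound gives $\mathbb P(\mathcal E^c)=o(1)$ as soon as $M\log(nq)=o(\log n)$, whence $\operatorname{TV}(\mathbb P,\mathbb P')=o(1)$. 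I would then split $\sum_{S,T}(\mathbb E_{\mathbb P'}[\bar A_S\bar B_T])^2$ by the isomorphism type of $S$ ($S\cong T$ being forced for a nonzero term): (i) types containing an excluded substructure --- conditioning on $\mathcal E$ pins part of $\bar A_S$ (or $\bar B_T$) off its ``all edges present'' value, producing a factor $q^{\Omega(1)}$ in the coefficient that, in the sparse regime, beats the $e^{O(k\log k)}$ count of such types, so these contribute $o(1)$; (ii) the remaining ``$M$-tame'' types, for which $\mathbb E_{\mathbb P'}[\bar A_S\bar B_T]$ differs from $\rho^{|S|}\Pr_{\pi_*}[\pi_*(S)=T]$ only negligibly.

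The tame types with $k$ edges number at most $\mathrm{poly}(k)\,\alpha^{-k}\cdot k^{O(k/M)}$ --- the factor $k^{O(k/M)}$ covering ``long-ear'' configurations whose every excess-$\ge 2$ sub-piece has more than $M$ edges, and whose excess is therefore $O(k/M)$ --- so the tame contribution is at most $(1+o(1))\sum_{k\le d/2}(\rho^2/\alpha)^k k^{O(k/M)}\mathrm{poly}(k)$. Writing $\rho^2/\alpha=e^{-\delta}$ with $\delta=\delta(\varepsilon)>0$, the $k$-th summand is $e^{-\delta k+O((k/M)\log k)+O(\log k)}$, which is summable to $O(1)$ provided $\log k\le c\delta M$ for every $k\le d/2$, i.e.\ provided $\log d=o(M)=o\big(\log n/\log(nq)\big)$; this is where the first half of the hypothesis on $d$ enters.

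The second half, $\log d=o(\sqrt{\log n})$, is what pushes the truncation error terms below $o(1)$. Both the excluded-substructure bound in (i) and the approximation error in (ii) come down to controlling $\sum_{S,T}\mathbb E_{\mathbb P}[\mathbf 1_{\mathcal E^c}\bar A_S\bar B_T]^2$ over the relevant pairs; since $\|\bar A_S\bar B_T\|_{L^2(\mathbb P)}$ can be as large as $q^{-|S|}$, a crude Cauchy--Schwarz is hopeless, and one must instead expand $\mathbf 1_{\mathcal E^c}$ over the (at most $\sim n^{v(H_0)}$) copies of each excluded graph $H_0$ and, for every such copy $X$ and every pair $(S,T)$, track how $X$ overlaps $S$, $T$ and the $\pi_*$-matched edges, then sum the resulting cases. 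I expect this bookkeeping --- rather than the Otter-constant input, which is soft once the reduction to tame configurations is done --- to be the main obstacle: the combinatorial overhead it carries, weighed against the $n^{-\Omega(1)}$ probabilities of the excluded events, is what forces both degree constraints. Combining the two contributions gives $\sum_{|S|+|T|\le d}(\mathbb E_{\mathbb P'}[\bar A_S\bar B_T])^2=O(1)$, i.e.\ $\eqref{eq-evidence-detection}'$, and the theorem follows.
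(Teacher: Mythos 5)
Your high-level intuition is in the right place---truncate to exclude subgraphs that are unusually dense, and use Otter's constant to count the remaining (essentially tree-like) isomorphism types---but the proposal contains two genuine gaps where the paper's actual work lives. First, you truncate on the event that $A$ and $B$ themselves contain no excluded subgraph, whereas the paper conditions on the \emph{mother graph} $G$ containing no bad subgraph (the event $\Gc$ in Definition~\ref{def-addmisible}). This is not a cosmetic choice: after conditioning on $G$ (via the $\sigma$-field $\mathcal F_I$), the pairs $(A_e,B_{\pi_*(e)})$ are still conditionally independent Bernoullis, and the entire machinery of Lemma~\ref{lemma-conditional-expectation-edge-products}, Lemma~\ref{lemma-containment-maximum-graph} and Lemma~\ref{lemma-expand-graph}---built around ``improper realizations'' of $I_e$ and the subadditive functional $\Phi_\tau$---depends on exactly this product structure. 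Conditioning on $\mathcal E(A,B)$ destroys it, which is precisely why your attempt to expand $\mathbf 1_{\mathcal E^c}$ over copies of excluded graphs and track overlaps runs into what you candidly call ``the main obstacle.'' Second, and relatedly, the step you describe as heuristic---that conditioning pins $\bar A_S$ off its all-edges-present value and produces a saving factor $q^{\Omega(1)}$---is exactly the content of the paper's two hardest results: Proposition~\ref{prop-same-L^1-bounded-L^2} (the reduction to admissible polynomials, which needs the careful $\Lambda_S(H)$ bookkeeping and Cauchy--Schwarz, not a naive factor of $q$, because the normalization $q^{-|S|/2}$ in $\bar A_S$ means a single $q^{\Omega(1)}$ gain does not obviously beat it) and Proposition~\ref{prop-another-bound-conditional-expectation} (the quantitative bound on $\big|\mathbb E_{\Pb'_\pi}[\phi_{S_1,S_2}]\big|$ in terms of $|V(S_0)|,|V(S_1)|,|V(S_2)|$, which the paper explicitly flags as the most technical part). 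You do not supply a mechanism for either.

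A smaller but still significant gap: you attribute the constraint $\log d=o(\sqrt{\log n})$ to ``truncation error terms'' without a derivation. In the paper this constraint, together with $\log d=o(\log n/\log(nq))$, is what makes Lemma~\ref{lem-density-admissible-graph} work: the admissibility criterion $\Phi(H)\ge(\log n)^{-1}$ uses the $d$-dependent fudge factors $n^{4/d}$ and $d^{20}$, and one needs $(\log d)^2\ll\log n$ so that the resulting density slack $40\log d/\log n$ is beaten by $\varepsilon/(10\log d)$. Your excess-based truncation (excess $\ge 2$, $\le M$ edges) lacks these tunable factors, so it is not clear your setup would even reproduce the $\sqrt{\log n}$ constraint by a clean argument. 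In short: the Otter counting (Lemma~\ref{lemma-enumerate-unlabeled-graph-bounded-density}) and the TV estimate (Lemma~\ref{eq-Gc-is-typical}) are indeed the soft parts, and you have those; but Proposition~\ref{prop-same-L^1-bounded-L^2} and Proposition~\ref{prop-another-bound-conditional-expectation} are where the theorem is actually proved, and the proposal replaces them with an acknowledged obstacle.
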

We point out that in Theorem~\ref{thm-sparse-regime}, $\Pb_{n,q,\rho}'$ will be taken as the marginal probability of $\Pb_{*,n,q,\rho}'$, which denotes the conditional probability measure of $\Pb_{*,n,q,\rho}$ given that the mother graph $G$ does not contain any subgraph with unusually large density.
See Definition~\ref{def-addmisible} for precise statements.

In the subsequent sections of this paper, we will keep the values of \(n\), \(q\) and \(\rho\) fixed, and for the sake of simplicity we will omit subscripts involving these parameters. 
Throughout our discussions, we use $\operatorname{K}_n$ (in the normal font) to denote the complete graph on vertex set $[n]$ and the notation \(S\Subset\operatorname{K}_n\) denotes a graph $S$ within \([n]\) which has \emph{no isolated vertex}. The vertex set and edge set of \(S\) are represented as \(V(S)\) and \(E(S)\), respectively. For a permutation \(\pi\) on \([n]\), \(\pi(S)\) represents the graph with vertex set \(\{\pi(v): v \in V(S)\}\) and edge set \(\{(\pi(u), \pi(v)): (u, v) \in E(S)\}\). For \(S,T \Subset\operatorname{K}_n\), we say $S \subset T$ if $E(S) \subset E(T)$. Moreover, let \(S \cap T\Subset\operatorname{K}_n\) denote the graph induced by the edges in \(E(S) \cap E(T)\) (note that according to the definition the isolated vertices in $V(S)\cap V(T)$ should be removed in $S \cap T$). Define $S \cup T$, $S \setminus T$ and $S \triangle T$ in the similar manner.

Additionally, we define \(\mathcal{H}\) as the set of isomorphism classes of simple graphs with no isolated vertices. For each isomorphic class in \(\mathcal{H}\), we identify it with a specific graph \(\mathbf{H}\) in this class (we use the boldface font to indicate our emphasis on the isomorphism class). We denote by $\operatorname{Aut}(\mathbf H)$ the number of automorphisms of $\mathbf H$. In addition, for a set $A$, we denote by both $|A|$ and $\# A$ its cardinality.

\section{Proof of Theorem~\ref{thm-dense-regime}}\label{sec-proof-of-dense}
We start with the proof of Theorem~\ref{thm-dense-regime}, which is the relatively straightforward part of this paper. Although the calculation in this section is essentially the same as discussed in \cite[Section 2.3]{MWXY21+}, we still choose to present the full details here, since the proof serves as a warm-up, illustrating several crucial points and providing valuable tools for the (much more difficult) proof of Theorem~\ref{thm-sparse-regime}.

The following polynomials will play a fundamental role in our analysis. 
\begin{DEF}
    For two graphs $S_1,S_2\Subset \operatorname{K}_n$, define the polynomial $\phi_{S_1,S_2}$ associated with $S_1,S_2$ by 
    \begin{equation}\label{eq-def-f-K1K2}
        \phi_{S_1,S_2}\big(\{A_{i,j}\},\{B_{i,j}\}\big)=\big(q(1-q)\big)^{-\frac{|E(S_1)|+|E(S_2)|}{2}}\prod_{(i,j)\in E(S_1)}\overline{A}_{i,j}\prod_{(i,j)\in E(S_2)}\overline{B}_{i,j},
    \end{equation}
    where $\overline{A}_{i,j}=A_{i,j}-q,\overline{B}_{i,j}=B_{i,j}-q$ for all $(i,j)\in \operatorname{U}$. In particular, $\phi_{\emptyset,\emptyset}\equiv 1$.
\end{DEF}
For any \(d \geq 1\), we identify \(\mathcal P_{n,d}\) as a closed subspace of the Hilbert space \(L^2(\{0,1\}^{2|\!\operatorname{U}\!|},\mathbb R)\) with the inner product \(\langle f,g\rangle=\mathbb{E}_{\Qb}[fg]\). It is then straightforward to verify that for any two pairs \((S_1,S_2)\) and \((S_1',S_2')\), we have
\[
\langle \phi_{S_1,S_2},\phi_{S_1',S_2'}\rangle=\mathbb{E}_\Qb[\phi_{S_1,S_2}\phi_{S_1',S_2'}]=\mathbf{1}_{S_1=S_1',S_2=S_2'}.
\]
Therefore, the set of polynomials \(\mathcal O_d=\{\phi_{S_1,S_2}:S_1,S_2\Subset \operatorname{K}_n, |E(S_1)|+|E(S_2)|\leq d\}\) constitutes a standard orthogonal basis for the space \(\mathcal P_{n,d}\). Consequently, we can explicitly express the optimal signal-to-noise ratio of polynomial tests in \(\mathcal P_{n,d}\) in terms of this basis, as in the next lemma.
\begin{lemma}\label{lem-optimal-signal-to-noise-ratio}
    For any $n,d\ge 1$, it holds that
    \begin{equation}\label{eq-optimal-signal-to-noise-ratio}
        \sup_{f\in \mathcal P_{n,d}}\frac{\mathbb E_{\Pb}[f]}{\sqrt{\mathbb E_\Qb [f^2]}}=\left(\sum_{\phi_{S_1,S_2}\in \mathcal O_d}\big(\mathbb E_{\Pb} [\phi_{S_1,S_2}]\big)^2\right)^{1/2}\,.
    \end{equation}
\end{lemma}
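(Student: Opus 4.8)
The plan is to reduce the optimization over the infinite-dimensional-looking class $\mathcal P_{n,d}$ to a finite-dimensional Cauchy--Schwarz computation, using the orthonormal basis $\mathcal O_d$ already exhibited above. First I would expand an arbitrary $f\in\mathcal P_{n,d}$ in this basis: since $\mathcal O_d$ is orthonormal for the inner product $\langle \cdot,\cdot\rangle=\mathbb E_\Qb[\,\cdot\,\cdot\,]$ and spans $\mathcal P_{n,d}$, we may write
\[
f=\sum_{\phi_{S_1,S_2}\in\mathcal O_d}c_{S_1,S_2}\,\phi_{S_1,S_2},\qquad c_{S_1,S_2}=\langle f,\phi_{S_1,S_2}\rangle=\mathbb E_\Qb[f\,\phi_{S_1,S_2}],
\]
where the sum is finite (for fixed $n,d$ there are finitely many pairs $(S_1,S_2)$ with $|E(S_1)|+|E(S_2)|\le d$), so no convergence or attainment issues arise.

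Next I would evaluate the numerator and denominator of the signal-to-noise ratio in terms of the coefficients $(c_{S_1,S_2})$. By linearity of expectation, $\mathbb E_\Pb[f]=\sum_{\phi_{S_1,S_2}\in\mathcal O_d}c_{S_1,S_2}\,\mathbb E_\Pb[\phi_{S_1,S_2}]$. By Parseval's identity applied to the orthonormal basis $\mathcal O_d$, $\mathbb E_\Qb[f^2]=\|f\|^2=\sum_{\phi_{S_1,S_2}\in\mathcal O_d}c_{S_1,S_2}^2$. Hence
\[
\frac{\mathbb E_\Pb[f]}{\sqrt{\mathbb E_\Qb[f^2]}}=\frac{\sum_{\phi_{S_1,S_2}\in\mathcal O_d}c_{S_1,S_2}\,\mathbb E_\Pb[\phi_{S_1,S_2}]}{\sqrt{\sum_{\phi_{S_1,S_2}\in\mathcal O_d}c_{S_1,S_2}^2}}.
\]

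Finally I would apply the Cauchy--Schwarz inequality in $\ell^2$ over the index set $\{(S_1,S_2):\phi_{S_1,S_2}\in\mathcal O_d\}$: the right-hand side is at most $\big(\sum_{\phi_{S_1,S_2}\in\mathcal O_d}(\mathbb E_\Pb[\phi_{S_1,S_2}])^2\big)^{1/2}$, with equality when $c_{S_1,S_2}$ is taken proportional to $\mathbb E_\Pb[\phi_{S_1,S_2}]$; since that choice of coefficients yields an admissible $f\in\mathcal P_{n,d}$ (assuming the vector of $\mathbb E_\Pb[\phi_{S_1,S_2}]$ is not identically zero, in which case both sides are trivially $0$ by convention $\phi_{\emptyset,\emptyset}\equiv 1$ contributes $\mathbb E_\Pb[1]=1$, so in fact it is never zero), the supremum is exactly this value. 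This establishes \eqref{eq-optimal-signal-to-noise-ratio}. I do not anticipate any genuine obstacle here: the only points requiring a line of care are the verification that $\mathcal O_d$ is orthonormal (already done in the text) and the observation that the supremum is attained because the sum is finite; the rest is the standard variational characterization of the low-degree likelihood ratio norm.
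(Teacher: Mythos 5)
Your proposal is correct and matches the paper's own proof essentially verbatim: expand $f$ in the orthonormal basis $\mathcal O_d$, compute $\mathbb E_\Pb[f]$ by linearity and $\mathbb E_\Qb[f^2]$ by Parseval, then apply Cauchy--Schwarz with equality attained at $c_{S_1,S_2}\propto\mathbb E_\Pb[\phi_{S_1,S_2}]$. There is no gap.
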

\begin{proof}
    For any $f \in \mathcal{P}_{n,d}$, it can be uniquely expressed as $f=\sum_{\phi_{S_1,S_2}\in \mathcal O_d}C_{S_1,S_2} \phi_{S_1,S_2}$ where $C_{S_1,S_2}$'s are real constants. Applying Cauchy-Schwartz inequality one gets
    \begin{align*}
        \frac{ \mathbb{E}_{\mathbb{P}}[f] }{ \sqrt{\mathbb{E}_{\mathbb{Q}}[f^2]} } = \frac{ \sum_{\phi_{S_1,S_2}\in \mathcal O_d} C_{S_1,S_2} \mathbb{E}_{\mathbb{P}}[\phi_{S_1,S_2}] }{ \sqrt{\sum_{\phi_{S_1,S_2}\in \mathcal O_d} C_{S_1,S_2}^2}  } \leq \left(\sum_{\phi_{S_1,S_2}\in \mathcal O_d}\big(\mathbb E_{\Pb} [\phi_{S_1,S_2}]\big)^2\right)^{1/2} \,,
    \end{align*}
    with equality holds if and only if $C_{S_1,S_2}\propto \mathbb{E}_\Pb[\phi_{S_1,S_2}]$. 
\end{proof}

In order to bound the right hand side of \eqref{eq-optimal-signal-to-noise-ratio}, we need the next lemma.
\begin{lemma}\label{lem-expectation-without-conditioning}
    For any two graphs $S_1,S_2\Subset\operatorname{K}_n$, it holds that
    \[
    \mathbb E_\Pb[\phi_{S_1,S_2}]=\begin{cases}
        \rho^{|E(S_1)|}\cdot\frac{\operatorname{Aut}(S_1)}{n(n-1)\cdots(n-|V(S_1)|+1)},\quad&\text{if }S_1\cong S_2,\\
        0\,,&\text{otherwise}.
    \end{cases}
    \]
\end{lemma}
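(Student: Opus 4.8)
The plan is to compute $\mathbb{E}_{\mathbb{P}}[\phi_{S_1,S_2}]$ directly by first conditioning on the latent permutation $\pi_*$ and then exploiting independence of edges. Recall that under $\mathbb{P}_{*,n,q,\rho}$ we have $A_{i,j}=I_{i,j}J_{i,j}$ and $B_{i,j}=I_{i,j}K_{\pi_*(i),\pi_*(j)}$, so that marginally each of $A_{i,j},B_{i,j}$ is $\mathrm{Bernoulli}(q)$ with $q=ps$, and the pair $(A_{i,j},B_{\pi_*^{-1}(i),\pi_*^{-1}(j)})$ shares the common factor $I_{i,j}$. The key elementary fact, which I would isolate first, is the single-edge correlation: $\mathbb{E}[(A_{i,j}-q)(B_{k,l}-q)]$ equals $q(1-q)\rho$ when $(k,l)=(\pi_*(i),\pi_*(j))$ and equals $0$ otherwise, since in the latter case $A_{i,j}$ and $B_{k,l}$ are independent. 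This is a direct computation from $\mathbb{E}[I_{i,j}J_{i,j}I_{i,j}K_{i,j}] = \mathbb{E}[I_{i,j}]\mathbb{E}[J_{i,j}]\mathbb{E}[K_{i,j}] = ps^2$ together with the reparametrization $\rho = s(1-p)/(1-ps)$, which is arranged precisely so that $ps^2 - (ps)^2 = q(1-q)\rho$.

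Next I would condition on $\pi_*$ and write
\[
\mathbb{E}_{\mathbb{P}}[\phi_{S_1,S_2}] = \big(q(1-q)\big)^{-\frac{|E(S_1)|+|E(S_2)|}{2}}\,\mathbb{E}_{\pi_*}\Big[\mathbb{E}\Big[\prod_{e\in E(S_1)}\overline{A}_{e}\prod_{e\in E(S_2)}\overline{B}_{e}\,\Big|\,\pi_*\Big]\Big].
\]
Given $\pi_*$, all edge variables $\{\overline A_e\}_{e}$ indexed by $E(S_1)$ and $\{\overline B_e\}_{e}$ indexed by $E(S_2)$ are mutually independent \emph{except} for the coupling between $\overline A_e$ and $\overline B_{\pi_*(e)}$ described above, and each $\overline A_e, \overline B_e$ has mean zero. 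Hence the conditional expectation factorizes over connected ``matched pairs'': it is nonzero only if the edge set $\pi_*(S_1)$ coincides exactly with $S_2$ as a subgraph of $\mathrm{K}_n$ — every edge of $S_1$ must be matched by $\pi_*$ to an edge of $S_2$ and vice versa, otherwise some unmatched $\overline A_e$ or $\overline B_e$ contributes its zero mean. When $\pi_*(S_1)=S_2$ the conditional expectation equals $\prod_{e\in E(S_1)}\mathbb{E}[\overline A_e \overline B_{\pi_*(e)}] = (q(1-q)\rho)^{|E(S_1)|}$, which cancels the normalizing prefactor to leave $\rho^{|E(S_1)|}$. Therefore
\[
\mathbb{E}_{\mathbb{P}}[\phi_{S_1,S_2}] = \rho^{|E(S_1)|}\cdot \mathbb{P}_{\pi_*}\big(\pi_*(S_1)=S_2\big).
\]

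Finally I would evaluate this probability over the uniform random permutation $\pi_*$ on $[n]$. The event $\pi_*(S_1)=S_2$ is empty unless $S_1\cong S_2$, since $\pi_*$ acts as a graph isomorphism on the image; this gives the second case of the claimed formula. When $S_1 \cong S_2$, the number of bijections from $V(S_1)$ to $V(S_2)$ realizing the isomorphism is exactly $\operatorname{Aut}(S_1)$ (fixing one isomorphism and composing with automorphisms), and each such partial assignment of the $|V(S_1)|$ vertices of $S_1$ extends to a uniform permutation of $[n]$ with probability $\big(n(n-1)\cdots(n-|V(S_1)|+1)\big)^{-1}$, since $S_1$ has no isolated vertices and $|V(S_1)| = |V(S_2)|$. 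Multiplying gives $\mathbb{P}_{\pi_*}(\pi_*(S_1)=S_2) = \operatorname{Aut}(S_1)/\big(n(n-1)\cdots(n-|V(S_1)|+1)\big)$, completing the proof. I do not expect a genuine obstacle here; the one point requiring care is the factorization argument — articulating precisely why the conditional expectation vanishes unless $\pi_*$ induces an exact edge-set bijection between $S_1$ and $S_2$ (not merely an injection of $E(S_1)$ into $E(S_2)$), which follows because any edge of $S_2$ not in $\pi_*(S_1)$ leaves a lone mean-zero factor $\overline B_e$ independent of everything else.
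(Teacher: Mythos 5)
Your proof is correct and takes essentially the same route as the paper: compute the single-edge covariance $\mathbb{E}[(A_{i,j}-q)(B_{\pi_*(i),\pi_*(j)}-q)]=\rho q(1-q)$, condition on $\pi_*$, use mean-zero independence to conclude that the conditional expectation is $\rho^{|E(S_1)|}\mathbf{1}_{\pi_*(S_1)=S_2}$, and then average over the uniform permutation to obtain $\operatorname{Aut}(S_1)(n-|V(S_1)|)!/n!$. The only difference is cosmetic: you spell out the covariance algebra $ps^2-(ps)^2=q(1-q)\rho$ and the isolated caveat about why an edge-set bijection (not merely an injection) is required, both of which the paper leaves implicit.
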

\begin{proof}
Recalling the reparametrization given by $\rho,q$ as in Definition~\ref{def-correlated-random-graph}, for any $(i,j)\in \operatorname{U}$, we have $\mathbb E_{\Pb_*}[A_{i,j}]=\mathbb E_{\Pb_*}[B_{i,j}]=q$ and 
    \[
    \mathbb E_{\Pb_*} \big[(A_{i,j}-q)(B_{\pi_*(i),\pi_*(j)}-q)\big]=\rho q(1-q)\,.
    \]
We observe that conditioned on any realization $\pi$ of $\pi_*$, the pairs of variables $(\overline{A}_{i,j},\overline{B}_{\pi(i),\pi(j)})$ for $(i,j)\in \operatorname{U}$ are mutually independent with $\mathbb E_{\Pb_*[\cdot\mid\pi_*=\pi]}[\overline{A}_{i,j}]=\mathbb{E}_{\Pb_*[\cdot\mid \pi_*=\pi]}[\overline{B}_{\pi(i),\pi(j)}]=0$. As a result, whenever $\pi(K_1)\neq K_2$, we have $\mathbb{E}_{\Pb_*[\cdot \mid\pi_*=\pi]}[\phi_{S_1,S_2}]=0$. 

Clearly, when $S_1 \not \cong S_2$ we have $\pi(S_1) \neq S_2$ holds for any realization $\pi$, and thus $\mathbb{E}_{\mathbb{P}}[\phi_{S_1,S_2}]=0$ by the total probability formula. Otherwise if $S_1 \cong S_2$, then for any $\pi$ such that $\pi(S_1)=S_2$, we have 
$$\mathbb{E}_{\mathbb{P}_*[\cdot\mid\pi_{*}=\pi]}[\phi_{S_1,S_2}]=\big[q(1-q)\big]^{-|E(S_1)|}\prod_{(i,j)\in E(S_1)}\mathbb{E}_{\mathbb{P}_*[\cdot\mid\pi_{*}=\pi]}[\overline{A}_{i,j}\overline{B}_{\pi(i),\pi(j)}]=\rho^{|E(S_1)|}\,.$$ 
Therefore, denoting $\mu_n$ as the uniform measure on permutations of $[n]$, we conclude that
\begin{align*}
    \mathbb{E}_{\mathbb{P}}[\phi_{S_1,S_2}]=&\ \mathbb{E}_{\pi\sim \mu_n} \big[ \mathbb{E}_{\mathbb{P}_*[\cdot \mid\pi_{*}=\pi]}[\phi_{S_1,S_2}] \big]= \rho^{|E(S_1)|} \mu_n[\pi(S_1)=S_2] \\
    =&\ \rho^{|E(S_1)|} \cdot \frac{ \operatorname{Aut}(S_1)(n-|V(S_1)|)!}{n!}=\rho^{|E(S_1)|}\cdot \frac{\operatorname{Aut}(S_1)}{ n(n-1)\ldots(n-|V(S_1)|+1) } \,,
\end{align*}
completing the proof.
\end{proof}
Now we are ready to prove Theorem~\ref{thm-dense-regime}.

\begin{proof}[Proof of Theorem~\ref{thm-dense-regime}]
    It suffices to show that the right hand side of \eqref{eq-optimal-signal-to-noise-ratio} is uniformly bounded whenever $d\le O(\rho^{-1})$. By Lemma~\ref{lem-expectation-without-conditioning}, the problem reduces to controlling
    \begin{align}
    &\ \nonumber\sum_{\mathbf H\in \mathcal H:|E(\mathbf H)|\le d/2}\rho^{2|E(\mathbf H)|}\frac{\operatorname{Aut}(\mathbf H)^2}{[n(n-1)\cdots(n-|V(\mathbf H)|+1)]^2}\cdot\#\big\{S_1,S_2\Subset \operatorname{K}_n:S_1\cong S_2\cong \mathbf H\big\}\\
    =&\ \sum_{\mathbf H\in \mathcal H:|E(\mathbf H)|\le d/2}\rho^{2|E(\mathbf H)|}\,.\label{eq-subgraph-counting-without-conditioning}       \end{align}
    Here the equality follows from the following fact: the number of subgraphs of $\operatorname{K}_n$ that are isomorphic to $\mathbf H$ is equal to $\frac{n(n-1)\cdots(n-|V(\mathbf H)|+1)}{\operatorname{Aut}(\mathbf H)}$. Recalling that $\mathbf H$ has no isolated vertex, we obtain from straightforward computations that \eqref{eq-subgraph-counting-without-conditioning} is upper-bounded by
    \begin{align}
    & \sum_{k\le d/2}\rho^{2k}\sum_{l\le 2k}\#\big\{\mathbf H\in \mathcal H:|E(\mathbf H)|=k,|V(\mathbf H)|=l\big\} \nonumber \\
    \le& \sum_{k\le d/2}\rho^{2k}\sum_{l\le 2k}\binom{l(l-1)/2}{k}\le \sum_{k\le d/2} \rho^{2k}\binom{2k^2}{k+1} \nonumber \\
    \le& \sum_{k\le d/2} \frac{(2\rho k)^{2k}}{(k-1)!}\le \sum_{k\le d/2}\frac{(\rho d)^{2k}}{(k-1)!}\le (\rho d)^2 e^{(\rho d)^2}\,, {\label{eq-display-bounding-(2.5)}}
    \end{align}
    which is $O(1)$ provided that $d=O(\rho^{-1})$, completing the proof.   
\end{proof}

\section{Proof of Theorem~\ref{thm-sparse-regime}}\label{sec-proof-of-sparse}
This section is dedicated to proving Theorem~\ref{thm-sparse-regime}, and we will begin by outlining our proof strategy. First of all, we may assume that \(\rho \geq 1/d\), as otherwise the result can be derived from Theorem~\ref{thm-dense-regime}. As previously suggested, it is crucial to work with a truncated version of \(\Pb\) rather than \(\Pb\) itself. It turns out that an appropriate truncation is to control the density of subgraphs in the mother graph $G$.

In Section~\ref{subsec-admissible}, we introduce the concept of admissible graphs and define \(\Pb_*'\) 
as the conditional probability measure given that all subgraphs of \(G\) are admissible. We further define $\mathbb P'$ to be the marginal of $\mathbb{P}_{*}'$ in the same manner that $\mathbb P$ is the marginal of $\mathbb P_*$. 
Under this conditional measure, the problem of optimization over \(\mathcal P_{n,d}\) can be reduced to optimizing over a much smaller class of polynomials \(\mathcal P_{n,d}'\) (which we refer to as admissible polynomials), as explained in Section~\ref{subsec-reduction-to-P'}. Finally, by analyzing the reduced problem we are able to prove tight hardness results in the sparse regime, as detailed in Section~\ref{subsec-complete-proof}.

Throughout this section, we fix a small constant \(\varepsilon\in (0,0.1)\). Additionally, we fix a sequence \((\rho_n,q_n,d_n)\) with respect to \(n\) such that \(1 \leq nq_n \leq n^{o(1)}\), \(d_n\ge 100\), \(1/d_n^2 \leq \rho_n^2<{\alpha}-\varepsilon\), 
and \(\log d_n\big/\big(\frac{\log n}{\log nq_n}\wedge \sqrt{\log n}\big)\to 0\) as \(n \to \infty\). For the sake of brevity, we will only work with some fixed $n$ throughout this analysis, and we simply denote $q_n,\rho_n,d_n$ as $q,\rho,d$, respectively. While our main interest is to analyze the behavior for sufficiently large $n$, most of our arguments do hold for all $n$, and we will explicitly point out in lemma-statements and proofs when we need the assumption that $n$ is sufficiently large. Several technical results in this section will be proven in the appendix to ensure a smooth flow of presentation.

\subsection{Truncating on admissible graphs} \label{subsec-admissible}

\begin{DEF}\label{def-addmisible}
Given a graph $H=H(V,E)$, define 
    \begin{equation}\label{eq-def-Phi}
        \Phi(H)={\big(n^{1+4/d} {d}^{20}\big)^{|V(H)|} \big(q d^6\big)^{|E(H)|}\,,}
    \end{equation}
and the graph $H$ is said to be \emph{bad} if ${\Phi(H)<(\log n)^{-1}}$. Furthermore, 
    We say a graph is \emph{admissible} if it contains no bad subgraph,  
   and we say it is \emph{inadimissible} otherwise. 
    
    Denote $\Gc$ for the event that $G$ does not contain any bad subgraph with no more than $d^2$ vertices. In addition, let $\Pb_*'$ be the conditional version of $\Pb_*$ given $\Gc$, and let $\Pb'$ be the corresponding marginal distribution of $\Pb_*'$ on $(A,B)$.  
\end{DEF}
\begin{remark}
In the conceptual level, it would be sufficient to truncate on the event that neither \(A\) nor \(B\) contains a bad subgraph with at most \(d\) edges. However, for the sake of convenience in later arguments (specifically, see the proof of Lemma~\ref{lemma-prob-proper-realization} in Section~\ref{subsec-B2} for details), we opt to further truncate based on the condition that \(G\) does not contain  a bad subgraph with at most \(d^2\) vertices. It is clear that $\mathcal G$ inplies that $G$ contains no inadmissible subgraph with at most $d^2$ vertices. In addition, it is worthwhile noting that any subgraph of an admissible graph is still admissible.
\end{remark}
\begin{lemma}\label{eq-Gc-is-typical}
    For any permutation $\pi\in \operatorname{S}_n$, it holds that $\Pb_*[\Gc\mid \pi_*=\pi]=1-o(1)$. Therefore, $\Pb_*[\Gc]=1-o(1)$ and $\operatorname{TV}(\Pb,\Pb')\le \operatorname{TV}(\Pb_*,\Pb_*')=o(1)$.
\end{lemma}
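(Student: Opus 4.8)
The plan is to establish the first claim, $\Pb_*[\Gc \mid \pi_* = \pi] = 1 - o(1)$ for every fixed permutation $\pi$, and then deduce the remaining statements by averaging and by the standard coupling bound for total variation. Note that conditionally on $\pi_* = \pi$, the law of the mother graph $G$ is exactly that of an \ER graph with edge density $p = q/s \ge q$ (in fact $G$ does not depend on $\pi$ at all), so it suffices to show that with probability $1 - o(1)$, an \ER$(n,p)$ graph contains no bad subgraph with at most $d^2$ vertices. Since any graph with no isolated vertices on at most $d^2$ vertices has at most $d^4$ edges, and since adding isolated vertices only decreases $\Phi$, it is enough to control subgraphs $H$ with $V(H) \le d^2$; moreover we may restrict attention to $H$ with no isolated vertices (a graph is bad iff it has a bad subgraph with no isolated vertices, because removing an isolated vertex from $H$ divides $\Phi(H)$ by $n^{1+4/d}d^{20} > 1$, making it smaller, hence still bad — wait, one must be slightly careful here, so the clean statement is: if $G$ contains a bad subgraph on $\le d^2$ vertices then it contains one with no isolated vertices, obtained by deleting isolated vertices, which is still a subgraph of $G$ and still bad).

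The core estimate is a first-moment (union) bound. For a fixed isomorphism class $\mathbf{H}$ with $v$ vertices and $e$ edges, the expected number of copies of $\mathbf{H}$ in $G$ is at most $n^v p^e \le n^v q^e \cdot s^{-e}$; I would absorb the harmless factor and bound the expected number of copies by roughly $n^v q^e$ (being careful that $s^{-e}$ is at most, say, $d^{6e}$ after possibly shrinking constants, or alternatively just note $p \le 2q$ in the relevant regime — actually since $q = ps$ and $s \in (0,1)$ we only have $p \ge q$, so one should instead observe that $nq \le n^{o(1)}$ forces... hmm, this needs the fact that in the regime of interest we can take $p$ comparable to $q$, or else simply keep $p$ and note $n p^{?}$; the cleanest fix: since $\Phi$ has slack factors $n^{4/d} d^{20}$ per vertex and $d^6$ per edge, and badness means $\Phi(H) < (\log n)^{-1}$, the expected count $n^v p^e$ is bounded by $\Phi(\mathbf H) \cdot (\text{slack})^{-1} \cdot (p/(qd^6))^e \le \Phi(\mathbf H)$ once $p \le q d^6$, which holds since $p = q/s$ and we may assume $s \ge d^{-6}$, or handle the complementary case trivially). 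Thus the expected number of copies of a bad $\mathbf H$ is at most $\Phi(\mathbf H) < (\log n)^{-1}$. Summing over all bad $\mathbf H$ with $v \le d^2$ vertices: the number of isomorphism classes with $v$ vertices is at most $2^{v^2}$, and the slack factor $n^{4/d}$ per vertex — wait, we need the sum $\sum_{\mathbf H \text{ bad}} \E[\#\text{copies of } \mathbf H]$ to be $o(1)$, and bounding each term by $\Phi(\mathbf H)$ is not enough since there are many classes. The right move: bound $\E[\#\text{copies}] \le n^v p^e$ directly, split off one factor of $n^{4/d}$ per vertex and $d^6$ per edge from $\Phi$, so that $n^v p^e \le \Phi(\mathbf H) \cdot n^{-4v/d} \cdot d^{-20v} \cdot (\text{stuff})$ — then since badness gives $\Phi(\mathbf H) < (\log n)^{-1}$, we get $\E[\#\text{copies}] < (\log n)^{-1} n^{-4v/d} d^{-20v} (pd^6/q)^e$, and with $pd^6/q = d^6/s$ under control, summing over the $\le 2^{v^2} \le d^{20 v}$ (for $v \le d^2$) isomorphism classes and over $v = 2, \ldots, d^2$ kills the $d^{20v}$ and leaves a convergent geometric-type series times $(\log n)^{-1}$, hence $o(1)$.

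So the skeleton is: (i) reduce $\Gc$ to a statement purely about $G \sim$ \ER, independent of $\pi$; (ii) reduce to bad subgraphs with no isolated vertices and $\le d^2$ vertices; (iii) first-moment bound: $\E[\#\{\text{bad copies}\}] \le \sum_{v \le d^2} (\#\text{iso classes on } v \text{ vtcs}) \cdot \max_{\mathbf H} n^{v} p^{e}$, and show using the definition of $\Phi$ and the slack factors $n^{4/d} d^{20}$ per vertex (which beat the $2^{v^2}$ count of isomorphism classes once $v \le d^2$, since $n^{4/d \cdot v} \ge n^{4/d \cdot 1}$ grows and $d^{20 v}$ dominates $2^{v^2}$ for $v \le d^2$ — indeed $2^{v^2} \le 2^{d^2 v} \le d^{20 v}$ once $d \ge 100$) that this is $< (\log n)^{-1} \cdot O(1) = o(1)$; (iv) Markov gives $\Pb_*[\Gc^c \mid \pi_* = \pi] = o(1)$ uniformly in $\pi$, hence $\Pb_*[\Gc] = o(1)$ by averaging; (v) for total variation, $\Pb_*'$ is $\Pb_*$ conditioned on an event of probability $1 - o(1)$, so $\operatorname{TV}(\Pb_*, \Pb_*') = \Pb_*[\Gc^c] = o(1)$, and $\operatorname{TV}(\Pb, \Pb') \le \operatorname{TV}(\Pb_*, \Pb_*')$ since marginalization (applying the same deterministic map $(\pi_*, G, A, B) \mapsto (A,B)$) cannot increase total variation. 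The main obstacle is step (iii): getting the counting of isomorphism classes to be dominated by the slack factors built into $\Phi$ uniformly over $v \le d^2$, and correctly tracking the relationship between $p$, $q$, $s$ so that the per-edge factor $d^6$ really does absorb $p/q = 1/s$; this is where the specific powers ($n^{4/d}$, $d^{20}$, $d^6$) in Definition~\ref{def-addmisible} have been chosen to make the bookkeeping work, and I would expect to need $n$ large only at the very end to convert $O((\log n)^{-1})$ into $o(1)$.
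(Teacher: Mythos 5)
Your plan follows the same strategy as the paper (a first-moment union bound over bad subgraphs, combined with the truncation estimate $\Phi(\mathbf H) < (\log n)^{-1}$), and your preparatory steps (i), (ii), (iv), (v) are fine. The problem is in step (iii).

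The decisive error is the inequality ``$2^{v^2} \le 2^{d^2 v} \le d^{20v}$ once $d \ge 100$''. The first inequality is correct for $v \le d^2$, but the second is equivalent to $2^{d^2} \le d^{20}$, i.e.\ $d^2\log 2 \le 20\log d$, which is \emph{false} for every $d \ge 10$ (e.g.\ at $d=100$ it reads $6900 \le 92$). In fact the number of isomorphism classes on $v$ vertices scales like $2^{\Theta(v^2)}$, and for $v$ of order $d^2$ this is $\exp(\Theta(d^4))$, which cannot be absorbed by the per-vertex slack $(n^{4/d}d^{20})^v = \exp(O(d\log n + d^2\log d))$ once $d$ grows past $(\log n)^{1/3}$ or so. The hypothesis $\log d = o(\sqrt{\log n})$ allows $d$ far larger than this, so the bound $\sum_v 2^{v^2}\cdot (\log n)^{-1} n^{-4v/d}d^{-20v}$ diverges in the relevant regime, and the union bound as written does not go through.

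The fix — and what the paper actually does — is to not bound $\Phi(\mathbf H)$ uniformly by $(\log n)^{-1}$ across all bad $\mathbf H$ with a given vertex count, but to keep track of both the vertex count $k$ and the edge count. Equivalently, union bound over vertex subsets $W\subset[n]$ of size $k\le d^2$ (there are at most $n^k$ of them, no $2^{k^2}$ factor), and observe that ``$G_W$ contains a bad subgraph'' is precisely the single event $|E_W|\ge e(k)$, whose probability is controlled by the binomial tail $\Pr[\mathbf B(\binom{k}{2},p)\ge e(k)]\le (k^2p)^{e(k)}$. The reason this is so much better than your per-isomorphism-class accounting is that the sum $\sum_{j\ge e(k)}\binom{\binom{k}{2}}{j}p^j$ is Poisson-like: the binomial coefficient $\binom{\binom{k}{2}}{j}$ (which is where $2^{k^2}$ would come from) is paired against $p^j$, and since $\binom{k}{2}p=o(1)$ the sum is dominated by the $j=e(k)$ term. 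Your crude bound $\Phi(\mathbf H)<(\log n)^{-1}$ throws away exactly the geometric decay in $j$ that tames the combinatorial explosion of isomorphism classes. Once you track $j$ as well as $k$, your first-moment bound reduces to the paper's estimate and the $2^{v^2}$ never appears.

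Two minor points: your back-and-forth about isolated vertices resolves correctly (deleting an isolated vertex divides $\Phi$ by $n^{1+4/d}d^{20}>1$, so the result is still bad and has no more vertices, hence it suffices to consider subgraphs without isolated vertices — your first instinct was right). And your handling of $p$ vs.\ $q$ via $s\ge d^{-6}$ is fine; the paper uses the cleaner observation that $\rho\ge 1/d$ forces $s\ge\rho\ge 1/d$, hence $p=q/s\le qd$.
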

\begin{proof}
Note that when $\rho \geq 1/d$, $G$ is an \ER graph with edge density $p \leq qd$, and $G$ is independent with $\pi_*$. Hence it suffices to show that with vanishing probability such an \ER graph contains a bad subgraph of size at most $d^2$. We shall do this via a union bound.

For $W\subset [n]$ define $G_{W}=(W,E_{W})$ to be the induced subgraph of $G$ in $W$. 
For $1 \leq k \leq d^2$, define
\begin{equation} {\label{eq-def-E(k)}}
    e(k) = \min \big\{ k' \geq 0 : \big(n^{1+4/d} {d}^{20}\big)^{k} \big(q d^6\big)^{k'} < (\log n)^{-1} \big\}  \,.
\end{equation}
Since $\Gc^c$ is equivalent to that there exists $W \subset [n]$ with $|W|=k \leq d^2$ and $|E_W|\ge e(k)$, we conclude from a union bound that 
\begin{align}{ \label{eq-upper-bound-prob-G^c} }
    \Pb_*[\Gc^c] \le \sum_{k=1}^{d^2} \sum_{W\subset[n],|W|=k} {\mathbb{P}_*}\Big[ |E_W| \geq e(k) \Big] = \sum_{k=1}^{{d^2}}\binom{n}{k} \mathbb{P} \Big[ \mathbf{B}\Big(\binom{k}{2},p\Big) \geq e(k) \Big]\,,
\end{align}
where $\mathbf B\big(\binom{k}{2},q\big)$ is a binomial variable with parameters $\big(\binom{k}{2},q\big)$, and it has the same distribution as $|E_W|$ for any $W\subset [n]$ with $|W|=k$. Since $nq\le n^{o(1)}$ and $d\ge 100$, 
it is easy to verify that $k\le e(k)\le 2k$. For $k\le d^2=n^{o(1)}$, we have $\binom{k}{2}p=o(1)$ and thus by Poisson approximation we get (recalling that we have assumed $\rho \ge 1/d$ and thus $p \le qd$) 
\begin{equation*}
    \binom{n}{k} \mathbb{P} \Big[ \mathbf{B} \Big(\binom{k}{2},p\Big) \geq e(k) \Big] \leq n^k (k^2 p)^{ e(k) } \leq \big({d}^{20}n^{1+4/d}\big)^k \big({d}^6 q\big)^{e(k)} \cdot \big( d^{20} n^{4/d} \big)^{-k} \,,
\end{equation*}
which is upper-bounded by ${d}^{-20k} n^{-4k/d} (\log n)^{-1}$ from the definition of $e(k)$ in \eqref{eq-def-E(k)}. Plugging this estimation into \eqref{eq-upper-bound-prob-G^c}, we get that
\[
    \Pb_*[\Gc^c]\le (\log n)^{-1} \sum_{k=1}^{{d^2}} n^{-4k/d} d^{-20k}=o(1) \,,
\]
which shows that $\Pb_*[\Gc\mid \pi_*=\pi]=\Pb_*[\Gc]=1-o(1)$ for any $\pi\in\operatorname{S}_n$. The statement on the total variational distance in the lemma then follows from the data processing inequality.
\end{proof}

We will show in the remaining of this section that $\eqref{eq-evidence-detection}'$ is indeed true for $\Pb'$ and $\Pb_*'$ defined as above.

\subsection{Reduction to admissible polynomials}\label{subsec-reduction-to-P'}
Recall that \(\mathcal P_{n,d}\) represents the set of real polynomials on \(\{0,1\}^{2|\!\operatorname{U}\!|}\) with degree no more than $d$, and we have shown that \(\mathcal O_d=\{\phi_{S_1,S_2}:S_1,S_2\Subset \operatorname{K}_n, |E(S_1)|+|E(S_2)| \leq d\}\) forms a basis for \(\mathcal P_{n,d}\). Now we say a polynomial \(\phi_{S_1,S_2}\in \mathcal O_d\) is \emph{admissible} if both \(S_1\) and \(S_2\) are admissible graphs. Furthermore, we define $\mathcal O_d'\subset \mathcal O_d$ as the set of admissible polynomials in $\mathcal O_d$, and define \(\mathcal P_{n,d}' \subset \mathcal P_{n,d}\) as the linear subspace spanned by polynomials in $\mathcal O_d'$. 

Intuitively, due to the absence of inadmissible graphs under the law \(\Pb'\), only admissible polynomials are relevant in polynomial-based algorithms. Therefore, it is plausible to establish $\eqref{eq-evidence-detection}'$ by restricting to polynomials in $\mathcal P_{n,d}'$. The following proposition as well as the discussion afterwards formalize this intuition.

\begin{proposition}\label{prop-same-L^1-bounded-L^2}
    For any $f\in \mathcal P_{n,d}$, there exists some $f'\in \mathcal P_{n,d}'$ such that $\mathbb E_{\Qb}[(f')^2]\le 8\mathbb{E}_{\Qb}[f^2]$ and $f'=f$ a.s. under both $\Pb_*'$ and $\Pb'$. 
\end{proposition}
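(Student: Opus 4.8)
The plan is to decompose an arbitrary $f \in \mathcal{P}_{n,d}$ in the orthonormal basis $\mathcal{O}_d$, split the sum into the admissible part (which lands in $\mathcal{P}_{n,d}'$) and the inadmissible part, and then argue that on the event $\Gc$ (more precisely, once we pass to $\Pb_*'$ and $\Pb'$) the inadmissible basis polynomials $\phi_{S_1,S_2}$ all vanish almost surely, so we may simply delete them. The subtlety, and the reason the $L^2$-norm is only controlled up to a constant factor of $8$ rather than being nonincreasing, is that deleting these terms pointwise is not the same as projecting onto $\mathcal{P}_{n,d}'$; one must set up the right conditional-expectation operator.

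Concretely, first I would write $f = \sum_{\phi_{S_1,S_2} \in \mathcal{O}_d} c_{S_1,S_2}\,\phi_{S_1,S_2}$ and observe that if $S_1$ or $S_2$ is inadmissible, then it contains a bad subgraph on at most $d^2$ vertices (indeed on at most $d$ edges, hence at most $d$ vertices after removing isolated ones), so on the event $\Gc$ the mother graph $G$ — and therefore $A \subseteq G$ and (after relabeling) $B$ — cannot contain $S_1$ (resp. $S_2$) as a subgraph. Hence $\prod_{(i,j)\in E(S_1)} A_{i,j} = 0$ identically on $\Gc$ whenever $S_1$ is inadmissible, which forces $\phi_{S_1,S_2} = (-q)$-type lower-order terms... — here one has to be slightly careful, because $\overline{A}_{i,j} = A_{i,j}-q$ does not vanish when $A_{i,j}=0$. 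So the cleaner route is: define $f'$ to be the conditional expectation of $f$ with respect to the $\sigma$-algebra generated by $(A,B)$ restricted to admissible configurations, or more operationally, let $f' := \mathbb{E}_{\Qb}[\,f \mid \mathcal{E}\,]$-style object; but the genuinely workable definition is to take $f'$ to be the orthogonal projection (in $L^2(\Qb)$) of $\mathbf{1}_{\Gc}\cdot f$-analog onto $\mathcal{P}_{n,d}'$, or alternatively to directly set $f' = \sum_{\phi \in \mathcal{O}_d'} c_{S_1,S_2}\phi_{S_1,S_2}$ and then show that the discarded part $g := f - f'$ satisfies $g = 0$ a.s. under $\Pb_*'$ and $\Pb'$, together with the variance bound.

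For the a.s.-vanishing claim, I would expand each discarded $\phi_{S_1,S_2}$ (with, say, $S_1$ inadmissible) in the monomial basis: $\phi_{S_1,S_2}$ is a linear combination of monomials $\prod_{(i,j) \in E_1'} A_{i,j} \prod E_2' $ with $E_1' \subseteq E(S_1)$; the top monomial $E_1' = E(S_1)$ vanishes on $\Gc$, but the lower ones need not. The fix is standard in this literature: reorganize the whole discarded sum $g$ by grouping terms according to the actual monomial they produce, and show that after this reorganization $g$ is supported (as a function) on configurations where $G$ — equivalently one of $A,B$ — contains a bad subgraph, hence $g=0$ on $\Gc$ and a fortiori $\Pb_*'$- and $\Pb'$-a.s. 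Equivalently and perhaps most cleanly: the indicator $\mathbf{1}_{\Gc}$ is itself (when expressed via inclusion–exclusion over bad subgraphs) a polynomial, and $\mathbf{1}_{\Gc} \cdot \phi_{S_1,S_2} $ for admissible $(S_1,S_2)$ stays in (a mild enlargement of) $\mathcal{P}_{n,d}'$ while $\mathbf{1}_{\Gc}\cdot \phi_{S_1,S_2} = 0$ for inadmissible ones; but since we only need equality a.s. under $\Pb'$, not as polynomials, we can avoid degree blow-up.

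**Variance bound.** Once $f' \in \mathcal{P}_{n,d}'$ with $f' = f$ a.s.\ under $\Pb',\Pb_*'$ is in hand, the remaining point is $\mathbb{E}_{\Qb}[(f')^2] \le 8\,\mathbb{E}_{\Qb}[f^2]$. Since $\mathcal{O}_d'$ is orthonormal in $L^2(\Qb)$, $\mathbb{E}_{\Qb}[(f')^2] = \sum_{\phi \in \mathcal{O}_d'} c_{S_1,S_2}^2$, which if $f'$ is literally the truncation $\sum_{\mathcal{O}_d'} c_{S_1,S_2}\phi_{S_1,S_2}$ is simply $\le \sum_{\mathcal{O}_d} c_{S_1,S_2}^2 = \mathbb{E}_{\Qb}[f^2]$ — giving constant $1$, even better than $8$. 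So the factor $8$ is there to absorb the discrepancy coming from whatever more elaborate definition of $f'$ is actually needed to make the a.s.-equality work (e.g.\ if $f'$ must be obtained by conditioning or by multiplying by an inclusion–exclusion indicator and re-expanding, producing a few extra cross terms). I expect the main obstacle to be precisely this: threading the needle so that $f'$ (i) lies in $\mathcal{P}_{n,d}'$ — in particular has degree $\le d$, not just ``bounded degree'' — and (ii) agrees with $f$ a.s.\ under the truncated measures, and (iii) has controlled $L^2(\Qb)$ norm, all simultaneously. The cleanest resolution is likely to define $f'$ as the $L^2(\Qb)$-orthogonal projection of $f$ onto $\mathcal{P}_{n,d}'$ (so (i) and (iii) with constant $1$ are automatic), and then spend the real work on proving (ii): that $f - \Pi_{\mathcal{P}_{n,d}'} f$ vanishes $\Pb'$-a.s., which amounts to showing $\Pb'$-a.s.\ every $\phi_{S_1,S_2}$ with $(S_1,S_2)$ inadmissible can be rewritten, modulo admissible polynomials, as something vanishing on $\Gc$ — and here the factor $8$ buys slack if one instead uses the explicit truncation plus a correction term.
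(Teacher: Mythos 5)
Your plan has a fatal gap at its core step: you propose (in both of your two ``clean'' formulations) to take $f'$ to be the truncation $\sum_{\phi_{S_1,S_2}\in\mathcal O_d'} C_{S_1,S_2}\phi_{S_1,S_2}$, i.e.\ the $L^2(\Qb)$-orthogonal projection of $f$ onto $\mathcal P_{n,d}'$, and then hope to show the discarded part vanishes $\Pb'$-a.s. This is false even for a single term. Take $f=\phi_{S_1,\emptyset}$ with $S_1$ inadmissible. Its projection onto $\mathcal P_{n,d}'$ is $0$ because $\phi_{S_1,\emptyset}$ is orthogonal to every $\phi_{T_1,T_2}$ with $T_1,T_2$ admissible. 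But on $\Gc$ we have $\phi_{S_1,\emptyset}(A,B)=\prod_{(i,j)\in E(S_1)}\frac{A_{i,j}-q}{\sqrt{q(1-q)}}$, which is \emph{not} zero: conditioning on $\Gc$ merely forces at least one $A_{i,j}$ with $(i,j)\in E(S_1)$ to equal $0$, giving a product of factors $-q$ and $1-q$, hence nonzero. So $f'=0\ne f$ on an event of positive $\Pb'$-probability. Your proposed reorganization-by-monomials cannot rescue this, because the problematic surviving contributions are exactly the low-order monomials $\prod_{(i,j)\in E_1'}A_{i,j}$ with $E_1'\subsetneq E(S_1)$ admissible — these do \emph{not} vanish on $\Gc$ and are genuinely present in $\phi_{S_1,S_2}$. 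You correctly flagged this concern (``$\overline A_{i,j}=A_{i,j}-q$ does not vanish when $A_{i,j}=0$'') but then reverted to the projection idea anyway; the factor $8$ cannot ``buy slack'' to repair an identity that is simply wrong.

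What is actually needed, and what the paper does, is a \emph{non-projection} replacement $\phi_{S_1,S_2}\mapsto\phi_{S_1,S_2}'$ constructed at the monomial level: expand $\psi_S(X)=\prod_{(i,j)\in E(S)}\frac{X_{i,j}-q}{\sqrt{q(1-q)}}$ in the monomial basis and delete precisely those monomials $\prod_{(i,j)\in E(K)}X_{i,j}$ with $K$ inadmissible — these and only these vanish identically on $\Gc$, so $\phi_{S_1,S_2}'=\phi_{S_1,S_2}$ a.s.\ under $\Pb'$ and $\Pb_*'$ by construction. The two real difficulties you did not anticipate then appear. First, this deletion must be performed only on a carefully chosen subgraph $D(S)\subset S$ (the $\Phi$-minimizer), not on all of $S$: one writes $\psi_S'=\psi_{S\setminus D(S)}\cdot\hat\psi_{D(S)}$, and the choice of $D(S)$ is what guarantees (Lemma~\ref{lem-psi(S)-property}) that $\psi_S'$ re-expands in terms of $\psi_H$ with $H\in\mathcal A(S)$ all admissible and with $\Phi(H)\ge\Phi(S)$, so that $\phi_{S_1,S_2}'\in\mathcal P_{n,d}'$. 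Second, since $\phi_{S_1,S_2}'$ is \emph{not} orthogonal to the other basis elements, the $L^2(\Qb)$ norm of $f'=\sum C_{S_1,S_2}\phi_{S_1,S_2}'$ is not automatically $\le\mathbb E_\Qb[f^2]$: one must estimate the re-expansion coefficients ($|\Lambda_S(H)|\le(4\sqrt q)^{|E(S)|-|E(H)|}$, eq.~\eqref{eq-est-of-Lambda}) and run a Cauchy--Schwarz argument using $\Phi(H)\ge\Phi(S)$ to bound the multiplicity of $S$ over $H$; this is exactly where the constant $8$ comes from. None of this machinery appears in your proposal, so as written the argument does not prove the proposition.
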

Provided with Proposition \ref{prop-same-L^1-bounded-L^2},
we immediately get that
    \begin{equation}\label{eq-reduction-1}
        \sup_{f\in \mathcal P_{n,d}}\frac{\mathbb E_{\Pb'}[f]}{\sqrt{\mathbb E_{\Qb}[f^2]}}\le 2\sqrt 2\sup_{f\in \mathcal P_{n,d}'}\frac{\mathbb E_{\Pb'}[f]}{\sqrt{\mathbb E_{\Qb}[f^2]}}\,.
    \end{equation}
    Thus, we successfully reduce the optimization problem over $\mathcal P_{n,d}$ to that over $\mathcal P_{n,d}'$ (up to a multiplicative constant factor, which is not material).

    Now we turn to the proof of Proposition~\ref{prop-same-L^1-bounded-L^2}. 
    For variables $X\in \{A,B\}$ (meaning that $X_{i,j}=A_{i,j}$ or $X_{i,j}=B_{i,j}$ for all $(i,j)\in \operatorname{U}$), denote for $S\Subset \operatorname{K}_n$ that 
    \begin{equation}\label{eq-def-f_S}
    \psi_{S}(\{X_{i,j}\}_{(i,j)\in \operatorname{U}})
    =\prod_{(i,j)\in E(S)} \frac{(X_{i,j}-q)}{\sqrt{q(1-q)}}\,.
    \end{equation}
   Recalling the definition of $\phi_{S_1, S_2}$, we can write it as follows:
    $$\phi_{S_1,S_2}(A,B)=\prod_{(i,j)\in E(S_1)}\frac{(A_{i,j}-q)}{\sqrt{q(1-q)}}\prod_{(i,j)\in E(S_2)}\frac{(B_{i,j}-q)}{\sqrt{q(1-q)}}=\psi_{S_1}(A)\psi_{S_2}(B)\,.$$
   In light of this, we next analyze the polynomial $\psi_S(X)$ via the following expansion:
    \begin{equation*}
\psi_S(X)=\sum_{K\subset S}\left(-\frac{\sqrt q}{\sqrt{1-q}}\right)^{|E(S)|-|E(K)|}\prod_{(i,j)\in E(K)}\frac{X_{i,j}}{\sqrt{q(1-q)}}\,,
    \end{equation*}
    where the summation is taken over all subgraphs of $S$ without isolated vertices (there are $2^{|E(S)|}$ many of them). We define the ``inadmissible-part-removed'' version of $\psi_S(X)$ by 
    \begin{equation}\label{eq-def-f_S'}
    \Hat{\psi}_{S}(X)=\sum_{\substack {K\subset S\\K\text{ is admissible}}}\left(-\frac{\sqrt q}{\sqrt{1-q}}\right)^{|E(S)|-|E(K)|}\prod_{(i,j)\in E(K)}\frac{X_{i,j}}{\sqrt{q(1-q)}}\,,
    \end{equation}
   and obviously we have that $\psi_S(A)-\Hat{\psi}_S(A)=\psi_S(B)-\Hat{\psi}_S(B)=0$ a.s. under both $\Pb_*'$ and $\Pb'$. The reduction here is quite natural, and we next further introduce some quantities in order to facilitate the proof of Proposition~\ref{prop-same-L^1-bounded-L^2}. 
\begin{DEF}
    
For $S\Subset \operatorname{K}_n$, define 
\begin{equation}{\label{eq-def-psi-S}} 
    D(S)=
    \begin{cases}
        \emptyset, & \text{if }S \mbox{ is admissible}\,, \\
        \arg \min_{H \subset S} \{ \Phi(H) \}, &\text{if } S \mbox{ is inadmissible}\,,
    \end{cases}
\end{equation}
(if there are multiple choices of $D(S)$ we arbitrarily take one of them)
and define 
\begin{equation}\label{eq-def-A(S)}
    \mathcal A(S)=\{H\subset S:H\setminus D(S)=S\setminus D(S), H\cap D(S)\text{ is admissible}\}\,.
\end{equation}
We also define the polynomial (recall \eqref{eq-def-f_S} and \eqref{eq-def-f_S'})
\begin{equation}\label{eq-def-hat-f_S}
    \psi_S'(\{X_{i,j}\}_{(i,j)\in \operatorname{U}})=\psi_{S\setminus D(S)}(\{X_{i,j}\}_{(i,j)\in \operatorname{U}})\cdot \Hat{\psi}_{D(S)}(\{X_{i,j}\}_{(i,j)\in \operatorname{U}})\,.
\end{equation}
Moreover, we define 
\begin{equation}\label{eq-def-hat-f-S1S2}
    \phi_{S_1,S_2}'(A,B)=\psi_{S_1}'(A)\psi_{S_2}'(B)\,, \forall S_1,S_2\subset K_n\,.
\end{equation}
Then it holds that $\phi_{S_1,S_2}'(A,B)=\phi_{S_1,S_2}(A,B)$ a.s. under both $\Pb_*'$ and $\Pb'$.
\end{DEF}
\begin{lemma}\label{lem-psi(S)-property}
    For any inadmissible graph $S\Subset \operatorname{K}_n$ and any $H\in \mathcal A(S)$, it holds that $H$ itself is admissible and $\Phi(H)\ge \Phi(S)$. Furthermore, any $\psi_S'$ is a linear combination of $\{ \psi_H:H\in \mathcal{A}(S)\}$. As a result, $\phi_{S_1,S_2}'\in \mathcal P_{n,d}'$ for any $S_1,S_2\Subset \operatorname{K}_n\text{ with }|E(S_1)|+|E(S_2)|\le d$.
\end{lemma}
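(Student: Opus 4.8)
The plan is to verify the three assertions in order, since each feeds into the next. First I would establish that every $H\in\mathcal A(S)$ is admissible. By definition of $\mathcal A(S)$, such an $H$ decomposes as $H=(S\setminus D(S))\cup(H\cap D(S))$, where $H\cap D(S)$ is admissible by fiat. Any subgraph $F\subset H$ splits as $F=(F\cap(S\setminus D(S)))\cup(F\cap D(S))$; since $F\cap D(S)\subset H\cap D(S)$ is admissible, it is not bad, and I must argue $F$ itself is not bad. Here I would use the minimality property of $D(S)$: because $D(S)=\arg\min_{H'\subset S}\Phi(H')$, every subgraph of $S$ has $\Phi$-value at least $\Phi(D(S))$, and in particular $\Phi$ restricted to subgraphs of $S$ behaves submultiplicatively in a way that lets me compare $\Phi(F)$ against $\Phi(F\cap D(S))\ge\Phi(D(S))$. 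Writing $\Phi(H)=c_V^{|V(H)|}c_E^{|E(H)|}$ with $c_V=n^{1+4/d}d^{20}$ and $c_E=qd^6$, one has the exact identity $\Phi(F)\cdot\Phi(F\cap D(S))=\Phi(F\cup D(S)\text{-part})\cdot\Phi(\cdots)$ type relations; more usefully, minimality of $D(S)$ over subgraphs of $S$ forces $\Phi(F\setminus D(S)\text{ glued appropriately})\ge\Phi(D(S))$, from which $\Phi(F)\ge(\log n)^{-1}$ follows because the ``extra'' edges and vertices of $F$ beyond $F\cap D(S)$ each contribute a factor, and removing them cannot drop below the minimum $\Phi(D(S))$. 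I would spell this out via: $\Phi(F)=\Phi(F\cap D(S))\cdot c_V^{|V(F)|-|V(F\cap D(S))|}c_E^{|E(F)|-|E(F\cap D(S))|}$, and then observe $F\cap D(S)$ together with the minimality of $D(S)$ among subgraphs of $S$ containing it... — the cleanest route is to note that if $\Phi(F)<(\log n)^{-1}$ then, comparing $F$ with $D(S)$ through their intersection and the fact that $\Phi(D(S))\le\Phi(F\cap D(S))$, one contradicts $\Phi(D(S))=\min_{H'\subset S}\Phi(H')$. Along the way this also yields $\Phi(H)\ge\Phi(S)$ for $H\in\mathcal A(S)$: indeed $\Phi(S)=\Phi(D(S))\cdot c_V^{|V(S)|-|V(D(S))|}c_E^{|E(S)|-|E(D(S))|}$ and $\Phi(H)=\Phi(H\cap D(S))\cdot c_V^{\cdots}c_E^{\cdots}$ with $S\setminus D(S)=H\setminus D(S)$ contributing the same extra factors, so the inequality reduces to $\Phi(H\cap D(S))\ge\Phi(D(S))$, which is the minimality of $D(S)$.

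Second, I would show $\psi_S'$ is a linear combination of $\{\psi_H:H\in\mathcal A(S)\}$. From \eqref{eq-def-hat-f_S}, $\psi_S'=\psi_{S\setminus D(S)}\cdot\hat\psi_{D(S)}$, and expanding $\hat\psi_{D(S)}$ via \eqref{eq-def-f_S'} as a sum over admissible $K\subset D(S)$ of monomials $\prod_{(i,j)\in E(K)}X_{i,j}/\sqrt{q(1-q)}$ (up to scalars), I would re-expand each such monomial back into the $\psi$-basis — that is, $\prod_{(i,j)\in E(K)}X_{i,j}/\sqrt{q(1-q)}=\sum_{K'\subset K}(\sqrt q/\sqrt{1-q})^{|E(K)|-|E(K')|}\psi_{K'}$ up to the obvious normalization — so that $\hat\psi_{D(S)}$ is a linear combination of $\psi_{K'}$ with $K'$ ranging over subgraphs of admissible subgraphs of $D(S)$, i.e. over admissible subgraphs of $D(S)$ (using that subgraphs of admissible graphs are admissible, noted in the remark after Definition~\ref{def-addmisible}). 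Multiplying by $\psi_{S\setminus D(S)}$ and using $\psi_{S\setminus D(S)}\psi_{K'}=\psi_{(S\setminus D(S))\cup K'}$ — valid because $S\setminus D(S)$ and $K'\subset D(S)$ share no edges — I get $\psi_S'$ as a linear combination of $\psi_{(S\setminus D(S))\cup K'}$ with $K'$ admissible, $K'\subset D(S)$. Each such graph $H=(S\setminus D(S))\cup K'$ satisfies $H\setminus D(S)=S\setminus D(S)$ and $H\cap D(S)=K'$ admissible, hence $H\in\mathcal A(S)$, as required.

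Third, the consequence $\phi'_{S_1,S_2}\in\mathcal P_{n,d}'$ is nearly immediate: $\phi'_{S_1,S_2}=\psi'_{S_1}(A)\psi'_{S_2}(B)$ by \eqref{eq-def-hat-f-S1S2}; by the previous paragraph $\psi'_{S_i}$ is a linear combination of $\psi_{H_i}$ with $H_i\in\mathcal A(S_i)$ (when $S_i$ is inadmissible; when $S_i$ is admissible, $D(S_i)=\emptyset$ and $\psi'_{S_i}=\psi_{S_i}$ directly), so $\phi'_{S_1,S_2}$ is a linear combination of $\phi_{H_1,H_2}=\psi_{H_1}(A)\psi_{H_2}(B)$ with each $H_i$ admissible; finally I must check the degree bound $|E(H_1)|+|E(H_2)|\le d$. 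Since $H_i\subset S_i$ we have $|E(H_i)|\le|E(S_i)|$, so $|E(H_1)|+|E(H_2)|\le|E(S_1)|+|E(S_2)|\le d$, placing $\phi_{H_1,H_2}\in\mathcal O_d'$ and hence $\phi'_{S_1,S_2}\in\mathcal P_{n,d}'$.

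The main obstacle is the first step — correctly exploiting the minimality of $D(S)$ to guarantee that gluing the admissible piece $H\cap D(S)$ onto the common part $S\setminus D(S)$ cannot create a bad subgraph, and simultaneously extracting the monotonicity $\Phi(H)\ge\Phi(S)$. Everything hinges on the multiplicative form of $\Phi$ in $|V|$ and $|E|$ and the fact that $D(S)$ minimizes $\Phi$ over \emph{all} subgraphs of $S$ (not just those of a given size); the bookkeeping of which vertices and edges are shared between $F$, $D(S)$, and $S\setminus D(S)$ is where care is needed, and I would organize it by always writing $\Phi$ of a graph as $\Phi$ of its intersection with $D(S)$ times the explicit extra factors, reducing every inequality to a single application of $\Phi(\cdot\cap D(S)\text{ or subgraph thereof})\ge\Phi(D(S))$.
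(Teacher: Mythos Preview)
Your arguments for the second and third assertions are essentially correct and match the paper's approach (the paper uses orthogonality to identify the span of $\hat\psi_{D(S)}$, while you expand and re-collect; both work).

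The gap is in the first assertion. Your plan is to write $\Phi(F)=\Phi(F\cap D(S))\cdot c_V^{|V(F)|-|V(F\cap D(S))|}c_E^{|E(F\setminus D(S))|}$ and reduce everything to $\Phi(F\cap D(S))\ge\Phi(D(S))$ (minimality applied to the \emph{intersection}). But that inequality alone does not control the extra factor $c_V^{\cdots}c_E^{\cdots}$: since $c_E<1$, the factor could be small, and there is no way to conclude $\Phi(F)\ge(\log n)^{-1}$ from it. The missing step is to apply minimality to the \emph{union}: because $D(S)\cup F\subset S$, one has $\Phi(D(S))\le\Phi(D(S)\cup F)$, and combining this with the submultiplicativity inequality $\Phi(D(S)\cup F)\,\Phi(D(S)\cap F)\le\Phi(D(S))\,\Phi(F)$ (Lemma~\ref{lemma-facts-graphs}(ii)) immediately gives $\Phi(F)\ge\Phi(F\cap D(S))\ge(\log n)^{-1}$. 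Your text gestures at ``$\Phi(F)\cdot\Phi(F\cap D(S))=\Phi(F\cup D(S)\text{-part})\cdot\Phi(\cdots)$ type relations'' but never actually uses the union, and your stated reduction does not close the argument.

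A related imprecision appears in your proof of $\Phi(H)\ge\Phi(S)$: you claim the ``extra factors'' in $\Phi(S)/\Phi(D(S))$ and $\Phi(H)/\Phi(H\cap D(S))$ are the same because $S\setminus D(S)=H\setminus D(S)$. The edge exponents do agree, but the vertex exponents need not: one has $|V(H)|-|V(H\cap D(S))|\ge|V(S)|-|V(D(S))|$ in general, with strict inequality when $H\cap D(S)$ drops vertices of $D(S)$ that also lie in $V(S\setminus D(S))$. Since $c_V>1$ the inequality still goes the right way, but the argument as written is incorrect. The paper again handles this cleanly via submultiplicativity: $\Phi(D(S))\ge\Phi(D(S)\cup H)\Phi(D(S)\cap H)/\Phi(H)=\Phi(S)\Phi(D(S)\cap H)/\Phi(H)$ together with $\Phi(D(S)\cap H)\ge\Phi(D(S))$.
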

\begin{proof}
    Take any inadmissible graph $S\Subset \operatorname{K}_n$ and $H\in \mathcal A(S)$. For any subgraph $H'\subset H\subset S$, from the definition of $D(S)$ in \eqref{eq-def-psi-S} we get that 
\begin{align*}
    \Phi(D(S)) \leq \Phi(D(S) \cup H') \overset{ \operatorname{Lemma}~\ref{lemma-facts-graphs} \textup{(ii)} }{\le} \frac{ \Phi(D(S)) \Phi(H') }{ \Phi(H' \cap D(S)) } \,,
\end{align*}
and thus we have (using $H\cap D(S)$ is admissible and $H'\cap D(S) \subset H\cap D(S)$)
\begin{align*}
    \Phi(H') \geq \Phi(H' \cap D(S)) \geq (\log n)^{-1} \,.
\end{align*}
This verifies that $H$ is admissible. Furthermore, 
\[
\Phi(D(S)\cap H)\ge\Phi(D(S))\overset{ \operatorname{Lemma}~\ref{lemma-facts-graphs} \textup{(ii)} }{\ge}\frac{\Phi(D(S)\cup H)\Phi(D(S)\cap H)}{\Phi(H)}=\frac{\Phi(S)\Phi(D(S)\cap H)}{\Phi(H)}\,,
\]
where the first inequality follows from the minimality of $\Phi(D(S))$ and the equality follows from the fact that $H \cup D(S)=S$ for $H\in \mathcal A(S)$. This reduces to $\Phi(H)\ge \Phi(S)$ and thus concludes the first statement.

Now for $X\in \{A,B\}$, recall from \eqref{eq-def-f_S'} that $\Hat{\psi}_{D(S)}(X)$ is a linear combination of polynomials in the set 
$$\Big\{\prod_{(i,j)\in E(K)}X_{i,j}: K\subset D(S), K\text{ is admissible}\Big\}\,.$$ 
Note that $\{\psi_S(X):S\Subset \operatorname{K}_n\text{ with }|E(S)|\le d\}$ is an orthogonal basis for the space of polynomials of $X\in \{0,1\}^{\operatorname{U}}$ with degree no more than $d$. In addition, we claim that for $T\Subset\operatorname{K}_n$, the inner product
$$\langle \Hat{\psi}_{D(S)}(X),\psi_{T}(X) \rangle=\mathbb{E}_{\Qb}[\Hat{\psi}_{D(S)}(X) \psi_{T}(X)] \neq 0$$ 
only if $T \subset D(S)$ and $T$ is admissible. To see this, from the definition of $\Hat{\psi}_{D(S)}$ as in \eqref{eq-def-f_S'} we get that for any $T$ such that the inner product is non-zero, there exists an admissible subgraph $K$ of $D(S)$ satisfying $\langle \prod_{(i,j) \in E(K)} X_{i,j}, \psi_T(X) \rangle \neq 0$. This can only happen when $T \subset K$, since otherwise one can extract a factor $\mathbb{E}[X_{e}-q]=0$ with $e\in E(T)\setminus E(K)$ to obtain a contradiction. Thus, we have $T \subset K \subset D(S)$ and $T$ is admissible. Therefore, $\Hat{\psi}_{D(S)}(X)$ is also a linear combination of polynomials in $\{\psi_{K}(X): K\subset D(S)\text{ is admissible}\}$. This shows that $\psi'_S=\psi_{S\setminus D(S)}\cdot \Hat{\psi}_{D(S)}$ is a linear combination of $\{\psi_H:H\in \mathcal A(S)\}$ by \eqref{eq-def-A(S)}, which concludes the second statement in the lemma.

Combining what we proved together, we get that for any inadmissible $S\Subset\operatorname{K}_n$, $\psi_S'(X)$ is a linear combination of polynomials $\psi_K(X)$, where $K$ ranges over admissible subgraph of $S$. Clearly the claim is also true for any admissible $S\Subset \operatorname{K}_n$ as by definition $\psi_S'(X)=\psi_S(X)$. 
Hence we conclude that $\phi_{S_1,S_2}'(A,B)=\psi_{S_1}'(A)\psi_{S_2}'(B)$ is a linear combination of $\psi_{K_1}(A)\psi_{K_2}(B)$ with $K_1,K_2\Subset \operatorname{K}_n$ that are admissible. This shows that $\phi_{S_1,S_2}'\in \mathcal P_{n,d}'$ for any $S_1,S_2\Subset \operatorname{K}_n$ with $|E(S_1)|+|E(S_2)|\le d$, completing the proof.
\end{proof}
We now elaborate on the polynomials $\psi_S'(X)$ more carefully. Write
\begin{equation}
\label{eq-def-hatpsi-Lambda-H,S}    
\psi_S'(X)=\sum_{H\in \mathcal A(S)}\Lambda_S(H) \psi_{H}(X)\,,
\end{equation}
and we turn to determine the coefficients $\Lambda_S(H)$ explicitly. From the orthogonal property we get
\begin{equation}\label{eq-Lambda(H,S)}
    \Lambda_S(H)=\big\langle \psi_{S}'(X), \psi_{H}(X) \big\rangle =\mathbb{E}_{\mathbb{Q}} \big[\psi_{S}'(X)\psi_{H}(X) \big] \,.  
\end{equation}
From the definition~\eqref{eq-def-f_S'} and \eqref{eq-def-hat-f_S}, we see \eqref{eq-Lambda(H,S)} equals to
\begin{align*}
    &\ \ \sum_{ K \subset D(S)\text{ admissible} } \Big(- \frac{\sqrt{q}}{\sqrt{1-q}} \Big)^{ |E(D(S))|-|E(K)| } \Big \langle \psi_{S\setminus D(S)}(X) \prod_{ (i,j) \in E(K)} \frac{X_{i,j}}{\sqrt{q(1-q)}}, \psi_{H}(X) \Big \rangle\\
    &\qquad=\sum_{J\subset \mathcal A(S)}\Big(- \frac{\sqrt{q}}{\sqrt{1-q}} \Big)^{ |E(S)|-|E(J)| } \Big \langle \psi_{S\setminus D(S)}(X) \prod_{ (i,j) \in E(J\cap D(S))} \frac{X_{i,j}}{\sqrt{q(1-q)}}, \psi_{H}(X) \Big \rangle\,.
\end{align*}
Here the equality follows by identifying each admissible subgraph $K$ of $D(S)$ with $J=K\cup (S\setminus D(S))\in \mathcal A(S)$ and noting that $|E(D(S))|-|E(K)|=|E(S)|-|E(J)|$.
Since $H \setminus D(S)=J \setminus D(S)=S \setminus D(S)$ for $H,J\in\mathcal{A}(S)$, a simple calculation yields 
\begin{align*}
    &\ \Big \langle \psi_{S\setminus D(S)}(X) \prod_{ (i,j) \in E(J \cap D(S)) } \frac{X_{i,j}}{\sqrt{q(1-q)}}, \psi_{H}(X) \Big \rangle  \\
    =&\ \Big \langle \psi_{S\setminus D(S)}(X)\prod_{ (i,j) \in E(J \cap D(S)) } \frac{X_{i,j}}{\sqrt{q(1-q)}}, \psi_{S\setminus D(S)}(X) \cdot \psi_{H \cap D(S)}(X) \Big \rangle \\
    = &\ \mathbb{E}_{\mathbb{Q}} \Big[ \prod_{(i,j) \in E(J\cap D(S)) } \frac{X_{i,j}}{\sqrt{q(1-q)}} \prod_{(i',j') \in E(H\cap D(S))} \frac{X_{i',j'}-q}{\sqrt{q(1-q)}} \Big] = \mathbf{1}_{ \{H \subset J\} } \Big( \frac{ \sqrt{q} }{ \sqrt{1-q} } \Big)^{ |E(J)|-|E(H)| } \,.
\end{align*}
Therefore, we conclude that
\begin{equation}{\label{eq-def-Lambda-H,S}}
    \Lambda_S(H) = \Big( \frac{ \sqrt{q} }{ \sqrt{1-q} } \Big)^{|E(S)|-|E(H)|} \sum_{J \in \mathcal{A}(S) , H \subset J \subset S} (-1)^{|E(S)|-|E(J)|}  \,,  
\end{equation} 
and hence that
\begin{equation}\label{eq-est-of-Lambda}
    \begin{aligned}
    &\  |\Lambda_S(H)| \leq \Big( \frac{ \sqrt{q} }{ \sqrt{1-q} } \Big)^{ |E(S)|-|E(H)| } \cdot \# \big\{ J \in \mathcal{A}(S): H \subset J \subset S \big\} \\
    \leq &\  (2\sqrt{q})^{ |E(S)|-|E(H)| } \cdot \# \big\{ J : H \subset J \subset S \big\} \leq (4\sqrt{q})^{ |E(S)|-|E(H)| } \,.
    \end{aligned}
\end{equation}

With these estimates in hand, we are now ready to prove Proposition~\ref{prop-same-L^1-bounded-L^2}. 
\begin{proof}[Proof of Proposition~\ref{prop-same-L^1-bounded-L^2}]
For any $f\in \mathcal P_{n,d}$, we write $f=\sum_{\phi_{S_1,S_2}\in \mathcal O_d}C_{S_1,S_2}\phi_{S_1,S_2}$ as before, and define $ f'=\sum_{\phi_{S_1,S_2}\in \mathcal O_d} C_{S_1,S_2} \phi_{S_1,S_2}'$. Then it is clear that $f'(A,B)=f(A,B)$ a.s. under $\Pb_*'$, and from Lemma~\ref{lem-psi(S)-property} that $f'\in \mathcal P_{n,d}'$. Now we show that $\mathbb E_\Qb [(f')^2]\le 8\mathbb E_\Qb [f^2]$.

Recalling \eqref{eq-def-hat-f-S1S2} and \eqref{eq-def-hatpsi-Lambda-H,S}, we have that
\begin{align}
    \phi_{S_1,S_2}'(A,B) &= \Big( \sum_{H_1 \in \mathcal{A}(S_1)} \Lambda_{S_1}(H_1) \psi_{H_1}(A) \Big) \cdot \Big( \sum_{H_2 \in \mathcal{A}(S_2)} \Lambda_{S_2}(H_2) \psi_{H_2}(B) \Big) \nonumber \\
    &= \sum_{  H_1 \in \mathcal{A}(S_1), H_2 \in \mathcal{A}(S_2) } \Lambda_{S_1}(H_1) \Lambda_{S_2}(H_2) \phi_{H_1,H_2}(A,B) \,. \label{equ-simple-form-phi'}
\end{align}
Thus, $f'$ can also be written as
\begin{align*}
    f'&=\sum_{(S_1,S_2)\Subset \operatorname{K}_n:|E(S_1)|+|E(S_2)|\le d}C_{S_1,S_2}\Big(\sum_{H_1\in \mathcal A(S_1),H_2\in\mathcal A(S_2)} \Lambda_{S_1}(H_1)\Lambda_{S_2}(H_2)\phi_{H_1,H_2}(A,B)\Big)\\
    &= \sum_{ H_1,H_2 \operatorname{admissible} }\Big( \sum_{\substack{(S_1,S_2)\Subset \operatorname{K}_n:|E(S_1)|+|E(S_2)|\le d\\  H_1 \in \mathcal{A}(S_1), H_2 \in \mathcal{A}(S_2)} } C_{S_1,S_2} \Lambda_{S_1}(H_1) \Lambda_{S_2}(H_2) \Big) \phi_{H_1,H_2}(A,B) \,.
\end{align*}
Therefore, by the orthogonality of $\{\phi_{S_1,S_2}\}$, we have that $\mathbb E_\Qb[(f')^2]$ is upper-bounded by
\begin{align}
    \nonumber& \sum_{ H_1,H_2 \operatorname{admissible} } \Big( \sum_{ \substack{(S_1,S_2)\Subset\operatorname{K}_n:|E(S_1)|+|E(S_2)|\le d\\  H_1 \in \mathcal{A}(S_1), H_2 \in \mathcal{A}(S_2)}} C_{S_1,S_2} \Lambda_{S_1}(H_1) \Lambda_{S_2}(H_2) \Big)^2 
    \\
    \nonumber\stackrel{\eqref{eq-est-of-Lambda}}{\leq} & \sum_{H_1,H_2\text{ admissible}} \Big( \sum_{ \substack{(S_1,S_2)\Subset\operatorname{K}_n:|E(S_1)|+|E(S_2)|\le d\\  H_1 \in \mathcal{A}(S_1), H_2 \in \mathcal{A}(S_2)} } (16q)^{ \frac{1}{2}(|E(S_1)|+|E(S_2)|-|E(H_1)|-|E(H_2)|) } |C_{S_1,S_2}| \Big)^2 \\
    \nonumber\leq & \sum_{ H_1,H_2 \operatorname{admissible} } \Big( \sum_{ \substack{(S_1,S_2)\Subset\operatorname{K}_n:|E(S_1)|+|E(S_2)|\le d\\  H_1 \in \mathcal{A}(S_1), H_2 \in \mathcal{A}(S_2)} } {d}^{ -2 (|E(S_1)|+|E(S_2)|-|E(H_1)|-|E(H_2)|) } C^2_{S_1,S_2} \Big) \\
    \label{eq-last-bracket}& \times \Big( \sum_{ \substack{(S_1,S_2)\Subset\operatorname{K}_n:|E(S_1)|+|E(S_2)|\le d\\  H_1 \in \mathcal{A}(S_1), H_2 \in \mathcal{A}(S_2)}} (16q {d}^2)^{(|E(S_1)|+|E(S_2)|-|E(H_1)|-|E(H_2)|)} \Big)\,,
\end{align}
where the last inequality follows from Cauchy-Schwartz inequality.

We next upper-bound the right hand side of \eqref{eq-last-bracket}. To this end, we first show that the last bracket in \eqref{eq-last-bracket} is uniformly bounded by $4$ for any two admissible graphs $H_1,H_2$. 
Note that
\begin{align}\label{eq-bound-second-bracket}
    & \sum_{ S_1\Subset\operatorname{K}_n:H_1 \in \mathcal{A}(S_1), |E(S_1)|\leq d } (16q{d}^2)^{ (|E(S_1)|-|E(H_1)|) } \nonumber \\ 
    \leq & \sum_{v,k=0}^{2d} 16^{k}q^{k}{d}^{2k} \# \big\{ S_1 : H_1 \in \mathcal{A}(S_1), |V(S_1)|-|V(H_1)|=v,|E(S_1)|-|E(H_1)|=k \big\} \nonumber \\
    \leq & \sum_{v,k=0}^{2d} 16^{k}q^{k}n^{v} {d}^{4k} \mathbf{1}_{ \{ \exists S_1: H_1 \in \mathcal{A}(S_1), |V(S_1)|-|V(H_1)|=v,|E(S_1)|-|E(H_1)|=k \} }  \,,
\end{align}
where the last step comes from Lemma~\ref{lemma-facts-graphs} (iv). By Lemma~\ref{lem-psi(S)-property} we have $\Phi(H_1) \geq \Phi(S_1)$ for any $H_1 \in \mathcal{A}(S_1)$, which implies $(n^{1+4/d}d^{20})^{v} (q d^6)^k \leq 1$ whenever the indicator equals to $1$. Thus,
\begin{align*}
    \eqref{eq-bound-second-bracket} \leq \sum_{v,k=0}^{2d} 16^k {d}^{4k-20v-6k} \leq \sum_{v,k=0}^{2d} {d}^{-(v+k)} \leq 2.
\end{align*}
We can get another inequality with respect to $H_2$ in the same way and this verifies our claim, since the last bracket in \eqref{eq-last-bracket} is upper-bounded by the product of these two sums. Therefore, we get that $\mathbb{E}_{\mathbb{Q}}[(f')^2]$ is upper-bounded by $4$ times
\begin{align*}
     & \sum_{ H_1,H_2 \operatorname{admissible} } \Big( \sum_{ \substack{(S_1,S_2)\Subset\operatorname{K}_n:|E(S_1)|+|E(S_2)|\le d\\  H_1 \in \mathcal{A}(S_1), H_2 \in \mathcal{A}(S_2)} } {d}^{ -2 (|E(S_1)|+|E(S_2)|-|E(H_1)|-|E(H_2)|) } C^2_{S_1,S_2} \Big) \\
     =&\sum_{(S_1,S_2)\Subset\operatorname{K}_n:|E(S_1)|+|E(S_2)|\le d} C^2_{S_1,S_2} \sum_{ H_1 \in \mathcal{A}(S_1), H_2 \in \mathcal{A}(S_2) } {d}^{ -2(|E(S_1)|+|E(S_2)|-|E(H_1)|-|E(H_2)|) } \,.
\end{align*}
In addition, for any fixed $S_1,S_2$ such that $|E(S_1)|\le d$ and $|E(S_2)| \leq d$, we have that
\begin{align*}
    &\ \sum_{ H_1 \in \mathcal{A}(S_1), H_2 \in \mathcal{A}(S_2) } {d}^{-2 (|E(S_1)|+|E(S_2)|-|E(H_1)|-|E(H_2)|)} \\\le &\ \sum_{k_1=0}^{|E(S_1)|}\sum_{k_2=0}^{|E(S_2)|}d^{-2(k_1+k_2)}\cdot \#\{(H_1,H_2):H_i\subset S_i:|E(H_i)|=|E(S_i)|-k_i,i=1,2\}
    \\
    \le&\ 
    \sum_{k_1=0}^{|E(S_1)}\sum_{k_2=0}^{|E(S_2)|}d^{-2(k_1+k_2)}\cdot|E(S_1)|^{k_1}|E(S_2)|^{k_2}\le \sum_{k_1=0}^d\sum_{k_2=0}^d d^{-2(k_1+k_2)}\cdot d^{k_1+k_2}\leq 2 \,,
\end{align*}
where the second inequality follows from Lemma~\ref{lemma-facts-graphs} (v) and the last one comes from the fact that $d\ge 100$. Hence, we have $\mathbb{E}_{\mathbb{Q}}[(f')^2] \leq 8 \sum_{S_1,S_2}C_{S_1,S_2}^2 = 8\mathbb{E}_{\mathbb{Q}}[f^2]$, completing the proof of Proposition~\ref{prop-same-L^1-bounded-L^2}.
\end{proof}

\subsection{Completing the proof}\label{subsec-complete-proof}
Now it remains to bound the right hand side of \eqref{eq-reduction-1}. Recall that $$\mathcal O_d'=\{\phi_{S_1,S_2}\in \mathcal O_d:S_1,S_2\text{ are admissible}\}$$
is a standard orthogonal basis of $\mathcal P_{n,d}'$. The following result provides an analogue of Lemma~\ref{lem-optimal-signal-to-noise-ratio}.
\begin{lemma}\label{lem-optimal-results-conditioning}
    For any $n,d\ge 1$, it holds that
    \begin{equation}\label{eq-optimal-signal-to-noise-ratio-conditioning}
        \sup_{f\in \mathcal P_{n,d}'}\frac{\mathbb E_{\Pb'}[f]}{\sqrt{\mathbb E_{\Qb}[f^2]}}=\Bigg(\sum_{\phi_{S_1,S_2}\in \mathcal O_d'} \big(\mathbb E_{\Pb'}[\phi_{S_1,S_2}]\big)^2 \Bigg)^{1/2}\,.
    \end{equation}
\end{lemma}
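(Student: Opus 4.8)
The plan is to mimic the proof of Lemma~\ref{lem-optimal-signal-to-noise-ratio} verbatim, replacing the basis $\mathcal O_d$ by the sub-basis $\mathcal O_d'$ and the measure $\Pb$ by $\Pb'$. The key point is that $\{\phi_{S_1,S_2}:\phi_{S_1,S_2}\in\mathcal O_d'\}$ is still an orthonormal system with respect to the inner product $\langle f,g\rangle=\mathbb E_{\Qb}[fg]$, since it is a subset of the orthonormal basis $\mathcal O_d$; hence it is an orthonormal basis of its span $\mathcal P_{n,d}'$. Concretely, I would first write an arbitrary $f\in\mathcal P_{n,d}'$ uniquely as $f=\sum_{\phi_{S_1,S_2}\in\mathcal O_d'}C_{S_1,S_2}\,\phi_{S_1,S_2}$ with real coefficients, so that by orthonormality $\mathbb E_{\Qb}[f^2]=\sum_{\phi_{S_1,S_2}\in\mathcal O_d'}C_{S_1,S_2}^2$, while by linearity $\mathbb E_{\Pb'}[f]=\sum_{\phi_{S_1,S_2}\in\mathcal O_d'}C_{S_1,S_2}\,\mathbb E_{\Pb'}[\phi_{S_1,S_2}]$.

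Next I would apply the Cauchy--Schwarz inequality exactly as in Lemma~\ref{lem-optimal-signal-to-noise-ratio}:
\[
\frac{\mathbb E_{\Pb'}[f]}{\sqrt{\mathbb E_{\Qb}[f^2]}}
=\frac{\sum_{\phi_{S_1,S_2}\in\mathcal O_d'}C_{S_1,S_2}\,\mathbb E_{\Pb'}[\phi_{S_1,S_2}]}{\sqrt{\sum_{\phi_{S_1,S_2}\in\mathcal O_d'}C_{S_1,S_2}^2}}
\le\Bigg(\sum_{\phi_{S_1,S_2}\in\mathcal O_d'}\big(\mathbb E_{\Pb'}[\phi_{S_1,S_2}]\big)^2\Bigg)^{1/2},
\]
with equality when $C_{S_1,S_2}\propto\mathbb E_{\Pb'}[\phi_{S_1,S_2}]$; this choice of coefficients defines a legitimate element of $\mathcal P_{n,d}'$ (the sum is finite since $\mathcal O_d'$ is finite), so the supremum is attained and the identity \eqref{eq-optimal-signal-to-noise-ratio-conditioning} follows.

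I do not anticipate a genuine obstacle here: the lemma is the ``conditioned'' analogue of Lemma~\ref{lem-optimal-signal-to-noise-ratio} and the argument is purely Hilbert-space geometry. The only point worth a sentence of care is that $\mathcal O_d'$, being a subset of the orthonormal basis $\mathcal O_d$ of $\mathcal P_{n,d}$, remains orthonormal and spans $\mathcal P_{n,d}'$ by definition; everything else is the same Cauchy--Schwarz computation. (The actual work of the section lies not in this lemma but in subsequently estimating $\mathbb E_{\Pb'}[\phi_{S_1,S_2}]$ and summing the squares over $\mathcal O_d'$.)
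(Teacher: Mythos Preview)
Your proposal is correct and matches the paper's own treatment: the paper explicitly states that the proof follows from an almost identical argument to that of Lemma~\ref{lem-optimal-signal-to-noise-ratio} and omits the details. The only ingredients are that $\mathcal O_d'$ is an orthonormal basis of $\mathcal P_{n,d}'$ (being a subset of the orthonormal basis $\mathcal O_d$) and the Cauchy--Schwarz computation, exactly as you wrote.
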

The proof of Lemma~\ref{lem-optimal-results-conditioning} follows from an almost identical argument to that provided for proving Lemma~\ref{lem-optimal-signal-to-noise-ratio}, so we omit the details here. The rest of this section is dedicated to upper-bounding the right-hand sides of \eqref{eq-optimal-signal-to-noise-ratio-conditioning}. 

A new challenge in bounding \eqref{eq-optimal-signal-to-noise-ratio-conditioning} is that there is no straightforward expression for the expectation of \(\phi_{S_1,S_2}\) under \(\Pb'\). To tackle this, we prove the following bound as an alternative.
\begin{proposition}\label{prop-controlling-expectation}
    For any two admissible graphs $S_1,S_2\Subset \operatorname{K}_n$ with at most $d$ edges, one has that $\big|\mathbb E_{\Pb'}[\phi_{S_1,S_2}]\big|$ is upper-bounded by $1+o(1)$ times
    \begin{equation}\label{eq-control-expectation}
    \begin{aligned}
    \sum_{\mathbf H_0\in \mathcal H:\mathbf H_0\xhookrightarrow{} S_i,i=1,2}n^{-\frac{|V(S_1)|+|V(S_2)|}{2}} \rho^{|E(\mathbf H_0)|} d^{-6( |E(S_1)|+|E(S_2)|-2|E(\mathbf H_0)| )} \operatorname{Aut}(\mathbf H_0) \,,
    \end{aligned}
    \end{equation}
    where the notation $\mathbf H\xhookrightarrow{} S$ means that $\mathbf H$ can be embedded into $S$ as a subgraph.  
\end{proposition}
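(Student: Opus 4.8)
Start from $\mathbb E_{\mathbb P'}[\phi_{S_1,S_2}]=\mathbb E_{\mathbb P_*}[\phi_{S_1,S_2}\mathbf 1_{\Gc}]\big/\mathbb P_*[\Gc]$; by Lemma~\ref{eq-Gc-is-typical} the denominator is $1-o(1)$, which will supply the $1+o(1)$ in the statement. Recall we may assume $\rho\ge 1/d$, so $G$ is an \ER graph with $p\le qd$, independent of $\pi_*$, and $\Gc$ is measurable with respect to $G$; in particular $\pi_*$ is still uniform and independent of $G$ under $\mathbb P_*'$. Conditioning on $\pi_*=\pi$ and on the whole mother graph $G=\{I_e\}$ and carrying out the conditionally independent expectations over the subsampling variables $J,K$, one has $\mathbb E[I_eJ_e-q\mid I_e]=\mathbb E[I_eK_e-q\mid I_e]=s(I_e-p)=:g(I_e)$, whence
\[
\mathbb E[\phi_{S_1,S_2}\mid\pi_*=\pi,G]=\big(q(1-q)\big)^{-\frac{|E(S_1)|+|E(S_2)|}{2}}\prod_{e\in E(S_1)}g(I_e)\prod_{f\in E(\pi^{-1}(S_2))}g(I_f)\,.
\]
Since $\Gc$ depends only on $G$ and $\pi_*$ is uniform under $\mathbb P_*'$, the task reduces to bounding $\mathbb E_{G\mid\Gc}\big[\prod_{e\in E(S_1)}g(I_e)\prod_{f\in E(\pi^{-1}(S_2))}g(I_f)\big]$ uniformly over $\pi$ (times the above normalization) and then averaging over $\pi$.

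Fix $\pi$ and write $R=S_1\cap\pi^{-1}(S_2)$, $D=S_1\triangle\pi^{-1}(S_2)$, so the product above equals $s^{|E(S_1)|+|E(S_2)|}\prod_{e\in E(R)}(I_e-p)^2\prod_{e\in E(D)}(I_e-p)$; I would keep the squares $(I_e-p)^2$ intact (rather than expanding them), since $(I_e-p)^2\ge0$ will later make union bounds over subgraphs legitimate. If $D=\emptyset$ — forcing $\pi^{-1}(S_2)=S_1$, in particular $S_1\cong S_2$ — this is the main term: one has $\mathbb E_{G\mid\Gc}\big[\prod_{e\in E(R)}(I_e-p)^2\big]=(1+o(1))(p(1-p))^{|E(R)|}$, because $\mathbb E_G[\mathbf 1_{\Gc^c}\prod_{e\in E(R)}(I_e-p)^2]\le(p(1-p))^{|E(R)|}\sum_{\text{bad }H,|V(H)|\le d^2}p^{|E(H)\setminus E(R)|}=(p(1-p))^{|E(R)|}o(1)$ by the union bound (using $(I_e-p)^2\ge0$), the first-moment estimate behind Lemma~\ref{eq-Gc-is-typical}, and $\Phi(H')\ge(\log n)^{-1}$ for $H'\subseteq R$ (admissibility). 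Multiplying by $\big(q(1-q)\big)^{-|E(S_1)|}s^{2|E(S_1)|}$ and invoking the identity $s^{2}p(1-p)/(q(1-q))=\rho$ (equivalently $q(s-q)=\rho q(1-q)$) yields $(1+o(1))\rho^{|E(S_1)|}$; averaging over the $\pi$ with $\pi^{-1}(S_2)=S_1$, whose probability is $\operatorname{Aut}(S_1)\big/\big(n(n-1)\cdots(n-|V(S_1)|+1)\big)\le(1+o(1))\operatorname{Aut}(S_1)\,n^{-(|V(S_1)|+|V(S_2)|)/2}$, reproduces exactly the $\mathbf H_0\cong S_1\cong S_2$ summand.

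The heart of the matter is the case $D\neq\emptyset$, where $\mathbb E_{G\mid\Gc}\big[\prod_{e\in E(R)}(I_e-p)^2\prod_{e\in E(D)}(I_e-p)\big]=-\mathbb E_G\big[\mathbf 1_{\Gc^c}\prod_{e\in E(R)}(I_e-p)^2\prod_{e\in E(D)}(I_e-p)\big]/\mathbb P_*[\Gc]$, since the unconditioned expectation vanishes. Conditioning on $G$ off $E(D)$, only the event that some bad $H$ with $|V(H)|\le d^2$ meets $E(D)$ survives; a union bound over such $H$ (legitimate after bounding $\prod_{e\in E(D)}(I_e-p)$ in absolute value and using $(I_e-p)^2\ge0$) together with the moment identities $\mathbb E[(I_e-p)^2\mid I_e=1]=(1-p)^2$, $\mathbb E|I_e-p|=2p(1-p)$, $\mathbb E[I_e|I_e-p|]=p(1-p)$ reduces everything to weighted sums $\sum_{\text{bad }H}p^{|E(H)\setminus(E(R)\cup E(D))|}$ broken up according to $M'=E(H)\cap(E(R)\cup E(D))$. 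I would control these via the submultiplicativity of $\Phi$ (as in \eqref{eq-def-Phi}) together with admissibility of $S_1$ and $\pi^{-1}(S_2)$: any such $M'$ decomposes as $M_1'\sqcup M_2'$ with $M_i'$ a subgraph of the admissible $S_1$ (resp.\ $\pi^{-1}(S_2)$), so $\Phi(M_i')\ge(\log n)^{-1}$ forces $|M_i'|\le(1+4/d+o(1))|V(M_i')|$ and hence $|V(M')|\ge\tfrac12(1-o(1))|M'|$; the dangerous factor $(qd^6)^{-|M'|}$ — which combines with the $(\rho p(1-p))^{|E(D)|/2}$ produced by the normalization (via $sp(1-p)/\sqrt{q(1-q)}=\sqrt{\rho p(1-p)}$) into a potentially divergent $q^{-|M'|/2}$ — is then dominated, using $q\ge n^{-1}$, by the $(n^{1+4/d}d^{20})^{-|V(M')|}$ in $\Phi(M')^{-1}$, with a surplus that leaves the claimed $d^{-6|E(D)|}=d^{-6(|E(S_1)|+|E(S_2)|-2|E(\mathbf H_0)|)}$ and absorbs the remaining combinatorial factors. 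The per-$\pi$ bound is then $\lesssim\rho^{|E(R)|}$ times a factor carrying $(\log n)^{-1}$, a power of $q$, and $d^{-6|E(D)|}$.

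Finally I would reassemble: grouping $\pi$ by the isomorphism type $\mathbf H_0$ of $R_\pi=S_1\cap\pi^{-1}(S_2)$, the number of $\pi$ for which $R_\pi$ equals a prescribed copy of $\mathbf H_0$ in $S_1$ is $\le\#\{\mathbf H_0\hookrightarrow S_2\}\,(n-|V(S_2)|)!\,n^{|V(S_2)|-|V(\mathbf H_0)|}$, so its probability is $\lesssim\#\{\mathbf H_0\hookrightarrow S_2\}\,n^{-|V(R_\pi)|}$; combining with the elementary inequality $|V(R_\pi)|+|V(D_\pi)|\ge|V(S_1)\cup V(\pi^{-1}(S_2))|\ge\tfrac12(|V(S_1)|+|V(S_2)|)$ (every vertex of $S_1\cup\pi^{-1}(S_2)$ is incident to an edge of $R_\pi$ or of $D_\pi$) turns $n^{-|V(R_\pi)|}\cdot(n^{1+4/d})^{-|V(D_\pi)|}$ into $n^{-(|V(S_1)|+|V(S_2)|)/2}$ with slack; the embedding counts $\#\{\mathbf H_0\hookrightarrow S_i\}\le d^{\,|E(S_i)|-|E(\mathbf H_0)|}$ and the remaining geometric series over subgraphs (convergent since $d\ge100$) are absorbed into the $d$-surplus, leaving precisely $\operatorname{Aut}(\mathbf H_0)\,n^{-(|V(S_1)|+|V(S_2)|)/2}\rho^{|E(\mathbf H_0)|}d^{-6(|E(S_1)|+|E(S_2)|-2|E(\mathbf H_0)|)}$ per type. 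I expect the third step — the precise accounting of the bad-subgraph sums under the conditioning, and in particular verifying that the slack built into $\Phi$ (the $n^{4/d}$, $d^{20}$, $d^{6}$ factors) really does dominate the product of the combinatorial weight, the embedding counts, and the divergent $q^{-|M'|/2}$ simultaneously and uniformly in $\pi$ — to be the main obstacle; the reduction and the main-term computation are essentially bookkeeping.
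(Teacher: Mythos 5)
Your reduction to the conditional expectation given $\pi$ and the mother graph $G$, the identity $h=\prod_{E(S_1)\triangle E(\pi^{-1}(S_2))}g(I_e)\prod_{E(S_0)}g(I_e)^2$, the main-term computation for $D=\emptyset$, and the final reassembly over isomorphism types all coincide with the paper's outline. The genuine gap is in your treatment of $D\neq\emptyset$, and it is not merely a matter of ``precise accounting'' — the absolute-value union bound you propose is quantitatively too weak.

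The paper proves the per-$\pi$ estimate in two pieces and takes the minimum: an $L^1$-type bound $\rho^{|E(S_0)|}(4q)^{(|E(S_1)|+|E(S_2)|)/2-|E(S_0)|}$ (Lemma~\ref{lem-L^1-bound}) and, crucially, a \emph{vertex-count} bound $\rho^{|E(S_0)|}(nd^{20})^{-(|V(S_1)|+|V(S_2)|)/2+|V(S_0)|}2^{2+|E(D)|}$ (Proposition~\ref{prop-another-bound-conditional-expectation}, proved via Lemma~\ref{lemma-conditional-expectation-edge-products}). Both are needed: when $D$ is a forest the $L^1$ bound is off by a polynomial factor of $n$, and only the vertex-count bound saves the day. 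Your plan effectively delivers only the $L^1$-type bound. Once you replace $\prod_{e\in E(D)}(I_e-p)$ by $\prod_{e\in E(D)}|I_e-p|$ and union-bound $\mathbf 1_{\Gc^c}\le\sum_H\mathbf 1_{H\subset G}$, you have thrown away the cancellation that makes the true conditional expectation scale with the \emph{vertices} of $D$, not just its edges. Concretely, take $E(D)=\{e_1,e_2\}$ with $e_1,e_2$ vertex-disjoint and disjoint from $V(S_0)$, so the target per-$\pi$ bound is of order $\rho^{|E(S_0)|}\,(nd^{20})^{-2}$. In your sum, a single bad $H$ with $E(H)\cap E(D)=\{e_1\}$ contributes $\asymp\rho^{|E(S_0)|+1}\,p^{|E(H)|}\cdot(\text{const})$ with $\sum_{\text{bad }H\ni e_1}p^{|E(H)|}\asymp n^{-2}(\log n)^{-1}$, i.e.\ a total of order $\rho^{|E(S_0)|}\cdot n^{-2}(\log n)^{-1}$, which is much larger than $n^{-2}d^{-40}$ in the regime $\log d=o(\sqrt{\log n})$. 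The missing point is that after conditioning on $G$ off $E(D)$ the contribution vanishes unless \emph{every} $e\in E(D)$ is pivotal for $\Gc^c$ — so one must union over systems of bad subgraphs, one per edge of $D$ — and the paper's key structural Lemma~\ref{lemma-containment-maximum-graph} shows these witnesses can always be merged into a single bad $W\supset V(\mathsf E)$ with $|W|\le d^2$, which is what makes a clean union bound (Lemma~\ref{lemma-prob-proper-realization}) produce the factor $(nd^{20})^{-|V(\mathsf E)\setminus V(S)|}$. The subsequent densest-subgraph reduction (Lemma~\ref{lemma-expand-graph}, using the weighted extension $\xi_S$) is what makes that union bound over $W\supset V(\mathsf E)\cup V(S)$ legitimate. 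None of this is present in your sketch, and your appeal to $\Phi(M')^{-1}$ for $M'=E(H)\cap(E(R)\cup E(D))$ does not recover it, because a single bad $H$ need not cover $V(D)$ at all (in the example above $V(M')=V(e_1)$, not $V(D)$), and the density estimate $|E(M')|\le 2(1+o(1))|V(M')|$ you derive from the disjoint decomposition $M'\subset S_1\cup\pi^{-1}(S_2)$ loses a factor of $2$ that is not affordable.

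In short: you have the right reduction and the right bookkeeping, but the central analytic ingredient — Proposition~\ref{prop-another-bound-conditional-expectation}, resting on Lemmas~\ref{lemma-conditional-expectation-edge-products}, \ref{lemma-containment-maximum-graph}, and \ref{lemma-expand-graph} — is not reachable by the $L^1$/union-bound strategy you outline, and the paper must (and does) argue via pivotality of every edge in $\mathsf E$ plus the merging/densest-subgraph machinery to obtain the vertex-count factor.
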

For a deterministic permutation $\pi$, we use \(\Pb_\pi\) and \(\Pb'_\pi\) to represent \(\Pb_*[\cdot\mid \pi_*=\pi]\) and \(\Pb_*'[\cdot\mid \pi_*=\pi]\), respectively. Additionally, we denote \(\mathbb{E}_\pi\) (resp. \(\mathbb{E}'_\pi\)) as the expectation under \(\Pb_\pi\) (resp. \(\Pb'_\pi\)). It is clear that
\begin{align}
    &\ \big|\mathbb{E}_{\Pb'}[\phi_{S_1,S_2}]\big|= \left|\frac{1}{n!}\sum_{\pi\in \operatorname{S}_n}\mathbb{E}_{\Pb'_\pi}[\phi_{S_1,S_2}]\right| \le \frac{1}{n!}\sum_{\pi\in\operatorname{S}_n}\big|\mathbb{E}_{\Pb_\pi'}[\phi_{S_1,S_2}]\big| \,. \label{eq-relaxation-1} 
\end{align}
We proceed to estimate each term \(|\mathbb{E}_{\Pb_\pi'}[\phi_{S_1,S_2}]|\) in two ways. The first approach is to bound this term simply by the \(L^1\)-norm under \(\Pb_\pi\), as incorporated in the next lemma.
\begin{lemma}\label{lem-L^1-bound}
    For any $S_1,S_2\Subset \operatorname{K}_n$ with at most $d$ edges and any permutation $\pi$ on $[n]$, we have that (denoting $S_0=S_1\cap \pi^{-1}(S_2)\Subset \operatorname{K}_n$ below)
    \begin{equation}\label{eq-L^1-bound}
        \big|\mathbb E_{\Pb_\pi'}[\phi_{S_1,S_2}]\big|\le \big[1+o(1)\big]\rho^{|E(S_0)|}(4q)^{\frac{|E(S_1)|+|E(S_2)|}{2}-|E(S_0)|}\,.
    \end{equation}
\end{lemma}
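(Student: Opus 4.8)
\textbf{Proof plan for Lemma~\ref{lem-L^1-bound}.}

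The plan is to expand $\phi_{S_1,S_2}$ over subgraphs and use the conditional independence structure of $\Pb_\pi$, then pay the (small) price of passing from $\Pb_\pi$ to $\Pb_\pi'$ via the total variation bound from Lemma~\ref{eq-Gc-is-typical}. Concretely, since we only need an upper bound on an $L^1$-type quantity, the first move is to note that for any event-free (unconditioned-on-$\Gc$) estimate we would write $|\mathbb E_{\Pb_\pi}[\phi_{S_1,S_2}]| = \rho^{|E(S_0)|}$ by exactly the computation in the proof of Lemma~\ref{lem-expectation-without-conditioning}: conditioned on $\pi_*=\pi$, the pairs $(\overline A_{i,j},\overline B_{\pi(i),\pi(j)})$ are independent, each edge of $S_0 = S_1\cap\pi^{-1}(S_2)$ contributes a factor $\rho$, and each edge of $S_1\triangle\pi^{-1}(S_2)$ contributes a mean-zero factor, killing the expectation. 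So under the unconditional law the bound would just be $\rho^{|E(S_0)|}$, with no $q$ factor; the $(4q)^{\cdots}$ term is the overhead coming from conditioning on $\Gc$.

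The key steps, in order: (i) Write $\phi_{S_1,S_2}(A,B) = \psi_{S_1}(A)\psi_{S_2}(B)$ and recall that $\Pb_\pi'$ is $\Pb_\pi$ conditioned on $\Gc$, where $\Pb_\pi[\Gc] = 1-o(1)$ uniformly in $\pi$ by Lemma~\ref{eq-Gc-is-typical}. (ii) Use $\mathbb E_{\Pb_\pi'}[\phi] = \mathbb E_{\Pb_\pi}[\phi\,\mathbf 1_\Gc]/\Pb_\pi[\Gc]$, so $|\mathbb E_{\Pb_\pi'}[\phi]| \le (1+o(1))\,\mathbb E_{\Pb_\pi}[\,|\phi|\,\mathbf 1_\Gc]$; but rather than bound $|\phi|$ crudely it is better to split $\mathbf 1_\Gc = 1 - \mathbf 1_{\Gc^c}$ and write $\mathbb E_{\Pb_\pi}[\phi\mathbf 1_\Gc] = \mathbb E_{\Pb_\pi}[\phi] - \mathbb E_{\Pb_\pi}[\phi\mathbf 1_{\Gc^c}]$. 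The first term is $\rho^{|E(S_0)|}$ exactly. For the second, the cleanest route is to expand $\psi_{S_i}$ into monomials in the $X_{i,j}$'s (as in the displayed expansion preceding \eqref{eq-def-f_S'}): $\psi_{S_i}(X) = \sum_{K_i\subset S_i}(-\sqrt q/\sqrt{1-q})^{|E(S_i)|-|E(K_i)|}\prod_{e\in E(K_i)} X_e/\sqrt{q(1-q)}$. Then $\mathbb E_{\Pb_\pi}[\phi\,\mathbf 1_{\Gc^c}]$ becomes a sum over $(K_1,K_2)$ of signed terms $(\sqrt q/\sqrt{1-q})^{|E(S_1)|+|E(S_2)|-|E(K_1)|-|E(K_2)|}(q(1-q))^{-(|E(K_1)|+|E(K_2)|)/2}\,\mathbb E_{\Pb_\pi}[\prod_{E(K_1)}A_e\prod_{E(K_2)}B_e\,\mathbf 1_{\Gc^c}]$, and each expectation is at most $\mathbb E_{\Pb_\pi}[\prod A_e \prod B_e] \le q^{|E(K_1)| + |E(K_2)| - |E(K_0)|}$ where $K_0 = K_1\cap\pi^{-1}(K_2)$, using that correlated edges have joint second moment $\le q$. (iii) Carefully bookkeep the powers of $q$: combining the prefactors, each term contributes roughly $(\sqrt q)^{|E(S_1)|+|E(S_2)|-|E(K_1)|-|E(K_2)|}\cdot q^{-|E(K_0)|}\cdot(1+o(1))$, and since $|E(K_0)| \le |E(K_1)|\wedge|E(K_2)|$ and the number of choices of $(K_1,K_2)$ with given sizes is bounded by $2^{|E(S_1)|}\cdot 2^{|E(S_2)|}$, the whole sum telescopes to at most $(4q)^{(|E(S_1)|+|E(S_2)|)/2 - |E(S_0)|}\rho^{|E(S_0)|}$ up to $1+o(1)$. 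The point is that the dominant term ($K_1 = S_1$, $K_2 = S_2$ after moving to the $\overline A,\overline B$ basis) gives $\rho^{|E(S_0)|}$, and stripping any edge costs a factor $O(\sqrt q)$, with at most a factor $4$ per missing edge from the combinatorial count.

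The main obstacle I anticipate is the bookkeeping in step (iii): one must be careful that the conditioning on $\Gc$ genuinely only inflates things by $1+o(1)$ and does not create correlations that break the per-edge $O(\sqrt q)$ accounting. The honest way to handle this is probably \emph{not} to expand in monomials and then reassemble, but instead to observe directly that $\mathbb E_{\Pb_\pi}[|\phi_{S_1,S_2}|] \le (q(1-q))^{-(|E(S_1)|+|E(S_2)|)/2}\,\mathbb E_{\Pb_\pi}[\prod_{E(S_1)}|\overline A_e|\prod_{E(S_2)}|\overline B_e|]$, bound $|\overline A_e|, |\overline B_e| \le 1$, and use independence across the $S_0$-correlated pairs and the remaining independent pairs — giving $\mathbb E[|\overline A_e\overline B_{\pi(e)}|] \le \rho q(1-q) + $ lower order for correlated edges and $\mathbb E[|\overline A_e|] \le 2q$ for the rest. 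Then $|\mathbb E_{\Pb_\pi'}[\phi_{S_1,S_2}]| \le (1+o(1))\,\mathbb E_{\Pb_\pi}[|\phi_{S_1,S_2}|]$ directly yields \eqref{eq-L^1-bound} with the constant $4$ absorbing the $(1-q)^{-1/2}$ factors and the $2q$-versus-$q$ slack. Either way, the arithmetic is routine once the independence structure under $\Pb_\pi$ is invoked; the only genuinely delicate point is that the factor $1+o(1)$ is uniform over all admissible $S_1, S_2$ with at most $d$ edges, which follows because $\Pb_\pi[\Gc] = 1-o(1)$ uniformly in $\pi$ (Lemma~\ref{eq-Gc-is-typical}) and $q^{-O(d)}$ times an $o(1)$ probability is still $o(1)$ in the relevant regime — this last point may require recalling $nq = n^{o(1)}$ and $d = n^{o(1)}$, so that $q^{-d}$ is subpolynomial while $\Pb_\pi[\Gc^c]$ is, by the proof of Lemma~\ref{eq-Gc-is-typical}, at most $(\log n)^{-1}$ times a convergent series, hence genuinely negligible against any fixed inverse power.
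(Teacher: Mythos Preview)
Your second approach (the ``honest way'' in your final paragraph) is correct and is exactly what the paper does: bound $|\mathbb E_{\Pb_\pi'}[\phi_{S_1,S_2}]|$ by $\mathbb E_{\Pb_\pi'}[|\phi_{S_1,S_2}|] \le (1+o(1))\,\mathbb E_{\Pb_\pi}[|\phi_{S_1,S_2}|]$, then factor over independent blocks under $\Pb_\pi$. The per-edge moments are $\mathbb E_{\Pb_\pi}[|\overline A_e\overline B_{\pi(e)}|] = q(1-q)\big(\rho(1-2q)^2 + 4q(1-q)\big)$ on $E(S_0)$ and $\mathbb E_{\Pb_\pi}[|\overline A_e|]=\mathbb E_{\Pb_\pi}[|\overline B_e|]=2q(1-q)$ elsewhere; after normalization and using $\rho\ge 1/d$, $q=n^{-1+o(1)}$, $|E(S_0)|\le d$, this gives \eqref{eq-L^1-bound} with the constant $4$ coming from $(2\sqrt{q(1-q)})^2$. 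Your concluding worry about ``$q^{-O(d)}$ times an $o(1)$ probability'' is unnecessary here: the $L^1$ route only needs $\Pb_\pi[\Gc]=1-o(1)$, nothing more.

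Your first approach, however, does not just have bookkeeping issues --- it has a genuine gap. First, the claim $|\mathbb E_{\Pb_\pi}[\phi_{S_1,S_2}]|=\rho^{|E(S_0)|}$ is wrong: it equals $\rho^{|E(S_0)|}$ only when $S_1=\pi^{-1}(S_2)$ (so $S_0=S_1$), and is $0$ otherwise. More importantly, in step (iii) when you bound $\mathbb E_{\Pb_\pi}\big[\prod_{E(K_1)}A_e\prod_{E(K_2)}B_e\,\mathbf 1_{\Gc^c}\big]$ by dropping the indicator and using $\le q^{|E(K_1)|+|E(K_2)|-|E(K_0)|}$, the resulting term (after combining with the prefactors) is of order $q^{(|E(S_1)|+|E(S_2)|)/2 - |E(K_0)|}$. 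Summing over the $2^{|E(S_1)|+|E(S_2)|}$ choices of $(K_1,K_2)$ and using $|E(K_0)|\le |E(S_0)|$, you get at best $(4q)^{(|E(S_1)|+|E(S_2)|)/2}\, q^{-|E(S_0)|}$, not $\rho^{|E(S_0)|}(4q)^{(|E(S_1)|+|E(S_2)|)/2-|E(S_0)|}$. The factor $q^{-|E(S_0)|}$ is much larger than $(4q)^{-|E(S_0)|}\rho^{|E(S_0)|}$, so the claimed ``telescoping'' does not happen. The point is that by discarding $\mathbf 1_{\Gc^c}$ you have thrown away all the smallness, and the monomial expansion cannot recover the correlation factor $\rho$. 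You were right to abandon this route.
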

\begin{proof}
Denote $S_0'=\pi(S_0)$ for short. From the definition of the model, we see that under the measure $\Pb_\pi$, the families of variables
\[
\{A_{i,j},B_{\pi(i),\pi(j)}\}_{(i,j)\in E(S_0)},\{A_{i,j}\}_{(i,j)\in E(S_1)\setminus E(S_0)},\{B_{i,j}\}_{(i,j)\in E(S_2)\setminus E(S_0')}
\]
are mutually independent. As a result one has
\begin{align}
    &\ \big|\mathbb E_{\Pb'_\pi}[\phi_{S_1,S_2}]\big|\le \mathbb E_{\Pb_\pi'}\big[|\phi_{S_1,S_2}|\big]
    \le \frac{1}{\Pb'_\pi[\Gc]}\cdot\mathbb{E}_{\Pb_\pi}\big[|\phi_{S_1,S_2}|\big] \nonumber \\
    =&\ \big[1+o(1)\big]{\big(q(1-q)\big)^{-\frac{|E(S_1)|+|E(S_2)|}{2}}}\cdot\mathbb{E}_{\Pb_\pi}\Big[\prod_{(i,j)\in E(S_1)}|\overline{A}_{i,j}|\prod_{(i,j)\in E(S_2)}|\overline{B}_{i,j}|\Big]\,, \label{eq-first-moment-step-1}
\end{align}
In addition, we have that 
\begin{align}
    &\mathbb{E}_{\Pb_\pi}\prod_{(i,j)\in E(S_1)}|\overline{A}_{i,j}|\prod_{(i,j)\in E(S_2)}|\overline{B}_{i,j}| \nonumber \\
    =& \prod_{(i,j)\in E(S_0)}\mathbb{E}_{\Pb_\pi}|\overline{A}_{i,j}\overline{B}_{\pi(i),\pi(j)}| \prod_{(i,j)\in E(S_1)\setminus E(S_0)}\mathbb{E}_{\Pb_\pi}|\overline{A}_{i,j}|\prod_{(i,j)\in E(S_2)\setminus E(S_0')}\mathbb{E}_{\Pb_{\pi}}|\overline{B}_{i,j}| \nonumber \\
    =& \big[ q(1-q)\big(\rho(1-2q)^2+4q(1-q)\big) \big]^{|E(S_0)|} \cdot \big[2q(1-q)\big]^{|E(S_1)|+|E(S_2)| -2|E(S_0)|} \,, \label{eq-first-moment-step-2}
\end{align}
where the last equality follows from the fact that $\rho = \frac{s(1-q)}{1-qs}$ as in Definition~\ref{def-correlated-random-graph}.
Note that for $|E(S_0)|\le d=n^{o(1)}$ and $\rho\ge 1/d$, it holds that 
\[
\big[ q(1-q)\big(\rho(1-2q)^2+4q(1-q)\big) \big]^{|E(S_0)|}= \big[1+o(1)\big]\rho^{|E(S_0)|}q^{|E(S_0)|} \,.
\]
Combined with \eqref{eq-first-moment-step-1} and \eqref{eq-first-moment-step-2}, it yields that 
\[
\big|\mathbb E_{\Pb'_\pi}[\phi_{S_1,S_2}]\big|\le \big[1+o(1)\big]\rho^{|E(S_0)|}\big(4q\big)^{\frac{|E(S_1)|+|E(S_2)|}{2}-|E(S_0)|}\,.\qedhere
\]
\end{proof}
In addition to Lemma~\ref{lem-L^1-bound}, the following proposition provides another bound on the quantity $\big|\mathbb E_{\Pb_\pi'}[\phi_{S_1,S_2}]\big|$ in terms of $|V(S_0)|,|V(S_1)|$ and $|V(S_2)|$.
\begin{proposition}{ \label{prop-another-bound-conditional-expectation} }
    With the same notations as in Lemma~\ref{lem-L^1-bound}, it holds that 
    \[
    \big|\mathbb E_{\Pb_\pi'}[ \phi_{S_1,S_2}]\big| \leq
    [1+o(1)]\cdot \rho^{|E(S_0)|}\big(nd^{20}\big)^{-\frac{|V(S_1)|+|V(S_2)|}{2}+|V(S_0)|} 2^{2+|E(S_1)|+|E(S_2)|-2|E(S_0)|} \,.
    \]
\end{proposition}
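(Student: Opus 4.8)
The plan is to condition on the mother graph $G$ and treat the truncation event $\Gc$ as a perturbation. Write $\mathbb E_{\Pb_\pi'}[\,\cdot\,]=\Pb[\Gc]^{-1}\mathbb E_{\Pb_\pi}[\,\cdot\,\mathbf 1_\Gc]$, where $\Pb[\Gc]=1-o(1)$ by Lemma~\ref{eq-Gc-is-typical} (whose proof conditioned on $\pi_*=\pi$ already gives $\Pb_\pi[\Gc]=1-o(1)$). Conditionally on $G$ the two independent sub-sampling steps producing $A$ and $B$ act edge by edge, so, with $S_0=S_1\cap\pi^{-1}(S_2)$ and with $E_{\mathrm{unm}}$ the ``unmatched'' edge set of Lemma~\ref{lem-L^1-bound} (i.e.\ the disjoint union of $E(S_1)\setminus E(S_0)$ with $E(\pi^{-1}(S_2))\setminus E(S_0)$), writing $\overline G_e=G_e-p$ and using $q=ps$, one gets
\[
\mathbb E_{\Pb_\pi}\big[\phi_{S_1,S_2}\mid G\big]=\big(q(1-q)\big)^{-\frac{|E(S_1)|+|E(S_2)|}{2}}s^{|E_{\mathrm{unm}}|}\prod_{e\in E(S_0)}\!\big(\rho q(1-q)+s(s-2q)\overline G_e\big)\prod_{e\in E_{\mathrm{unm}}}\!\overline G_e .
\]
Expanding the product over $E(S_0)$ expresses $\mathbb E_{\Pb_\pi'}[\phi_{S_1,S_2}]$ as a sum over $U\subseteq E(S_0)$ of terms, the $U$-term being a constant times $\mathbb E_G[\mathbf 1_\Gc\prod_{e\in E(F_U)}\overline G_e]$ with $F_U=U\sqcup E_{\mathrm{unm}}$; collecting the powers of $q,p,s$ and using $s=[1+o(1)]\rho$ — valid since $\rho\ge 1/d$ and $qd^2=o(1)$ — the $U$-term constant simplifies, after pairing with the estimate below, to $[1+o(1)]\,\rho^{|E(S_0)|}q^{|E_{\mathrm{unm}}|/2}$ times a pure combinatorial factor.

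\paragraph{The key estimate.}
The heart of the matter is a bound on $\mathbb E_G[\mathbf 1_\Gc\prod_{e\in E(F)}\overline G_e]$ for an arbitrary $F\Subset\operatorname{K}_n$ with $|V(F)|\le d^2$, of the shape ``each edge of $F$ costs a factor $\asymp p$ and each vertex a factor $(n^{1+4/d}d^{20})^{-1}$, with a surplus $d^{-6}$ on every edge''. To prove it, note that $\prod_e\overline G_e$ is centred under the product-Bernoulli law of $G$, so $\mathbb E_G[\mathbf 1_\Gc\prod_e\overline G_e]=-\mathbb E_G[\mathbf 1_{\Gc^c}\prod_e\overline G_e]$; equivalently, expanding $\prod_e\overline G_e=\sum_{F'\subseteq F}(-p)^{|E(F)|-|E(F')|}\mathbf 1_{F'\subseteq G}$ and using that $\mathbf 1_\Gc\mathbf 1_{F'\subseteq G}=0$ whenever $F'$ is inadmissible (here $|V(F')|\le d^2$ is used), the $p^{|E(F')|}$-part of $\Pb[F'\subseteq G,\Gc]=p^{|E(F')|}-\Pb[F'\subseteq G,\Gc^c]$ cancels via $\sum_{F'\subseteq F}(-1)^{|E(F)|-|E(F')|}=0$ up to the contributions of inadmissible $F'$. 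One is left to control $\Pb[F'\subseteq G,\Gc^c]$ for admissible $F'$ and these ``incomplete-cancellation'' terms; the workhorse is the union bound $\Pb[F'\subseteq G,\Gc^c]\le\sum_{\mathbf B\ \mathrm{bad},\,|V(\mathbf B)|\le d^2}\sum_{B\cong\mathbf B}p^{|E(F'\cup B)|}$, where the defining inequality $\Phi(\mathbf B)<(\log n)^{-1}$ of a bad graph, combined with $p\le qd$ (recall $\rho\ge 1/d$), turns the $p$-powers on the edges of $B$ outside $F'$ into factors $(n^{1+4/d}d^{20})^{-1}$ on its vertices outside $F'$, while the large surplus exponents $d^{20},d^{6}$ and the extra $n^{4/d}$ built into $\Phi$ (Definition~\ref{def-addmisible}) dominate the vertex-placement count $\#\{B\cong\mathbf B\}\le n^{|V(\mathbf B)|}$, the number of isomorphism classes with prescribed vertex and edge counts, and the combinatorial factors produced by the incomplete cancellation. \textbf{This is the step I expect to be the main obstacle}: the supergraph $B$ may overlap $F$ along arbitrary subsets of vertices and edges, so one must organise the union bound by the ``attachment pattern'' of $B$ to $F$, treat the case that $F$ (equivalently $F_U$, or the overlap graph $S_1\cup\pi^{-1}(S_2)$) is itself very dense — where $|V(F)|-|E(F)|<0$ and one instead exploits that $p^{|E(F)|}$ is minuscule together with $\Phi$-submultiplicativity (Lemma~\ref{lemma-facts-graphs}) — and verify in every case that the built-in slack in $\Phi$ suffices; this is exactly why the truncation is imposed on subgraphs of $G$ with up to $d^{2}$ (rather than $d$) vertices, and it will most plausibly be carried out in the appendix.

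\paragraph{Assembly.}
Feeding the key estimate back into the $U$-expansion and combining all powers of $p,q,s$ through $q=ps$, the $U$-term is bounded by $[1+o(1)]\,\rho^{|E(S_0)|}q^{|E_{\mathrm{unm}}|/2}C^{|E(F_U)|}d^{-6|E(F_U)|}\big(n^{1+4/d}d^{20}\big)^{-|V(F_U)|+|E(F_U)|}$ for an absolute constant $C$. Using the elementary inequalities $|V(F_U)|\ge |V(E_{\mathrm{unm}})|\ge\frac{|V(S_1)|+|V(S_2)|}{2}-|V(S_0)|$ — because every vertex of $S_i$ outside $V(S_0)$ meets an edge of $E_{\mathrm{unm}}$, so $E(S_i)\setminus E(S_0)$ already spans at least $\frac{|V(S_i)|-|V(S_0)|}{2}$ vertices — together with the companion bound $|E_{\mathrm{unm}}|\ge\frac{|V(S_1)|+|V(S_2)|}{2}-|V(S_0)|$, the spare factor $q^{|E_{\mathrm{unm}}|/2}$ (with $q\le n^{-1+o(1)}$) absorbs any shortfall when $F_U$ is dense, yielding $(nd^{20})^{-\frac{|V(S_1)|+|V(S_2)|}{2}+|V(S_0)|}$ as required; finally the surplus $d^{-6}$ per edge makes $\sum_{U\subseteq E(S_0)}C^{|E(F_U)|}d^{-6|E(F_U)|}$ bounded, and whatever remains is dominated by the factor $2^{2+|E(S_1)|+|E(S_2)|-2|E(S_0)|}$ in the target bound. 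This completes the plan.
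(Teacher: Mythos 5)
Your overall framework — condition on the mother graph $G$, write the answer as a truncated expectation $\widehat{\mathbb E}'[\,\cdot\,]$, and exploit the cancellation $\mathbb E_G[\prod_e\overline G_e]=0$ — coincides with the paper's. Where you diverge is the decomposition and, more seriously, the form of the ``key estimate,'' which you openly flag as unproved and which I believe is false as stated. Your target bound is $\big|\mathbb E_G[\mathbf 1_\Gc\prod_{e\in E(F)}\overline G_e]\big|\lesssim p^{|E(F)|}\,(n^{1+4/d}d^{20})^{-|V(F)|}\,d^{-6|E(F)|}$. Test it on $F$ a single edge $e$: $\mathbb E_G[\mathbf 1_\Gc\,\overline G_e]=p(1-p)\big(\Pb[\Gc\mid I_e=1]-\Pb[\Gc\mid I_e=0]\big)$, so its magnitude is $p(1-p)\,\Pb[e\text{ pivotal for }\Gc]$. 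Since the smallest bad graphs have $4$ vertices and $5$ edges (the single edge and any $3$--vertex graph have $\Phi\ge(\log n)^{-1}$), placing such a bad graph through $e$ gives $\Pb[e\text{ pivotal}]\gtrsim n^2p^4$, hence $\big|\mathbb E_G[\mathbf 1_\Gc\overline G_e]\big|\gtrsim n^2p^5$, whereas your bound asserts $\lesssim p\,d^{-6}(nd^{20})^{-2}$; the ratio is $\asymp n^4p^4d^{46}\gg1$. The discrepancy is structural: the cancellation over a set $\mathsf E$ of edges does not produce a factor of $(n^{1+4/d}d^{20})^{-1}$ per vertex of $\mathsf E$ \emph{on top of} the raw $p^{|\mathsf E|}$ edge cost. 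The paper's Lemma~\ref{lemma-conditional-expectation-edge-products} gives the correct shape, $\big|\widehat{\mathbb E}'[\varphi\prod_{\mathsf e\in\mathsf E}(sI_{\mathsf e}-q)]\big|\lesssim 2^{|\mathsf E|}\widehat{\mathbb E}[|\varphi|]\,(nd^{20})^{-|V(\mathsf E)\setminus V(S)|}$, i.e.\ after dividing by $s^{|\mathsf E|}$ each edge of $F$ costs $\asymp 2/s$ (large, not small), and only vertices \emph{outside} the conditioning set $S$ save a factor $(nd^{20})^{-1}$; the corresponding improvement over the raw bound $(2q)^{|\mathsf E|}$ is $\asymp q^{-|\mathsf E|}(nd^{20})^{-|V(\mathsf E)\setminus V(S)|}$, which for a tree is about $(nq)^{-|\mathsf E|}(nd^{20})^{-1}$, far weaker than the per-edge and per-vertex savings you posit.

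This matters because your assembly step is built on the optimistic estimate. If one substitutes the correct bound $\big|\mathbb E_G[\mathbf 1_\Gc\prod_{e\in F_U}\overline G_e]\big|\lesssim(2/s)^{|E(F_U)|}(nd^{20})^{-|V(F_U)|}$ into your expansion over $U\subseteq E(S_0)$, the coefficient of the $U$-term acquires a factor $q^{-|U|}$ which is not offset by any shrinkage in $(nd^{20})^{-|V(F_U)|}$ (adding an edge of $S_0$ whose endpoints already lie in $V(E_{\mathrm{unm}})$ leaves $|V(F_U)|$ unchanged), so $\sum_{U\subseteq E(S_0)}$ blows up. This is precisely why the paper does not expand the matched edges: it keeps all $E(S_0)$ and $E(S_1)\setminus E(S_0)$ edges bundled inside an $\mathcal F_{S_1}$-measurable weight $\varphi_1$ (so they enter only through $\widehat{\mathbb E}[|\varphi_1|]\approx\rho^{|E(S_0)|}(2\sqrt q)^{|E(S_1)|-|E(S_0)|}$) and applies the cancellation lemma only to the genuinely new edges $\mathsf E=E(\pi^{-1}(S_2))\setminus E(S_0)$, then symmetrizes in $S_1,S_2$. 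So the gap is not merely a missing appendix: the stated key estimate is quantitatively wrong, and the decomposition you built around it would need the paper's bundling of matched edges into $\varphi_1$ (together with the self-admissibility/densest-subgraph machinery of Lemmas~\ref{lemma-containment-maximum-graph} and~\ref{lemma-expand-graph}) to close the argument.
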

\noindent The proof of Proposition~\ref{prop-another-bound-conditional-expectation} is the most technical part of this paper and we postpone it into the appendix. 

With these results in hand, we can now complete the proof of Proposition~\ref{prop-controlling-expectation}. Firstly, it is straightforward to  check that under the same notations as in Lemma~\ref{lem-L^1-bound}, we have the following estimation (see Section~\ref{subsec-B3} for details):
    \begin{align}
        \min&\Big\{(4q)^{\frac{|E(S_1)|+|E(S_2)|}{2}-|E(S_0)|},\big(n d^{20}\big)^{-\frac{|V(S_1)|+|V(S_2)|}{2}+|V(S_0)|} 2^{2+|E(S_1)|+|E(S_2)|-2|E(S_0)|} \Big\} \nonumber \\
        &\qquad\qquad\quad\leq\ n^{-\frac{|V(S_1)|+|V(S_2)|}{2}+|V(S_0)|} d^{-7( |E(S_1)|+|E(S_2)|-2|E(S_0)| )} \,. \label{eq-quanlitative-estimation}
    \end{align}
For $S_0,S_1,S_2$ with $S_0\xhookrightarrow{} S_i,i=1,2$, define 
\[
M(S_0,S_1,S_2)=\rho^{|E(S_0)|}n^{-\frac{|V(S_1)|+|V(S_2)|}{2}+|V(S_0)|} d^{-7(|E(S_1)|+|E(S_2)|-2|E(S_0)|)} \,.
\]
Combining \eqref{eq-quanlitative-estimation} with Lemma~\ref{lem-L^1-bound} and Proposition~\ref{prop-another-bound-conditional-expectation} yields that conditioned on $\pi_*=\pi$, we have 
\begin{equation}{\label{eq-bound-fixed-pi-expectation}}
    \big|\mathbb E_{\Pb'_\pi}[\phi_{S_1,S_2}]\big|\le \big[1+o(1)\big]M(S_0,S_1,S_2)
\end{equation}
for any $S_1,S_2$ with at most $d$ edges and $S_0=S_1\cap \pi^{-1}(S_2)$. With these estimates in hand, we are ready to provide the proof of Proposition~\ref{prop-controlling-expectation}.
\begin{proof}[Proof of Proposition~\ref{prop-controlling-expectation}]
From \eqref{eq-relaxation-1}, by grouping the permutations $\pi\in \operatorname{S}_n$ according to the realization of $S_0=S_1\cap \pi^{-1}(S_2)$, we see $\big|\mathbb E_{\Pb'}[\phi_{S_1,S_2}]\big|$ is upper-bounded by
\begin{equation}\label{eq-relaxation-3}
    \frac{1}{n!}\sum_{\substack{S_0,S_0':S_0\cong S_0'\\S_0\subset S_1,S_0'\subset S_2}}\#\{\pi\in \operatorname{S}_n:S_0=S_1\cap \pi^{-1}(S_2),S_0'=\pi(S_0)\}\cdot M(S_0,S_1,S_2) \,.
\end{equation}
Note that for a permutation $\pi$ such that $\pi(S_0)=S_0'$, the vertices of $S_0$ must be mapped to that of $S_0'$, and the number of ways to map these vertices is given by $\operatorname{Aut}(S_0)$. As a result, we see for each $S_0$, it holds that
\begin{equation}\label{eq-relaxation-4}
\#\{\pi\in \operatorname{S}_n:S_0=S_1\cap\pi^{-1}(S_2),S_0'=\pi(S_0)\}\le\operatorname{Aut}(S_0)\cdot(n-|V(S_0)|)!\,.
\end{equation} 
Since $(n-|V(S_0)|)!/n!\le \big[1+o(1)\big]n^{-|V(S_0)|}$, we have that \eqref{eq-relaxation-3} is further upper-bounded by (up to a factor of $(1+o(1))$)
\begin{align*}
&\sum_{\substack{S_0,S_0':S_0\cong S_0'\\S_0\subset S_1,S_0'\subset S_2}}n^{-|V(S_0)|} \operatorname{Aut}(S_0)M(S_0,S_1,S_2)\\
=&\sum_{\substack{\mathbf H_0\in \mathcal H\\\mathbf H_0\xhookrightarrow{} S_i,i=1,2}} \frac{\operatorname{Aut}(\mathbf H_0)}{n^{|V(\mathbf H_0)|}} M(\mathbf H_0,S_1,S_2)\cdot \#\{(S_0,S_0'):S_0\subset S_1,S_0'\subset S_2,S_0\cong S_0'\cong \mathbf H_0\}\\
\le&\sum_{\substack{\mathbf H_0\in \mathcal H\\\mathbf H_0\xhookrightarrow{} S_i,i=1,2}} \frac{\operatorname{Aut}(\mathbf H_0)}{n^{|V(\mathbf H_0)|}} M(\mathbf H_0,S_1,S_2)\cdot d^{|E(S_1)|+E(S_2)-2|E(\mathbf H_0)|}\,,
\end{align*}
where the last inequality follows from Lemma~\ref{lemma-facts-graphs} (v) and the assumption that $|E(S_1)|$ and $|E(S_2)|$ are bounded by $d$. 
After rearranging the terms, this concludes \eqref{eq-control-expectation}.
\end{proof}

We are now ready to establish $\eqref{eq-evidence-detection}'$.
\begin{proof}[Proof of Theorem~\ref{thm-sparse-regime}]
    Recall from Lemma~\ref{lem-optimal-results-conditioning} that it suffices to show the optimal signal-to-noise ratio over $f\in \mathcal P_{n,d}'$ is $O(1)$. We estimate the optimal signal-to-noise ratio characterized by \eqref{eq-optimal-signal-to-noise-ratio-conditioning} via Proposition~\ref{prop-controlling-expectation} as follows:
    \begin{align*}
        & \sum_{\phi_{S_1,S_2}\in\mathcal{O}_{d}'} \big(\mathbb E_{\Pb'}[\phi_{S_1,S_2}]\big)^2 \\
        \le &\ [1+o(1)] \sum_{\substack{S_1,S_2\Subset\operatorname{K}_n\text{admissible}\\|E(S_1)|+|E(S_2)|\le d}} \Bigg( \sum_{\substack{\mathbf H_0\in\mathcal H\\\mathbf H_0\xhookrightarrow{} S_i,i=1,2}} \frac{ \rho^{|E(\mathbf H_0)|}\operatorname{Aut}(\mathbf H_0) }{ d^{6(|E(S_1)|+|E(S_2)|-2|E(\mathbf H_0)|)} n^{\frac{1}{2}(|V(S_1)|+|V(S_2)|)}  } \Bigg)^2 \\
        \leq &\ [1+o(1)] \sum_{\substack{S_1,S_2\Subset\operatorname{K}_n\text{admissible}\\|E(S_1)|+|E(S_2)|\le d}} \Bigg( \sum_{\substack{\mathbf H_0\in\mathcal H\\\mathbf H_0\xhookrightarrow{} S_i,i=1,2}} \frac{ \rho^{2|E(\mathbf H_0)|} \operatorname{Aut}(\mathbf H_0)^2 }{ d^{6(|E(S_1)|+|E(S_2)|-2|E(\mathbf H_0)|)} n^{|V(S_1)|+|V(S_2)|}  } \Bigg) \\
       & \times \Bigg( \sum_{\substack{\mathbf H_0\in\mathcal H\\\mathbf H_0\xhookrightarrow{} S_i,i=1,2}}d^{-6(|E(S_1)|+|E(S_2)|-2|E(\mathbf H_0)|)} \Bigg) \,,
    \end{align*}
    where the last inequality comes form Cauchy-Schwartz inequality.
    It is straightforward to check that when $|E(S_1)|+|E(S_2)|\le d$,
    \begin{align}
        \sum_{\substack{\mathbf H_0\in\mathcal H\\\mathbf H_0\xhookrightarrow{} S_i,i=1,2}}d^{-6(|E(S_1)|+|E(S_2)|-2|E(\mathbf H_0)|)} \le &\ \Big(\sum_{\substack{\mathbf H_0\in\mathcal H\\\mathbf H_0\xhookrightarrow{} S_1}}d^{-6(|E(S_1)|-|E(\mathbf H_0)|)}\Big)^2 \nonumber \\
        \le&\ \Big(\sum_{k=0}^d d^{-6k}\cdot d^k\Big)^2\leq 2 \,, \label{eq-xxx}
    \end{align}
    where the second inequality follows from Lemma~\ref{lemma-facts-graphs} (v). 
    Therefore, the optimal signal-to-noise ratio is upper-bounded by a constant factor times
    \begin{equation}\label{eq-relaxation-5} 
    \begin{aligned}
    & \sum_{\substack{S_1,S_2\Subset\operatorname{K}_n\text{admissible}\\|E(S_1)|+|E(S_2)|\le d}} \sum_{\substack{\mathbf H_0\in\mathcal H\\\mathbf H_0\xhookrightarrow{} S_i,i=1,2}}\frac{ \rho^{2|E(\mathbf H_0)|} \operatorname{Aut}(\mathbf H_0)^2 }{ d^{6(|E(S_1)|+|E(S_2)|-2|E(\mathbf H_0)|)} n^{|V(S_1)|+|V(S_2)|} } \\
    = & \sum_{
    \mathbf{H}_0\in\mathcal H } \rho^{2|E(\mathbf H_0)|} \operatorname{Aut}(\mathbf H_0)^2 \sum_{ \substack{ S_1,S_2\Subset\operatorname{K}_n\text{admissible}\\\mathbf H_0\xhookrightarrow{} S_i,i=1,2 \\ |E(S_1)|+|E(S_2)|\le d } } 
    \frac{1}{d^{6(|E(S_1)|+|E(S_2)|-2|E(\mathbf H_0)|)}n^{|V(S_1)|+|V(S_2)|}}  \,.
    \end{aligned}
    \end{equation}
    By denoting $S_1 \cong \mathbf{S}_1$ and $S_2 \cong \mathbf{S}_2$ with $\mathbf S_1,\mathbf S_2\in \mathcal H$, the right hand side of \eqref{eq-relaxation-5} reduces to
    \begin{equation}\label{eq-relaxation-6}
    \begin{aligned}
        & \sum_{ \mathbf{H}_0 \in \mathcal H} \rho^{2|E(\mathbf H_0)|} \operatorname{Aut}(\mathbf H_0)^2 \sum_{ \substack{ \mathbf S_1,\mathbf S_2\in \mathcal H\text{ admissible}\\\mathbf H_0\xhookrightarrow{} \mathbf{S}_i,i=1,2 \\ |E(\mathbf S_1)|+|E(\mathbf S_2)|\le d } } \frac{ \# \{ S_1,S_2 : S_1 \cong \mathbf{S}_1, S_2 \cong \mathbf{S}_2 \} }{ d^{6(|E(\mathbf S_1)|+|E(\mathbf S_2)|-2|E(\mathbf H_0)|)} n^{|V(\mathbf S_1)|+|V(\mathbf S_2)|} } \\
        \leq & \sum_{ \mathbf{H}_0 \in \mathcal H\operatorname{admissible} } \rho^{2|E(\mathbf H_0)|} \sum_{ \substack{ \mathbf S_1,\mathbf S_2\in \mathcal H,\mathbf H_0\xhookrightarrow{} \mathbf{S}_i,i=1,2 \\ |E(\mathbf S_1)|+|E(\mathbf S_2)|\le d } } \frac{ \operatorname{Aut}(\mathbf H_0)^2 }{ \operatorname{Aut}(\mathbf S_1) \operatorname{Aut}(\mathbf S_2) d^{ 6(|E(\mathbf S_1)|+|E(\mathbf S_2)|-2|E(\mathbf H_0)|) } } \,.
    \end{aligned}
    \end{equation}
    Using Lemma~\ref{lemma-facts-graphs} (iii), we have $\frac{ \operatorname{Aut}(\mathbf H_0)^2 }{ \operatorname{Aut}(\mathbf S_1) \operatorname{Aut}(\mathbf S_2) } \leq d^{ 2(|E(\mathbf S_1)|+|E(\mathbf S_2)|-2|E(\mathbf H_0)|) }$, and thus the right hand side of \eqref{eq-relaxation-6} is upper-bounded by
    \begin{equation}
    \begin{aligned}
        &\ \sum_{ \mathbf{H}_0\in \mathcal H \operatorname{admissible} } \rho^{2|E(\mathbf H_0)|} \sum_{ \substack{ \mathbf S_1,\mathbf S_2\subset \mathcal H, \mathbf H_0\xhookrightarrow{} \mathbf{S}_i,i=1,2 \\ |E(\mathbf S_1)|+|E(\mathbf S_2)|\le d } } d^{ -4(|E(\mathbf S_1)|+|E(\mathbf S_2)|-2|E(\mathbf H_0)|) } \\
        \leq&\ 4 \sum_{ \substack{\mathbf{H}_0\in \mathcal H \operatorname{admissible}\\ |E(\mathbf H_0)|\le d} } \rho^{2|E(\mathbf H_0)|} \,,
    \end{aligned}
    \end{equation}
    where we used the fact that (similar to \eqref{eq-xxx})
\begin{align*}
&\ \sum_{ \substack{ \mathbf S_1,\mathbf S_2\in \mathcal H, \mathbf H_0\xhookrightarrow{} \mathbf{S}_i,i=1,2 \\ |E(\mathbf S_1)|+|E(\mathbf S_2)|\le d } } d^{ -4(|E(\mathbf S_1)|+|E(\mathbf S_2)|-2|E(\mathbf H_0)|) }\\
\le&\  \Bigg(\sum_{\substack{\mathbf H_0\xhookrightarrow{} \mathbf S,|E(\mathbf S)|\le d}}d^{-4(|E(\mathbf S)|-|E(\mathbf H)|)}\Bigg)^2\le \Bigg(\sum_{k=0}^dd^{-4k}\cdot (2d)^{2k}\Bigg)^2\le 4\,.
\end{align*}
Furthermore, we have that 
    \begin{align}\label{eq-O(1)}
        \sum_{ \substack{\mathbf{H}_0 \in \mathcal H \operatorname{admissible}\\|E(\mathbf H_0)|\le d }} \rho^{2|E(\mathbf H_0)|} = \sum_{k=1}^{d} \rho^{2k} \# \big\{ \mathbf{H}_0 \in \mathcal H \operatorname{admissible} : |E(\mathbf H_0)|=k \big\} \,.
    \end{align}
    Now since $\rho^2<\alpha-\varepsilon$, we see from Lemma~\ref{lemma-enumerate-unlabeled-graph-bounded-density} that \eqref{eq-O(1)} is bounded by $2\alpha C/\varepsilon$ (here $C$ is the constant in the statement of Lemma~\ref{lemma-enumerate-unlabeled-graph-bounded-density}) for any sufficiently large $n$. This shows that the optimal signal-to-noise ratio is uniformly bounded for all $n$ and thus proves $\eqref{eq-evidence-detection}'$.   
\end{proof}

\appendix
\section{Preliminary facts about graphs}

\begin{lemma}{\label{lemma-facts-graphs}}
Let $S,T\Subset \operatorname{K}_n$ satisfy $S\cong\mathbf{S}$ and $T\cong\mathbf{T}$ for some $\mathbf{S},\mathbf{T}\in \mathcal H$. Recall $S\cup T,S\cap T\Subset \operatorname{K}_n$ defined as edge-induced graphs of $\operatorname{K}_n$. We have the following hold:
\begin{enumerate}
    \item[(i)] $|V(S\cup T)|+|V(S\cap T)|\le|V(S)|+|V(T)|$, and $|E(S\cup T)|+|E(S\cap T)|=|E(S)|+|E(T)|$.
    \item[(ii)] $\Phi(S\cup T)\Phi(S\cap T)\le\Phi(S)\Phi(T)$.
    \item[(iii)] If $\mathbf{S}\subset\mathbf{T}$, then $|\!\operatorname{Aut}(\mathbf{S})|\leq |\!\operatorname{Aut}(\mathbf{T})|\cdot |V(\mathbf{T})|^{ 2(|E(\mathbf{T})|-|E(\mathbf{S})|) }$.
    \item[(iv)] $\#\big\{ T'\Subset \operatorname{K}_n:S\subset T',|V(T')|-|V(S)|=k,|E(T')|-|E(S)|=l \big\} \leq n^{k}(|V(S)|+k)^{2l}$.
    \item[(v)] $\#\big\{ T' \subset S : |E(S)|-|E(T')|=k \big\} \le |E(S)|^k$.
\end{enumerate}
\end{lemma}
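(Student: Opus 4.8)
The plan is to handle the five items separately and in order; only item~(iii) requires an actual idea, the remaining ones being routine bookkeeping. For (i), note that $S\cup T$ and $S\cap T$ are by definition the edge-induced subgraphs of $\operatorname{K}_n$ on $E(S)\cup E(T)$ and $E(S)\cap E(T)$, so the edge identity $|E(S\cup T)|+|E(S\cap T)|=|E(S)|+|E(T)|$ is just inclusion--exclusion for sets. For the vertex inequality, every vertex of $S$ is incident to an edge of $S\subseteq S\cup T$, whence $V(S\cup T)=V(S)\cup V(T)$, while any vertex incident to an edge of $E(S)\cap E(T)$ is incident both to an edge of $S$ and to an edge of $T$, whence $V(S\cap T)\subseteq V(S)\cap V(T)$; adding these two relations and using $|A\cup B|+|A\cap B|=|A|+|B|$ gives the claim. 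For (ii), substitute the product formula $\Phi(H)=\bigl(n^{1+4/d}d^{20}\bigr)^{|V(H)|}\bigl(qd^{6}\bigr)^{|E(H)|}$ into both sides: by the edge identity of (i) the two sides carry exactly the same power of $qd^{6}$, and by the vertex inequality of (i) the exponent of $n^{1+4/d}d^{20}$ is no larger on the left than on the right; since $n^{1+4/d}d^{20}\ge 1$, the inequality follows.

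Item~(iii) is the main point. Choosing isomorphism representatives we may assume $E(\mathbf S)\subseteq E(\mathbf T)$, hence $V(\mathbf S)\subseteq V(\mathbf T)$; set $F=E(\mathbf T)\setminus E(\mathbf S)$ and $k=|F|=|E(\mathbf T)|-|E(\mathbf S)|$. Each $\sigma\in\operatorname{Aut}(\mathbf S)$ extends to a permutation $\bar\sigma$ of $V(\mathbf T)$ acting as the identity on $V(\mathbf T)\setminus V(\mathbf S)$; the map $\sigma\mapsto\bar\sigma$ is an injective group homomorphism, and $\bar\sigma(E(\mathbf S))=E(\mathbf S)$ because $\sigma$ permutes $E(\mathbf S)$. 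Let $\operatorname{Aut}(\mathbf S)$ act on the $k$-element subsets of $\binom{V(\mathbf T)}{2}$ through $\bar\sigma$. If $\sigma$ stabilizes $F$, then $\bar\sigma(E(\mathbf T))=\bar\sigma(E(\mathbf S))\cup\bar\sigma(F)=E(\mathbf S)\cup F=E(\mathbf T)$, so $\sigma\mapsto\bar\sigma$ sends $\operatorname{Stab}(F)$ injectively into $\operatorname{Aut}(\mathbf T)$, giving $|\operatorname{Stab}(F)|\le|\operatorname{Aut}(\mathbf T)|$. The orbit of $F$ is contained in the set of all $k$-subsets of $\binom{V(\mathbf T)}{2}$, of cardinality at most $\binom{\binom{|V(\mathbf T)|}{2}}{k}\le|V(\mathbf T)|^{2k}$, so orbit--stabilizer yields $|\operatorname{Aut}(\mathbf S)|=|\operatorname{orbit}(F)|\cdot|\operatorname{Stab}(F)|\le|V(\mathbf T)|^{2k}\,|\operatorname{Aut}(\mathbf T)|$, which is the asserted bound.

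For (iv) and (v) I would just count directly. A graph $T'$ with $S\subset T'$, $|V(T')|=|V(S)|+k$ and $|E(T')|=|E(S)|+l$ is obtained by choosing the $k$ vertices of $V(T')\setminus V(S)$ in at most $n^{k}$ ways and then the $l$ edges of $E(T')\setminus E(S)$ among the at most $\binom{|V(S)|+k}{2}$ pairs on $V(T')$, in at most $\binom{\binom{|V(S)|+k}{2}}{l}\le(|V(S)|+k)^{2l}$ ways (discarding the requirement that each new vertex be non-isolated only increases the count), which proves (iv). For (v), such a $T'$ is exactly $S$ with some $k$-subset of $E(S)$ deleted and resulting isolated vertices removed, so the number of such $T'$ is at most $\binom{|E(S)|}{k}\le|E(S)|^{k}$. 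The only step carrying genuine content is (iii), and even there no real obstacle is expected: the crux is simply the orbit--stabilizer bookkeeping that ties $\operatorname{Aut}(\mathbf S)$ to $\operatorname{Aut}(\mathbf T)$.
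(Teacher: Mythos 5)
Your items (i), (ii), (iv), (v) match the paper's proof essentially verbatim (inclusion--exclusion, direct substitution into $\Phi$, choosing vertices then edges, choosing which edges to delete). The interesting divergence is in (iii). The paper counts labeled embeddings: it writes the number of subgraphs of $\operatorname{K}_{|V(\mathbf T)|}$ on the full vertex set isomorphic to $\mathbf T$ as $|V(\mathbf T)|!/|\!\operatorname{Aut}(\mathbf T)|$, constructs an injection from these into pairs (labeled copy of $\mathbf S$, $k$-subset of potential extra edges), and reads off the inequality from the cardinalities. You instead extend each $\sigma\in\operatorname{Aut}(\mathbf S)$ to a permutation $\bar\sigma$ of $V(\mathbf T)$ fixing $V(\mathbf T)\setminus V(\mathbf S)$, let $\operatorname{Aut}(\mathbf S)$ act on $k$-subsets of $\binom{V(\mathbf T)}{2}$ via $\bar\sigma$, observe that the stabilizer of $F=E(\mathbf T)\setminus E(\mathbf S)$ injects into $\operatorname{Aut}(\mathbf T)$, and invoke orbit--stabilizer with the crude orbit bound $|V(\mathbf T)|^{2k}$. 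Both arguments are correct and yield the same bound; yours is a cleaner group-theoretic packaging of the same combinatorics, trading the paper's explicit enumeration of labeled copies for the orbit--stabilizer identity. (The paper's injection actually implicitly produces the slightly sharper factor $|V(\mathbf T)|^{2k}/(|V(\mathbf T)|-|V(\mathbf S)|)!$, but the lemma only claims the weaker bound, which both proofs attain.)
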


{\begin{proof}
    By definition, we have $V(S\cup T)=V(S)\cup V(T)$, $E(S\cup T)=E(S)\cup E(T)$ and $E(S\cap T)=E(S) \cap E(T)$. In addition, we have $V(S\cap T) \subset V(S) \cap V(T)$ because for any $i \in V(S \cap T)$, there exists some $j$ such that $(i,j) \in E(S\cap T)$ and thus $i \in V(S) \cap V(T)$. Therefore, (i) follows from the inclusion-exclusion formula.
    Building on this, (ii) follows directly from \eqref{eq-def-Phi}.
    
    For (iii), let $\mathcal I=\big\{H:V(H)= \{1,\ldots,|V(\mathbf{T})|\}, H\cong\mathbf{T}\big\}$, let $\mathcal{J}=\big\{H_0:V(H_0)\subset\{1,\ldots,|V(\mathbf{T})|\}, H_0\cong\mathbf{S}\big\}$, and let $\mathcal{E}=\big\{E\subset E(1,\ldots,|V(\mathbf{T})|):|E|=|E(\mathbf{T})|-|E(\mathbf{S})|\big\}$. For any $H\in\mathcal{I}$, since $H\cong\mathbf{T}$ we can take an arbitrary but prefixed subgraph $H_0\subset H$ with $H_0\cong\mathbf{S}$, and then we can define a mapping $\Gamma: \mathcal I \to \mathcal{J}\times\mathcal{E}$ by
    \begin{align*}
        H \overset{ \Gamma }{ \longrightarrow } (H_0, E(H) \setminus E(H_0)) \,.
    \end{align*}
    It is easy to check $\Gamma$ is an injection. Combining with the fact that $$|\mathcal{E}|\leq \binom{|V(\mathbf{T})|}{2}^{|V(\mathbf{T})|-|V(\mathbf{S})|}\leq |V(\mathbf{T})|^{2(|E(\mathbf{T})|-|E(\mathbf{S})|)}\,,$$ we conclude that
    \begin{align*}
        |\mathcal{I}|=\frac{|V(\mathbf{T})|!}{|\!\operatorname{Aut}(\mathbf{T})|} \leq|\mathcal J|\cdot|\mathcal{E}|\leq \frac{|V(\mathbf{T})|!}{(|V(\mathbf{T})|-|V(\mathbf{S})|)!|\!\operatorname{Aut}(\mathbf{S})|}|V(\mathbf{T})|^{2(|E(\mathbf{T})|-|E(\mathbf{S})|)} \,,
    \end{align*}
    which implies $\frac{|\!\operatorname{Aut}(\mathbf{S})|}{|\!\operatorname{Aut}(\mathbf{T})|} \leq |V(\mathbf{T})|^{2(|E(\mathbf{T})|-|E(\mathbf{S})|)}$, as desired.
    
    For (iv), we can produce all subgraphs $T' \Subset \operatorname{K}_n$ such that $S\subset T',|V(T')|-|V(S)|=k,|E(T')|-|E(S)|=l$ via the following procedure: (1) we choose the vertex set $V(T')\setminus V(S) \subset [n]$, and the number of such choices is at most $n^k$; (2) we choose the edge set $E(T')\setminus E(S)$ within $V(T')$, and the number of such choices is at most $|V(T')|^{2(|E(T')|-|E(S)|)}=(|V(S)|+k)^{2l}$. This completes the proof by applying the multiplication principle.

    For (v), it follows from the fact that we can produce all $T' \subset S$ with $|E(S)|-|E(T)|=k$ by deleting $k$ edges in $E(S)$.
\end{proof}

The following two lemmas pertain to admissible graphs. It is crucial to keep in mind that our definition of admissible graphs depends on both the parameter $n$ and the specific choices of $d = d_n$ and $q = q_n$. 
\begin{lemma}{\label{lem-density-admissible-graph}}
    For a graph $\mathbf H \in \mathcal H$, we define its maximal edge density to be $\max_{\mathbf H' \subset \mathbf H} \frac{|E(\mathbf H')|}{|V(\mathbf H')|}$. Then for sufficiently large $n$, every admissible graph has maximal edge density bounded by $1+\frac{8}{d}+\frac{\varepsilon}{10\log d}$.
\end{lemma}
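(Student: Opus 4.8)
The plan is to turn the combinatorial definition of admissibility into a single numerical inequality between $|V(\mathbf{H}')|$ and $|E(\mathbf{H}')|$ for each subgraph $\mathbf{H}' \subset \mathbf{H}$, and then read off the density bound from the asymptotics of $(n,q,d)$. Fix an admissible $\mathbf{H} \in \mathcal{H}$ and any subgraph $\mathbf{H}' \subset \mathbf{H}$, and write $v = |V(\mathbf{H}')| \ge 1$ and $e = |E(\mathbf{H}')|$. By the definition of admissibility, $\mathbf{H}'$ is not bad, i.e. $\Phi(\mathbf{H}') \ge (\log n)^{-1}$; recalling \eqref{eq-def-Phi} and taking logarithms this becomes
\[ v\big((1+\tfrac 4d)\log n + 20\log d\big) + e\big(\log q + 6\log d\big) \ \ge\ -\log\log n . \]

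Next I would solve this for $e/v$. Since $1 \le nq \le n^{o(1)}$ we have $-\log q = \log n - \log(nq)$ with $0 \le \log(nq) = o(\log n)$, and since \eqref{eq-D(n,q)} forces $\log d = o(\sqrt{\log n})$ we have $6\log d = o(\log n)$; hence $-\log q - 6\log d = (1-o(1))\log n$ is positive for all large $n$, so the displayed inequality rearranges (using $v \ge 1$ to absorb $(\log\log n)/v \le \log\log n$) to
\[ \frac{e}{v} \ \le\ \frac{(1+\tfrac 4d)\log n + 20\log d + \log\log n}{\log n - \log(nq) - 6\log d}. \]
Writing the denominator as $\log n - \delta$ with $\delta := \log(nq) + 6\log d = o(\log n)$, using $\frac{1}{1-\delta/\log n} \le 1 + \frac{2\delta}{\log n}$ for large $n$, and recalling $d \ge 100$, a short manipulation gives
\[ \frac{e}{v} \ \le\ 1 + \frac 4d + \frac{\log\log n + 44\log d + 4\log(nq)}{\log n}. \]

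It remains to check that the last summand is at most $\frac{\varepsilon}{10\log d}$ for all large $n$, which finishes the proof since the resulting bound $1 + \frac 4d + \frac{\varepsilon}{10\log d}$ is even stronger than the claimed $1 + \frac 8d + \frac{\varepsilon}{10\log d}$, and since the maximal edge density is the maximum of $e/v$ over all subgraphs $\mathbf{H}'$. For this I would bound each of the three terms by $\frac{\varepsilon}{30\log d}$: $\frac{\log\log n}{\log n} \le \frac{\varepsilon}{30\log d}$ and $\frac{44\log d}{\log n} \le \frac{\varepsilon}{30\log d}$ both follow from $\log d = o(\sqrt{\log n})$ (giving $\log d \cdot \log\log n = o(\log n)$ and $(\log d)^2 = o(\log n)$), while $\frac{4\log(nq)}{\log n} \le \frac{\varepsilon}{30\log d}$ is exactly the content of the other half of \eqref{eq-D(n,q)}, namely $\log d = o\big(\log n/\log(nq)\big)$. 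There is no serious obstacle here; the computation is routine arithmetic with logarithms. The only point that needs care is keeping straight which of the two constraints on $d$ in \eqref{eq-D(n,q)} disposes of which error term, and making sure the crude replacement $v \ge 1$ is enough to get a bound uniform over all subgraphs of $\mathbf{H}$.
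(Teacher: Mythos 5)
Your proof is correct and takes essentially the same approach as the paper's: both start from the admissibility inequality $\Phi(\mathbf H')\ge(\log n)^{-1}$, take logarithms, rearrange for $e/v$, and then use the two halves of $\log d=o\big(\tfrac{\log n}{\log nq}\wedge\sqrt{\log n}\big)$ to show the correction terms are eventually $\le\frac{\varepsilon}{10\log d}$. Your intermediate constants differ slightly (you even get $4/d$ in place of the paper's $8/d$, which is stronger), but the argument is the same.
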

\begin{proof}
    For any admissible graph $\mathbf{H}$ and any $\mathbf H' \subset \mathbf{H}$, using Definition~\ref{def-addmisible} we have
    \begin{align*}
    \frac{|E(\mathbf H')|}{|V(\mathbf H')|} &\leq \frac{(1+4/d)\log n+20\log d}{\log q^{-1}-6\log d}+\frac{\log\log n}{|V(\mathbf H')|(\log q^{-1}-6\log d)} \\
    &\leq 1+\frac{8}{d}+\frac{40\log d}{\log n}+\frac{\log nq}{\log n}+\frac{2\log\log n}{\log n} \le 1+\frac{8}{d}+\frac{\varepsilon}{10\log d}\,,
    \end{align*}
    provided that $n$ is large enough (the last inequality used ${\log d = o\big( \frac{\log n}{\log nq}\wedge  \sqrt{\log n}\big)}$). 
\end{proof}

\begin{lemma} {\label{lemma-enumerate-unlabeled-graph-bounded-density}}
For any fixed constant $\varepsilon\in(0,0.1)$, there exists a constant $C>0$ depending only on $\varepsilon$ such that for sufficiently large $n$, the number of isomorphic classes of admissible graphs with $N \le d$ edges are bounded by $C\cdot({\alpha}-\varepsilon/2)^{-N}$, where $\alpha$ is the Otter's constant.
\end{lemma}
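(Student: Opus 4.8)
The plan is to organize the count by connected components, to count connected admissible graphs according to their excess over a spanning tree, and to feed in Otter's enumeration of unlabelled trees as the base case. Concretely, write $t_m$ for the number of isomorphism classes of trees with $m$ edges; Otter's theorem gives $t_m=\Theta(\alpha^{-m}m^{-5/2})$, so in particular $t_m\le C_0\alpha^{-m}$ for an absolute constant $C_0$ and Otter's constant $\alpha$. The other input is Lemma~\ref{lem-density-admissible-graph}, whose role is to force admissible graphs to be ``close to forests''.

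The main step is to bound the number $c_m$ of isomorphism classes of connected admissible graphs with $m\le d$ edges. Such a graph on $v$ vertices has excess $r:=m-v+1\ge0$, and applying Lemma~\ref{lem-density-admissible-graph} to the graph itself gives $m/v\le 1+\delta$ with $\delta:=\tfrac{8}{d}+\tfrac{\varepsilon}{10\log d}$, hence $r\le m\delta+1$. Every connected graph with $m$ edges and excess $r$ arises from a spanning tree (a tree with $m-r$ edges on $v=m-r+1$ vertices) by adding $r$ further edges, and relative to a fixed representative of the tree's isomorphism class the added edges range over at most $\binom{\binom{v}{2}}{r}\le(m+1)^{2r}$ possibilities, so
\begin{equation*}
c_m\le\sum_{r=0}^{\lfloor m\delta\rfloor+1}t_{m-r}(m+1)^{2r}\le C_0\,\alpha^{-m}\sum_{r=0}^{\lfloor m\delta\rfloor+1}\big(\alpha(m+1)^2\big)^r.
\end{equation*}
Since $\alpha(m+1)^2\ge1$, this sum is at most an absolute constant times $(m+1)^{2m\delta+2}$, and the decisive observation is that, because $m\le d$ (so $\log(m+1)\le 2\log d$),
\begin{equation*}
2m\delta\log(m+1)\le\frac{16m\log(m+1)}{d}+\frac{\varepsilon m\log(m+1)}{5\log d}\le16\log(m+1)+\frac{2\varepsilon m}{5},
\end{equation*}
whence $(m+1)^{2m\delta}\le(m+1)^{16}e^{2\varepsilon m/5}$. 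Absorbing the remaining polynomial factor into the exponential for $m\ge m_0(\varepsilon)$ and using the trivial bound for $m<m_0(\varepsilon)$, I obtain a constant $C_1=C_1(\varepsilon)$, independent of $n$, with $c_m\le C_1\,\alpha^{-m}e^{\varepsilon m/2}$ for all $1\le m\le d$.

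To finish, note that every admissible graph with $N\le d$ edges is a disjoint union of connected admissible graphs (subgraphs of admissible graphs are admissible), each with at most $N\le d$ edges; hence, writing $g_N$ for the number of admissible isomorphism classes with $N$ edges,
\begin{equation*}
g_N\le[x^N]\prod_{m=1}^{d}(1-x^m)^{-c_m}\le x^{-N}\exp\Big(2\sum_{m=1}^{d}c_m x^m\Big)\qquad(0<x\le\tfrac12).
\end{equation*}
Taking $x=\alpha-\varepsilon/2$ (which lies in $(0,\tfrac12)$ for $\varepsilon\in(0,0.1)$) and inserting the bound on $c_m$, the exponent is at most $2C_1\sum_{m\ge1}\beta^m$ with $\beta=e^{\varepsilon/2}\big(1-\tfrac{\varepsilon}{2\alpha}\big)$; since $\tfrac{1}{2\alpha}>1$ one checks that $\beta<1$ for all $\varepsilon\in(0,0.1)$, so the series sums to a constant depending only on $\varepsilon$, giving $g_N\le C(\varepsilon)(\alpha-\varepsilon/2)^{-N}$.

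I expect the main obstacle to be the second step: one must verify that the subexponential overhead $(m+1)^{2m\delta}$ incurred by inserting the (up to $\sim m\delta$) excess edges stays of size $e^{O(\varepsilon m)}$, which works precisely because the $\tfrac{1}{\log d}$ in $\delta$ furnished by Lemma~\ref{lem-density-admissible-graph} cancels the factor $\log(m+1)\le2\log d$ coming from the range $m\le d$. A secondary subtlety is that the generating-function product may be taken only over $m\le d$, since $c_m$ is uncontrolled for $m>d$; but this is harmless because the components of an $N$-edge graph have at most $N\le d$ edges.
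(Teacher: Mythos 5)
Your proof is correct, and it takes a genuinely different route from the paper's, although both arguments rest on the same two ingredients: Otter's $O(\alpha^{-V})$ count of unlabelled trees, and the density bound of Lemma~\ref{lem-density-admissible-graph}, whose $\varepsilon/(10\log d)$ term is exactly what tames the $\exp(\Theta(\delta m\log m))$ overhead incurred by placing the excess edges. Where you differ is in the bookkeeping for disconnected graphs. The paper encodes an arbitrary admissible $\mathbf H$ directly by splitting its components into ``small'' ones (at most $L$ edges, recorded as a frequency vector over $\mathcal H_L$) and ``large'' ones (recorded by their sizes $X_1\ge\dots\ge X_m$, spanning trees, and excess-edge sets), then balances the competing error terms by setting $L=(\log N)^{1/3}$ and invokes the $\exp(O(\sqrt{N'}))$ partition-number bound to enumerate $(X_1,\ldots,X_m)$. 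You instead isolate the count $c_m$ of connected admissible graphs, prove $c_m\le C_1(\varepsilon)\alpha^{-m}e^{\varepsilon m/2}$, and then dispose of the multiset-of-components enumeration in one stroke via the Euler product $\prod_{m\le d}(1-x^m)^{-c_m}$ together with the coefficient bound $[x^N]f\le x^{-N}f(x)$ at $x=\alpha-\varepsilon/2$. This avoids both the ad hoc threshold $L$ and the partition-number estimate and is cleaner and more modular, at the cost of introducing generating functions; the paper's version is entirely elementary and exhibits an explicit encoding. One implicit step worth making explicit in your write-up: the map sending an admissible graph to the multiset of isomorphism classes of its connected components is injective (a graph without isolated vertices is determined up to isomorphism by this multiset), whereas, as you already note, a multiset of admissible connected graphs need not reassemble into an admissible graph --- which is harmless since you only need an upper bound on $g_N$.
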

\begin{proof}
Using Lemma~\ref{lem-density-admissible-graph}, for $n$ large enough, all admissible graphs with $N\le d$ edges have maximal edge density bounded by $1+\frac{8}{d}+\frac{\varepsilon}{10\log d}\le 1+\frac 8N+\frac{\varepsilon}{10\log N}$. Thus, denoting $\delta=\frac{8}{N}+\frac{\varepsilon}{10\log N}$ and letting $\mathcal A_{\delta,N}$ be the collection of (isomorphic classes of) graphs $\mathbf H\in \mathcal H$ with $N$ edges and maximal edge density at most $\delta$, it suffice to show that $|\mathcal{A}_{\delta,N}| \le C({\alpha}-\varepsilon/2)^{-N}$ for some constant $C$.

For a positive real number $L$ to be chosen later, let $\mathcal{H}_{L}=\{\mathbf J\in \mathcal H:|E(\mathbf J)|\le L\}$. For any $\mathbf H \in \mathcal{A}_{\delta,N}$, we perform the following procedure:
\begin{itemize}
    \item Decompose $\mathbf H$ into connected components $\mathbf H= \mathbf U_1 \sqcup \mathbf U_2 \ldots \sqcup \mathbf U_{k}$, and define the vector $\{ x(\mathbf J)\}_{ \mathbf J \in \mathcal{H}_{L} }$ given by $x(\mathbf J)=\#\{ 1 \leq i \leq k: \mathbf U_i \cong \mathbf J \}$.
    \item Without loss of generality, we may assume that $\mathbf U_1,\dots,\mathbf U_m$ are the components that have at least $L$ edges with $X_i=|E(\mathbf U_i)|$ listed in the decreasing order. From the definition of $\mathcal{A}_{\delta,N}$ we also know that $(1+\delta)^{-1} X_i \leq |V(\mathbf U_i)| \leq X_i+1$.
    \item For each $1 \leq i \leq m$, we choose a spanning tree $\mathbf T_i$ of $\mathbf U_i$, and denote $\mathbf E_i = \mathbf U_i \setminus \mathbf T_i$. Clearly $|V(\mathbf T_i)|=|V(\mathbf U_i)|\le X_i+1$ and $|E(\mathbf E_i)| \leq X_i-|V(\mathbf{U_i})|+1 \leq \delta X_i+1 $. 
\end{itemize}
From the above procedure, we get a triple $\big( \{ x(\mathbf J) : \mathbf J \in \mathcal{H}_{L} \}, (\mathbf{E}_{1}, \ldots, \mathbf{E}_{m}), (\mathbf T_{1}, \ldots, \mathbf T_{m}) \big)$ which uniquely encodes $\mathbf H\in \mathcal A_{\epsilon,N}$ such that the following hold:
\begin{itemize}
    \item $\sum_{ \mathbf J \in \mathcal{H}_{L} } x(\mathbf J) \leq N$;
    \item $|V(\mathbf E_i)| \leq X_i+1$ and $|E(\mathbf E_i)| \leq \delta X_i+1$ for $1\leq i \leq m$;
    \item $|V(\mathbf{T}_i)| \leq X_i+1$ for $1 \leq i \leq m$.
\end{itemize}
We can thus upper-bound $|\mathcal{A}_{\delta,N}|$ by controlling the enumeration of such triples. Since $\sum_{\mathbf J \in \mathcal{H}_{L}} x(\mathbf J) \leq N$ and we have a trivial bound 
$$| \mathcal{H}_{L} | \leq \binom{2L}{2}^{L} \leq 2^{2L^2}\,,$$
we see that the number of choices of $\{ x(\mathbf J) : \mathbf J \in \mathcal{H}_{L} \}$ is upper-bounded by
\begin{align*}
    N^{ | \mathcal{H}_{L} | } = \exp \big( |\mathcal{H}_{L}| \log N \big) \leq \exp \big(2^{2L^2} \log N \big) \,.
\end{align*}
In order to bound the enumeration of $(\mathbf E_1,\dots,\mathbf E_m)$ and $(\mathbf T_1,\dots,\mathbf T_m)$, we first fix $X_{i}=|V(\mathbf U_i)|$ for $1\le i\le m$. Clearly the number of choices of $\{ \mathbf{E}_i: 1 \leq i \leq m\}$ is upper-bounded by $\prod_{i=1}^{m} (X_i^2)^{\delta X_i+1}$. Moreover, from \cite{Otter48} there exists a universal constant $C_0$ such that for any $V\in \mathbb N$, the number of unlabeled trees with $V$ vertices is upper-bounded by $e^{C_0} \alpha^{-V}$. Hence (noticing the trivial bound $m \leq \frac{N}{L}$) the number of choices of $(\mathbf T_{1}, \ldots, \mathbf{T}_m)$ is upper-bounded by
\begin{align*}
    \prod_{i=1}^{m} \big( e^{C_0} \alpha^{-X_i-1} \big) \leq e^{C_0 m} \alpha^{-N-m} \leq e^{\frac{C_0 N}{L}} \alpha^{-(N+\frac{N}{L})} \,.
\end{align*}
Therefore, we get that
\begin{align*}
    | \mathcal{A}_{\delta,N} | \leq & \alpha^{-N-\frac{N}{L} } \exp \left( 2^{2L^2}\log N + \frac{C_0 N}{L} \right) \cdot \sum_{m \geq 0} \sum_{ \substack{ X_1 \geq \ldots \geq X_m \geq L \\ X_1 + \ldots + X_m \leq N }  } \prod_{i=1}^{m} X_i^{ 2(\delta X_i+1)} \,.
\end{align*}
To bound the last summation, we define $M_k = \# \{ X_i : X_i=k \}$, and thus we have that $\sum_{k=L}^{N} k M_k \leq N$. Using the fact that $\frac{(\delta k+1)\log k}{k} \leq \delta \log N + \frac{\log L}{L} $ when $L \leq k \leq N$, we have
\begin{align*}
    &\prod_{i=1}^{m} X_i^{2(\delta X_i+1)} = \prod_{k=L}^{N} k^{2(\delta k+1) M_k} \leq \exp \Big\{ 2\sum_{k=L}^{N} M_k \big(\delta k+1 \big)\log k  \Big\} \\
    \leq &\exp \bigg( 2\Big( \delta \log N + \frac{\log L}{L}\Big) \sum_{k} k M_k \bigg) \leq \exp \left(2\delta N \log N + \frac{ 2N \log L }{L} \right) \,.
\end{align*}
Thus, by choosing $L=\sqrt[3]{\log N}$ we get that 
\begin{align*}
    |\mathcal A_{\delta,N}|\le &\ \alpha^{-N-\frac{N}{\sqrt[3]{\log N}}} \exp \left( \frac{C_0 N}{\sqrt[3]{\log N}} + 4^{\sqrt[3]{(\log N)^2}} \log N + 2 \delta N \log N + \frac{N \log \log N}{\sqrt[3]{\log N}} \right)\times \\
    &\ \# \Big\{ (m,X_1,\ldots,X_m) : X_1 \geq X_2 \geq \ldots \geq X_m \geq L, X_1 + \ldots + X_m \le N \Big\} \\
    \le&\ \alpha^{-N}\cdot\exp\left(\frac{C_1 N\log\log N}{\sqrt[3]{\log N}}+2\delta N\log N\right)\cdot\sum_{N'=1}^N\#\{\text{partitions of }N'\}\,,
\end{align*}
where $C_1$ is also an universal constant.
Finally, from a well-known result due to Siegel (see e.g. \cite[pp.316-318]{Aposto76}), the number of partitions of $N'$ is no more than $\exp\big(10\sqrt{N'}\big)$. In conclusion, we have shown that
\begin{align*}
|\mathcal{A}_{\delta,N}| &\le \alpha^{-N} \cdot \exp \left( \frac{C_1 N \log \log N}{\log N} + 2\Big(\frac{8}{N}+\frac{\varepsilon}{10\log N} \Big)N\log N \right) \cdot N \exp \left( 10 \sqrt{N} \right) \\
&= \alpha^{-N} \exp \left( \frac{\varepsilon N}{5} + \frac{C_1 N \log \log N}{\log N} + 17\log N + 10\sqrt{N} \right) \,,
\end{align*}
which is upper-bounded by $C\cdot({\alpha}-\varepsilon/2)^{-N}$ for any $N$ for some large enough (universal) constant $C$ since $e^{\varepsilon/5}<(1-\frac{\varepsilon}{2{\alpha}})^{-2}$.
\end{proof}

\section{Complimentary proofs}
\subsection{Proof of Proposition~\ref{prop-another-bound-conditional-expectation}.}
This section is dedicated to proving Proposition~\ref{prop-another-bound-conditional-expectation}, and we begin by introducing some notations. Recall Definition~\ref{def-correlated-random-graph}, where $I_e=\mathbf{1}_{e\in E(G)}$ for $e\in \operatorname{U}$. We denote $\widehat\Pb=\mathbf B(1,p)^{\otimes \!\operatorname{U}}$ as the marginal distribution of $\{I_e:e\in \operatorname{U}\}$ under $\Pb_*$. Furthermore, let $\mathcal F_I=\sigma(\{I_e:e \in \operatorname{U}\})$ be the corresponding $\sigma$-field. It is important to note that $\mathcal F_I$ is independent of $\pi_*$. For clarity, we will use $\widehat{\mathbb E}$ and $\widehat{\mathbb E}'$ to denote the expectation with respect to $\widehat\Pb$ and $\widehat\Pb[\cdot\mid \Gc]$, respectively (recall that $\mathcal{G}$ is measurable with respect to $\mathcal F_I$).

Throughout this section, we will consider a fixed permutation $\pi$ and always condition on $\pi_*=\pi$. Let $\Pb_\pi$ and $\Pb'_\pi$ denote $\Pb_*[\cdot\mid \pi_*=\pi]$ and $\Pb_*'[\cdot\mid \pi_*=\pi]$, respectively. Additionally, we use $\mathbb E_\pi$ and $\mathbb E'_\pi$ to represent the expectation under $\Pb_\pi$ and $\Pb'_\pi$, respectively. Applying the iterated expectation formula by first conditioning on $\mathcal F_I$, we have
\begin{align*}
    & \mathbb{E}_\pi'\big[ \phi_{S_1,S_2}\big] = \frac{1}{\widehat{\mathbb{P}}[\mathcal{G}]}\cdot\widehat{\mathbb{E}} \Big[ \mathbf{1}_{\mathcal{G}} \cdot\mathbb{E}_\pi  \big[ \phi_{S_1,S_2} \mid \mathcal F_I \big] \Big] = \widehat{\mathbb{E}}' \Big[ \mathbb{E}_\pi \big[ \phi_{S_1,S_2} \mid \mathcal F_I \big] \Big] \,.
\end{align*}
Clearly by definition we have that for every $e\in \operatorname{U}$,
$$\mathbb{E}_\pi[A_e \mid \mathcal F_I]=\mathbb{E}_\pi[B_{\pi(e)}\mid \mathcal F_I]=sI_e,\quad\mathbb{E}_\pi[A_e B_{\pi(e)}\mid \mathcal F_I]=s^2I_e\,,$$
and thus the conditional expectation $\mathbb{E}_{\pi} \big[ \phi_{S_1,S_2}(A,B) \mid \mathcal F_I \big]$ is given by $h(S_1,S_2)$, where (recall that $S_0=S_1 \cap\pi^{-1}(S_2)$)
\begin{align}
    h(S_1,S_2)
    = \prod_{ e \in E(S_1) \triangle E(\pi^{-1}(S_2)) } \frac{ s I_e - q }{ \sqrt{q(1-q)} } \prod_{e \in E(S_0)} \frac{ s^2 I_e - 2qs I_e + q^2 }{ q(1-q) } \,.
    \label{eq-phi-S_1-S_2-given-I-pi}
\end{align}
The main technical input is the following result.
\begin{lemma}{ \label{lemma-conditional-expectation-edge-products} }
For any graph $S\Subset \operatorname{K}_n$ such that $|V(S)|\le 2d$,  denote $\mathcal F_S\subset \mathcal F_I$ the $\sigma$-field generated by $\{I_e:e\in E(S)\}$. In addition,  let $\mathsf E$ be a collection of edges with $|\mathsf E|\leq d$ and $\mathsf E \cap E(S) = \emptyset$, and let $V(\mathsf E)$ be the set of end points of edges in $\mathsf E$. Then for any $\mathcal F_S$-measurable bounded function $\varphi$, it holds that
\begin{equation}{ \label{eq-conditional-expectation-edge-products} }
    \Big| \widehat{\mathbb{E}}' \big[ \varphi \prod_{\mathsf e\in \mathsf E} ( s I_{\mathsf e} - q ) \big] \Big| \leq [1+o(1)]\cdot 2^{|\mathsf E|+2} \widehat{\mathbb{E}} \big[ | \varphi| \big] \cdot \big(n d^{20}\big)^{-|V(\mathsf E) \setminus V(S)|} \,.
\end{equation}
\end{lemma}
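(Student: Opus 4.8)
The plan is to reduce the estimate under the conditional measure $\widehat{\Pb}[\cdot\mid\Gc]$ to an estimate under the unconditional product measure $\widehat{\Pb}$, and then exploit the fact that the variables $\{I_{\mathsf e}\}_{\mathsf e\in\mathsf E}$ are centered at $q/s = p$ and independent of $\varphi$. Concretely, since $\widehat{\Pb}[\Gc]=1-o(1)$ by Lemma~\ref{eq-Gc-is-typical}, we may write $\widehat{\mathbb E}'[\,\cdot\,]=\widehat{\Pb}[\Gc]^{-1}\widehat{\mathbb E}[\mathbf 1_\Gc\,\cdot\,]$, so it suffices to bound $|\widehat{\mathbb E}[\mathbf 1_\Gc\,\varphi\prod_{\mathsf e\in\mathsf E}(sI_{\mathsf e}-q)]|$ by $[1+o(1)]\,2^{|\mathsf E|}\,\widehat{\mathbb E}[|\varphi|]\,(nd^{20})^{-|V(\mathsf E)\setminus V(S)|}$ (absorbing the missing factor $4$ and the normalization into the $[1+o(1)]2^{|\mathsf E|+2}$). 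The point is that conditioning on $\Gc$ breaks the independence between the $I_{\mathsf e}$'s and $\varphi$, so we cannot simply factor the expectation; instead I would split $\mathsf E$ according to which endpoints are \emph{new}, i.e.\ lie outside $V(S)$.

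The key combinatorial step is to organize the product $\prod_{\mathsf e\in\mathsf E}(sI_{\mathsf e}-q)$ by revealing the edges $\mathsf e$ whose both endpoints already lie in $V(S)\cup V(\text{earlier revealed edges})$ versus those introducing a fresh vertex. Order the edges of $\mathsf E$ as $\mathsf e_1,\dots,\mathsf e_{|\mathsf E|}$ so that, writing $W_j=V(S)\cup V(\{\mathsf e_1,\dots,\mathsf e_j\})$, each $\mathsf e_j$ contributes at most one new vertex to $W_{j-1}$; this is always possible and the number of $j$ for which a new vertex appears is exactly $|V(\mathsf E)\setminus V(S)|=:\ell$. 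I then peel off the last edge $\mathsf e_{|\mathsf E|}$: conditionally on $\mathcal F_{W_{|\mathsf E|-1}}$ (the $\sigma$-field generated by all $I_e$ with $e\subset W_{|\mathsf E|-1}$), the indicator $\mathbf 1_\Gc$ is \emph{not} measurable, but one can dominate it by the indicator of the event that $G$ restricted to a bounded neighborhood is admissible, and use the admissibility constraint to control how much the conditional law of $I_{\mathsf e_{|\mathsf E|}}$ can deviate from $\mathbf B(1,p)$. When $\mathsf e_{|\mathsf E|}$ introduces a new vertex $v$, the total number of edges from $v$ that are forced to be present is limited (since any subgraph on $\le d^2$ vertices containing $v$ must be admissible, hence $\Phi$-large), which yields a factor $O(p\cdot d^{\text{const}})\le (nd^{20})^{-1}$ after accounting for the $n$ choices of the new vertex label being already fixed; when $\mathsf e_{|\mathsf E|}$ introduces no new vertex, we simply bound $|sI_{\mathsf e}-q|\le 1$ and lose only a factor $2$. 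Iterating this peeling over all $|\mathsf E|$ edges produces the factor $2^{|\mathsf E|}$ from the ``no new vertex'' steps and the factor $(nd^{20})^{-\ell}$ from the $\ell$ ``new vertex'' steps, while the leftover $\mathcal F_S$-measurable part contributes $\widehat{\mathbb E}[|\varphi|]$.

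The main obstacle is the last point: controlling the conditional expectation of $(sI_{\mathsf e}-q)$ given $\mathcal F_{W_{j-1}}$ and the event $\Gc$, because $\Gc$ is a global event on $G$ (involving all subgraphs on $\le d^2$ vertices) rather than a local one. The resolution I would pursue is to replace $\mathbf 1_\Gc$ by a product of local indicators: for the peeling step on $\mathsf e_j=(u,v)$ with $v$ new, note that $\Gc$ implies in particular that every subgraph on $\le d^2$ vertices through $v$ is admissible, and a counting argument (in the spirit of the union bound in Lemma~\ref{eq-Gc-is-typical}, using the definition of $e(k)$ and $\Phi$) shows the conditional probability that $I_{\mathsf e_j}=1$ is at most $[1+o(1)]\,p\,d^{C}$ for an absolute constant $C$, which together with $p\le qd$ and $q=n^{-1+o(1)}$ gives the bound $(nd^{20})^{-1}$ with room to spare. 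The bookkeeping — making the ordering of $\mathsf E$ precise, handling the dependence of $\mathbf 1_\Gc$ across peeling steps via a supermartingale-type inequality, and verifying the constants work out with the $d^{20}$ slack — is routine but lengthy, and is where the bulk of the work lies; I would relegate the detailed union-bound computation to a separate sub-lemma.
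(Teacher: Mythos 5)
Your proposal attempts a sequential peeling/martingale argument, while the paper uses a global cancellation argument: it expands the expectation as an alternating sum over $\chi_{\mathsf E}\in\{0,1\}^{\mathsf E}$, uses the algebraic identity $(s\chi_{\mathsf e}-q)\widehat{\mathbb P}[I_{\mathsf e}=\chi_{\mathsf e}] = (-1)^{\|\chi_{\mathsf E}\|_0\text{-contribution}} q(1-p)$ to reduce to $\sum_{\chi_{\mathsf E}}(-1)^{\|\chi_{\mathsf E}\|_0}\widehat{\mathbb P}[\mathcal G\mid I_{\mathsf E}=\chi_{\mathsf E},I_S=\chi_S]$, and then observes that for each fixed realization $\omega$ of the remaining edges, this inner sum vanishes unless every $\mathsf e\in\mathsf E$ is \emph{pivotal} for $\Gc$ (an ``improper'' $\omega$). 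The heavy lifting is then Lemma~\ref{lemma-prob-proper-realization}, which shows $\widehat{\mathbb P}[\omega\text{ improper}]\le 4(nd^{20})^{-|V(\mathsf E)\setminus V(S)|}q^{-|\mathsf E|}$ via Lemmas~\ref{lemma-containment-maximum-graph} and~\ref{lemma-expand-graph}. These are genuinely different decompositions.

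Your proposal has a real gap at the crux step. First, the claimed estimate ``the conditional probability that $I_{\mathsf e_j}=1$ is at most $[1+o(1)]\,p\,d^C$'' is not the right statement and is in any case insufficient: it would only give $|s p' - q|\le q(d^C-1)\approx qd^C$, and $qd^C=(nq)d^C/n$ exceeds $(nd^{20})^{-1}=d^{-20}/n$ whenever $nq\ge 1$ (which is assumed). What one actually needs is that $p'$ is within $O((nd^{20})^{-1}/s)$ of $p$, i.e., that conditioning on $\Gc$ moves the law of $I_{\mathsf e_j}$ by a multiplicative $1+O((nqd^{20})^{-1})$. This is controlled by the \emph{pivotality} probability of $\mathsf e_j$ for $\Gc$, not by the raw conditional probability of $I_{\mathsf e_j}=1$, and your write-up does not identify pivotality as the relevant quantity. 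Second, replacing $\mathbf 1_\Gc$ by a \emph{local} indicator via domination does not preserve the sign cancellation you need: $|\widehat{\mathbb E}[X\mathbf 1_\Gc]|\le|\widehat{\mathbb E}[X\mathbf 1_{\rm local}]|$ is false in general for signed $X$, and the whole point here is that the unconditional expectation is exactly zero, so the answer lives entirely in the difference between $\mathbf 1_\Gc$ and its local approximations. Third, after peeling one edge, $\Gc$ still depends on all as-yet-unrevealed edges, so the conditional expectation $\mathbb E[(sI_{\mathsf e_j}-q)\mathbf 1_\Gc\mid\mathcal F_{W_{j-1}}]$ is a genuine average over $\omega$; absorbing this into a ``supermartingale-type inequality'' as you sketch would require precisely the structural information about improper $\omega$ that the paper extracts via Lemma~\ref{lemma-containment-maximum-graph} (that all pivotal witnesses coalesce into a single bad $W$ containing all of $V(\mathsf E)$). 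The difficulty is not bookkeeping but the identification and union-bounding of the pivotal configurations.
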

\noindent Assuming Lemma~\ref{lemma-conditional-expectation-edge-products}, we next prove Proposition~\ref{prop-another-bound-conditional-expectation}.
\begin{proof}[Proof of Proposition~\ref{prop-another-bound-conditional-expectation}]
For any $S_1,S_2\Subset \operatorname{K}_n$, we define the function
\begin{align}
    \varphi_1(S_1,S_2)= \prod_{e \in E(S_1) \setminus E(S_0)} \frac{s I_e-q}{\sqrt{q(1-q)}} \prod_{e \in E(S_0)} \frac{s^2 I_e-2qs I_e+q^2}{q(1-q)} \,,\label{eq-def-varphi-1}
\end{align}
which is bounded and $\mathcal F_{S_1}$-measurable. It is clear that
\begin{align*}
    h(S_1,S_2)=\varphi_1(S_1,S_2) \cdot \prod_{e\in E(\pi^{-1}(S_2))\setminus E(S_0)} \frac{s I_e-q}{\sqrt{q(1-q)}} \,.
\end{align*}
Therefore, we have
\begin{align*}
    &\big| \mathbb{E}'_\pi[ h(S_1,S_2) ] \big|=\Big| \widehat{\mathbb{E}}'\big[\varphi_1(S_1,S_2) \prod_{e\in E(\pi^{-1}(S_2))\setminus E(S_0)}(sI_e-q) \big]\Big|\cdot \left(\frac{1}{\sqrt{q(1-q)}}\right)^{|E(S_2)|-|E(S_0)|} \\\leq 
    &\ [1+o(1)] \cdot \widehat{\mathbb{E}}\big[|\varphi_1(S_1,S_2)|\big] 2^{2+|E(S_2)|-|E(S_0)|} \big(n d^{20}\big)^{-|V(\pi^{-1}(S_2)) \setminus V(S_0)|} q^{\frac{1}{2}(|E(S_0)|-|E(S_2)|)} \,,
\end{align*}
where the inequality follows from applying Lemma~\ref{lemma-conditional-expectation-edge-products} by taking $\varphi=\varphi_1(S_1,S_2)$ and $\mathsf E=E(\pi^{-1}(S_2)) \setminus E(S_0)$. 
Recalling \eqref{eq-def-varphi-1}, we get from a straightforward calculation that
\begin{align*}
    &\ \widehat{\mathbb{E}} \big[|\varphi_1(S_1,S_2)|\big] = \prod_{e \in E(S_0)} \mathbb{E} \Big[ \frac{s^2I_e-2qs I_e+q^2}{q(1-q)} \Big] \prod_{e \in E(S_1) \setminus E(S_0)} \mathbb{E} \Big[\frac{|s I_e-q|}{\sqrt{q(1-q)}}\Big] \\
    \leq &\ \rho^{|E(S_0)|} \Big(\frac{2\sqrt{q}}{\sqrt{1-q}} \Big)^{|E(S_1)|-|E(S_0)|} = [1+o(1)] \rho^{|E(S_0)|} 2^{|E(S_1)|-|E(S_0)|}q^{\frac{1}{2}(|E(S_1)|-|E(S_0)|)} \,.
\end{align*}
Combining the preceding two displays, we get $\big| \mathbb{E}'_\pi[ h(S_1,S_2)] \big|$ is upper-bounded by $1+o(1)$ times
\begin{align}
    \rho^{|E(S_0)|} 2^{2+|E(S_1)|+|E(S_2)|-2|E(S_0)|} q^{\frac{1}{2}(|E(S_1)|-|E(S_2)|)} \big(n d^{20}\big)^{-|V(S_2)|+|V(S_0)|}  \,.  \label{eq-possible-bound-h-I}
\end{align}
Similarly, we can define 
\begin{align*}
    \varphi_2(S_1,S_2)= \prod_{e \in E(\pi^{-1}(S_2)) \setminus E(S_0)} \frac{s I_e-q}{\sqrt{q(1-q)}} \prod_{e \in E(S_0)} \frac{s^2I_e-2qs I_e+q^2}{q(1-q)}
\end{align*}
and get that $\big| \mathbb{E}_\pi'[ h(S_1,S_2)] \big|$ is also bounded by $1+o(1)$ times
\begin{align}
    \rho^{|E(S_0)|} 2^{2+|E(S_1)|+|E(S_2)|-2|E(S_0)|} q^{\frac{1}{2}(|E(S_2)|-|E(S_1)|)} \big(n d^{20}\big)^{-|V(S_1)|+|V(S_0)|}   \,.  \label{eq-possible-bound-h-II}
\end{align}
Combining the two upper bounds \eqref{eq-possible-bound-h-I} and \eqref{eq-possible-bound-h-II} yields the desired result.
\end{proof}

The rest of this section is devoted to the proof of Lemma~\ref{lemma-conditional-expectation-edge-products} and henceforth we will only deal with $\widehat\Pb$ and $\widehat\Pb'$. For $W\subset [n]$, let $\operatorname{K}_{W}$ be the complete graph on the vertex set $W$. We will fix the choices of $S$ and $\mathsf E$, and without loss of generality, we may assume that $|V(S)| \geq 10$ and $S$ contains all the edges in $\operatorname{K}_{V(S)} \setminus \mathsf E$, since otherwise we can always replace $S$ by a larger graph $S' \supset S$ such that $|V(S')|\le 2d$ and $V(S) \cap V(\mathsf E)=V(S') \cap V(\mathsf E)$.

Fix any $\mathcal F_S$-measurable bounded function $\varphi$. We expand the left hand side of \eqref{eq-conditional-expectation-edge-products} as 
\begin{align}\label{eq-expansion}
    \sum_{\chi_S} \sum_{\chi_{\mathsf E} } \varphi(\chi_S) \prod_{\mathsf e\in \mathsf E} (s\chi_{\mathsf e}-q) \cdot \widehat{\mathbb{P}}'[ I_{\mathsf E}=\chi_{\mathsf E}, I_S=\chi_S ] \,,
\end{align}
where the summation of $\chi_S$ (respectively, $\chi_{\mathsf E}$) is taken over all possible realizations of $I_{S} = \{ I_{e} : e \in E(S) \} \in \{ 0,1\}^{E(S)}$ (respectively, $I_{\mathsf E}=\{ I_{\mathsf e}: \mathsf e\in \mathsf E\}\in \{0,1\}^{\mathsf E}$). Using Bayes formula and the independence between $I_S$ and $I_{\mathsf E}$ under $\widehat\Pb$, \eqref{eq-expansion} equals 
\begin{align*}
    & \sum_{\chi_S} \sum_{ \chi_{\mathsf E} } \widehat\Pb[I_S=\chi_S] \varphi(\chi_S) \prod_{e\in \mathsf E} \Big( (s\chi_{e}-q) \widehat{\mathbb{P}}[ I_{e}= \chi_{e} ] \Big) \frac{ \widehat{\mathbb{P}}[\mathcal{G} \mid I_{\mathsf E}=\chi_{\mathsf E} , I_S=\chi_S ] }{ \widehat{\mathbb{P}}[\mathcal{G}] } \\
    =&\ \frac{(q(1-p))^{|\mathsf E|}}{1-o(1)} \sum_{\chi_S} \widehat{\mathbb{P}}[I_S = \chi_S] \varphi(\chi_S) \sum_{ \chi_{\mathsf E} } (-1)^{\|\chi_{\mathsf E}\|_0} \widehat{\mathbb{P}}[ \mathcal{G} \mid I_{\mathsf E}=\chi_{\mathsf E}, I_{S} = \chi_S ] \,,
\end{align*}
where $\|\chi_{\mathsf E}\|_0$ denotes the number of zero coordinates in the vector $\chi_{\mathsf E}$.
We say $\chi_S$ is self-admissible, if the graph with edge set $\{ e \in E(S): \chi_e =1 \}$ is admissible. Denoting $\mathsf A_S\subset \{0,1\}^{E(S)}$ as the set of all self-admissible tuples, we then see that one only needs to sum over $\chi_S\in \mathsf A_S$ (since otherwise the conditional probability of $\mathcal G$ is $0$). Define $\Omega$ to be the set of all possible realizations of $\{ I_{e} : e \not \in \mathsf E\cup E(S) \} \in \{ 0,1 \}^{ \operatorname{U} \setminus (\mathsf E\cup E(S))}$, we have
\begin{equation}{ \label{eq-conditional-probability-mathcal-G} }
    \widehat{\mathbb{P}}[\mathcal{G} \mid I_{\mathsf E}= \chi_{\mathsf E}, I_{S} = \chi_S ] = \sum_{\omega \in \Omega} \widehat{\mathbb{P}}[\omega] \mathbf{1}_{ \omega \oplus \chi_{\mathsf E}   \oplus \chi_S   \in \mathcal{G} } \,,
\end{equation}
where $\omega \oplus \chi_{\mathsf E}  \oplus \chi_S  $ represents the realization of $\{ I_{e} : e \in \operatorname{U} \}$ such that 
\begin{align*}
    I_{e} = 
    \begin{cases}
        \chi_{e}, & e \in \mathsf E\cup E(S) ; \\
        \omega_e, & \mbox{otherwise} \,.
    \end{cases}
\end{align*}
Applying \eqref{eq-conditional-probability-mathcal-G}, we get that
\begin{align}
    \sum_{ \chi_{\mathsf E} } (-1)^{\|\chi_{\mathsf E}\|_0} \widehat{\mathbb{P}}[\Gc\mid I_{\mathsf E}=\chi_{\mathsf E}, I_{S} = \chi_S ] 
    =\sum_{\omega} \widehat{\mathbb{P}}[\omega]\sum_{ \chi_{\mathsf E} } (-1)^{\|\chi_{\mathsf E}\|_0} \mathbf{1}_{ \omega \oplus \chi_{\mathsf E}  \oplus \chi_S   \in \mathcal{G} } \,. \label{eq-reorganized-summation}
\end{align}
We say $\omega$ is an {\em improper realization} with respect to ${\mathsf E}$ and $\chi_S$ (denoted as $\omega\sim({\mathsf E},\chi_S)$), if
\begin{equation}{ \label{eq-def-proper-realization} }
    \sum_{ \chi_{\mathsf E} } (-1)^{\|\chi_{\mathsf E}\|_0} \mathbf{1}_{ \omega \oplus \chi_{\mathsf E}  \oplus \chi_S   \in \mathcal{G}   } \neq 0 \,. 
\end{equation}
Clearly \eqref{eq-reorganized-summation} is upper-bounded by 
$
    2^{|\mathsf E|}\cdot \widehat{\mathbb{P}}[ \omega\sim( {\mathsf E}, \chi_S) ]
$,
and thus we have 
\begin{align}
    \Big|\widehat{ \mathbb{E}}'[ \varphi(\chi_S) \prod_{e\in \mathsf E}(s I_{e}-q) ] \Big|\le &\ [1+o(1)](2q(1-p))^{|\mathsf E|} \sum_{\chi_S \in\mathsf A_S} |\varphi(\chi_S)| \widehat{\mathbb{P}}[I_S=\chi_S] \widehat{\mathbb{P}}[\omega \sim({\mathsf E}, \chi_S)]  \nonumber \\
    \leq&\  [1+o(1)] (2q)^{|\mathsf E|} \mathbb{E} \big[|\varphi(\chi_S)|\big] \max_{\chi_S\in\mathsf A_S} \widehat{\mathbb{P}}[ \omega \sim(\mathsf E,\chi_S)] \,. 
\end{align}
It remains to show the following (somewhat mysterious) lemma:
\begin{lemma}{ \label{lemma-prob-proper-realization} }
    Given $|\mathsf E| \leq d$ and $|V(S)| \leq 2d$, for all $\chi_S\in \mathsf A_S$ we have 
    \begin{equation*}
        \widehat{\mathbb{P}}[ \omega \sim({\mathsf E},\chi_S)] \leq 4\big(n d^{20}\big)^{-|V(\mathsf E) \setminus V(S)|}  q^{-|\mathsf E|}\,.
    \end{equation*} 
\end{lemma}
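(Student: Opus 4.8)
The plan is to turn the claim into a combinatorial statement about the host graph and then run a union bound. Recall that $\omega\sim(\mathsf E,\chi_S)$ means the alternating sum $\sum_{\chi_{\mathsf E}}(-1)^{\|\chi_{\mathsf E}\|_0}\mathbf 1_{\omega\oplus\chi_{\mathsf E}\oplus\chi_S\in\Gc}$ is nonzero. Let $G_0$ be the graph determined by $\omega$ and $\chi_S$ with all edges of $\mathsf E$ deleted, so that the map $T\mapsto \mathbf 1\{G_0\cup T\text{ contains a bad subgraph on }\le d^2\text{ vertices}\}$ on subsets $T\subseteq\mathsf E$ is monotone increasing, since any subgraph of an admissible graph is admissible. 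The first step is the elementary observation that the top-degree coefficient of the multilinear extension of a monotone Boolean function is nonzero only if every coordinate lies in one of its minimal $1$-sets; expanding this coefficient by inclusion--exclusion over those minimal sets extracts from $\omega\sim(\mathsf E,\chi_S)$ that (a) $G_0$ itself contains no bad subgraph on $\le d^2$ vertices, and (b) for each $\mathsf e\in\mathsf E$ there is a bad subgraph $F_{\mathsf e}$ of $G_0\cup\mathsf E$ with $\mathsf e\in E(F_{\mathsf e})$, $|V(F_{\mathsf e})|\le d^2$ and $E(F_{\mathsf e})\setminus\mathsf E\subseteq E(G_0)$. One may take $F_{\mathsf e}$ inclusion-minimal, in which case it has minimum degree at least two away from the endpoints of $\mathsf e$: deleting a pendant edge multiplies $\Phi$ by $(n^{1+4/d}d^{20}\cdot qd^6)^{-1}=(n^{o(1)}d^{26})^{-1}<1$ and so preserves badness. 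This is exactly where it matters that $\Gc$ was defined with the threshold $d^2$ on vertices rather than $d$ on edges: the composite structure $F^\star:=\bigcup_{\mathsf e\in\mathsf E}F_{\mathsf e}$ has up to $d^3$ vertices, and it must still be ``seen'' by $\Gc$ through one of its bad pieces $F_{\mathsf e}$.

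The second step is a union bound. By construction $\mathsf E\subseteq E(F^\star)$, $E(F^\star)\setminus(\mathsf E\cup E(S))\subseteq E(\omega)$, the graph $F^\star\setminus\mathsf E$ is admissible (a subgraph of $G_0$ on at most $d^3$ vertices), and every vertex outside the endpoints of $\mathsf E$ has degree $\ge2$ in $F^\star$. Writing $R$ for the set of $\omega$-edges of $F^\star$ and $W$ for $V(F^\star)\setminus V(S)$ (which contains the $|V(\mathsf E)\setminus V(S)|$ exterior endpoints of $\mathsf E$ for free), we bound
\[
\widehat{\mathbb P}[\omega\sim(\mathsf E,\chi_S)]\ \le\ \sum \widehat{\mathbb P}[\omega\supseteq R]\ =\ \sum p^{|R|}\,,
\]
the sum running over admissible choices of the isomorphism type of $F^\star$ together with the placements of $W\setminus V(\mathsf E)$ inside $[n]$ and of $V(F^\star)\cap V(S)$ inside the core. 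Placing each extra exterior vertex costs a factor $n$; choosing $V(F^\star)\cap V(S)$ costs at most $2^{|V(S)|}\le 2^{2d}$; the number of ways to choose $R$ once the vertex set is fixed is controlled by Lemma~\ref{lemma-facts-graphs}(iv),(v); and since $\chi_S$ is self-admissible, the subgraph of $\operatorname{K}_{V(S)}$ it realizes is sparse (maximal edge density $1+o(1)$ by Lemma~\ref{lem-density-admissible-graph}), so the edges of $F^\star$ inside the core number only $O(d)$. One then organizes the sum by the ``excess'' quantities $|V(\mathsf E)\setminus V(S)|$, $|W\setminus V(\mathsf E)|$ and $|R|$, and sums the resulting geometric series.

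The main obstacle---and the reason this lemma is the technical crux---is making the vertex-placement factors $n^{|W|}$ cancel the edge-probability factors $p^{|R|}$ and collapse the whole sum to $4(nd^{20})^{-|V(\mathsf E)\setminus V(S)|}q^{-|\mathsf E|}$. The mechanism is the $\Phi$-dichotomy: each $F_{\mathsf e}$ being bad gives $(n^{1+4/d}d^{20})^{|V(F_{\mathsf e})|}(qd^6)^{|E(F_{\mathsf e})|}<(\log n)^{-1}$, i.e.\ an upper bound on $n^{|V(F_{\mathsf e})|}q^{|E(F_{\mathsf e})|}$ with room to spare in the $d$-powers, while $F^\star\setminus\mathsf E$ being admissible (and $\chi_S$ self-admissible) gives the matching reverse inequality for the ``already placed'' part, via the submultiplicativity of $\Phi$ in Lemma~\ref{lemma-facts-graphs}(ii). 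The slack $q^{-|\mathsf E|}$ on the right is exactly what one gains because the $\mathsf E$-edges of $F^\star$ are not paid for by $\omega$: splitting $q^{|E(F_{\mathsf e})|}=q^{|E(F_{\mathsf e})\cap\mathsf E|}\cdot q^{|E(F_{\mathsf e})\setminus\mathsf E|}$ and using $p\ge q$, the factors $q^{-|E(F_{\mathsf e})\cap\mathsf E|}$ accumulate to $q^{-|\mathsf E|}$. The delicate bookkeeping is to route the $d$-power surplus from badness so that it (i) beats the $2^{2d}$ and the remaining $\operatorname{poly}(d)$ enumeration constants, (ii) absorbs the contribution of the $O(d)$ core edges (one pays $p^{-1}\le q^{-1}$ per such edge but is compensated by $n^{-(1+4/d)}$ per core vertex), and (iii) leaves exactly one factor $(nd^{20})^{-1}$ per vertex of $V(\mathsf E)\setminus V(S)$, the extra exterior vertices contributing $\le1$ each because $|R|$ dominates $|W\setminus V(\mathsf E)|$ (min degree $\ge2$) while $p^{|R|}=n^{-(1-o(1))|R|}$. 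The leftover finitely many factors are absorbed into the constant $4$.
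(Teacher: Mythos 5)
Your proposal correctly extracts the two pieces of information that the paper also extracts from $\omega\sim(\mathsf E,\chi_S)$: that $G_0$ (the graph with the $\mathsf E$-edges switched off) has no bad subgraph on $\le d^2$ vertices, and that each $\mathsf e\in\mathsf E$ lies in some bad subgraph $F_{\mathsf e}$ of $G_0\cup\mathsf E$. The pendant-edge minimality observation is also sound. However, there is a genuine gap between this starting point and the target bound, and it is precisely the gap the paper's Lemma~\ref{lemma-containment-maximum-graph} and Lemma~\ref{lemma-expand-graph} are built to close.

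The problem is that the badness of each $F_{\mathsf e}$ individually does not yield a usable constraint on the composite structure $F^\star=\bigcup_{\mathsf e}F_{\mathsf e}$, and your union bound is over $F^\star$. The inequalities $\Phi(F_{\mathsf e})<(\log n)^{-1}$ overcount the overlaps: writing $\log\Phi$ as a submodular set function one only gets $\log\Phi(F^\star)\le\sum_{\mathsf e}\log\Phi(F_{\mathsf e})-\sum_{i\ge 2}\log\Phi\big(F_{i}\cap(F_1\cup\cdots\cup F_{i-1})\big)$, and those intermediate intersections need not be admissible once $|\mathsf E|\ge 3$ (they are proper subgraphs of some $F_i$ but may contain a \emph{different} $\mathsf E$-edge, so minimality of $F_i$ does not apply). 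Without control on the intersection pattern, $F^\star$ need not be bad, and your bookkeeping — ``route the $d$-power surplus from badness so that it beats $2^{2d}$, absorbs the core edges, and leaves one $(nd^{20})^{-1}$ per exterior $\mathsf E$-endpoint'' — does not close: the features you retain of $F^\star$ (admissible away from $\mathsf E$, min-degree $\ge 2$ outside $V(\mathsf E)$) are satisfied by structures of likely probability $p^{O(d)}$ with no compensating vertex cost. The paper resolves this with Lemma~\ref{lemma-containment-maximum-graph}, an intricate pivotality argument (via Claim~\ref{Lemma-no-bad-graph}) showing that the minimizing vertex sets $V_{\mathsf e}$ are \emph{totally ordered by inclusion}, so that $\bigcup_{\mathsf e}V_{\mathsf e}$ is just the maximal one and is therefore itself bad. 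You have no analogue of this nesting, and the pairwise-minimality heuristic does not substitute for it.

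A second, independent gap is your treatment of the overlap with $S$. You appeal to ``$\chi_S$ self-admissible,'' which gives a lower bound on $\Phi$ of any $\chi_S$-subgraph, but it does not control how a bad witness $W$ interacts with $V(S)$: the witness might pick up a dense induced piece of $\chi_S$ together with an $\omega$-piece, and the counting of core edges versus core vertices does not decouple. The paper explicitly flags this — ``a direct union bound does not seem to yield an efficient upper bound due to the intricate intersection patterns of the induced graph on $W$ and $S$'' — and resolves it via Lemma~\ref{lemma-expand-graph}, enlarging $\chi_S$ to a weighted $\xi_S$ whose densest induced subgraph is all of $V(S)$, which guarantees that $W\cup V(S)$ is still bad and permits the union bound to be taken over vertex sets $W'\supset V(\mathsf E)\cup V(S)$ alone. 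Finally, you also omit the intermediate truncation event $\mathcal E$ (no bad subgraphs on $d^2<|V|\le 2d^2$ vertices), which the paper needs to pin the nesting argument down; without a replacement for the nesting this omission is moot, but it is another sign that the hardest mechanism of this lemma has been skipped over rather than replaced.
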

To prove Lemma~\ref{lemma-prob-proper-realization}, we need to further specify another desirable property of $\omega$. For simplicity, in what follows we will use $\mathsf{1}_{\mathsf E}$ and $\mathsf{0}_{\mathsf E}$ to denote the all-one and all-zero vector in $\{0,1\}^{\mathsf E}$, respectively. 
\begin{DEF}\label{def-Ec}
    Let $\mathcal{E}=\mathcal E({\mathsf E},\chi_S)$ be the event that in the graph with edge indicators $\omega\oplus\mathsf{1}_{\mathsf E}\oplus\chi_S$, there is no bad subgraph $H$ with $d^{2}<|V(H)|\leq 2d^2$. 
    In addition, for $\omega\sim({\mathsf E},\chi_S)$, if $\omega$ also belongs to $\mathcal E$, then we write $\omega\stackrel{*}{\sim}({\mathsf E},\chi_S)$. 
\end{DEF}
\begin{remark}
The event $\mathcal E$ is introduced for the sake of convenience when taking union of two graphs in subsequent proof. Basically, the argument will require a relationship of $1:2$ between the lower and upper bounds of $|V(H)|$ in $\mathcal E$. Here the choice of $d^2$ as the lower bound is for the purpose of a nice upper bound on $\widehat{\Pb}[\mathcal E^c]$ as given in \eqref{eq-bound-on-P[Ec]}. The introduction of $\mathcal E$ also partly explains why our truncation event $\Gc$ involves subgraphs up to size $d^2$.
 \end{remark}
 It is clear that
\[
    \widehat{\mathbb P}[ \omega \sim({\mathsf E},\chi_S) ] \leq \widehat{\mathbb{P}}[\mathcal{E}^c] + \widehat{\mathbb P}[\omega \stackrel{*}{\sim}({\mathsf E},\chi_S) ] \,.
\]
Similar as in Lemma~\ref{eq-Gc-is-typical}, one can show by taking a union bound that for any $\chi_S\in \mathsf A_S$ and $\mathsf E$ with $|{\mathsf E}| \le d$, 
\begin{equation}\label{eq-bound-on-P[Ec]}
\widehat{\mathbb{P}}[\mathcal E^c] \leq 2^{|V(S)|} n^{-4d^2/d} d^{-20d^2}  q^{-|\mathsf E|} \leq \big(nd^{20}\big)^{-4d}  q^{-|\mathsf E|}\,.
\end{equation}
Therefore, 
it remains to show
\begin{equation} \label{eq-upp-bound-truncate-prob-proper-realiaztion} 
    \widehat{\mathbb P}[ \omega \stackrel{*}{\sim}({\mathsf E},\chi_S)] \leq 2\big(n d^{20}\big)^{-|V(\mathsf E) \setminus V(S)|}  q^{-|\mathsf E|} \,,\ \forall\chi_S\in \mathsf A_S\,.
\end{equation}

Henceforth we also fix the realization of $\chi_S\in \mathsf A_S$. For convenience of presentation, we next introduce some more notations. Given any vector $\tau=\{ \tau_{e}\}_{e\in \operatorname{U}}\in {[0,\infty)}^{\operatorname{U}}$ and any $W\subset [n]$, define 
\begin{equation}{ \label{equ-def-Phi-tau} }
\Phi_{\tau}(W) = \big(n^{1+4/d}d^{20}\big)^{|W|} \big(qd^6\big)^{\sum_{e \in \operatorname{K}_W} \tau_e}\,.
\end{equation}
Here we remark that $\Phi_\tau$ can be regarded as an analogue of $\Phi$ defined as in Definition~\ref{def-addmisible} for induced subgraphs of a weighted graph on $[n]$. Our definition involves vectors in ${[0,\infty)}^{\operatorname{U}}$ (understood as weights on the edges) for technical reasons which will become clear later, and conceptually one may have in mind that $\tau \in \{0, 1\}^{\operatorname{U}}$ is the vector of edge indicators of some graph on $[n]$. It would also be useful to note that for any $\tau \in {[0,\infty)}^{\operatorname{U}}$ and any $U,V\subset [n]$, one has 
\begin{equation}\label{eq-useful-property}
    \Phi_{\tau}(U \cup V) \Phi_{\tau}(U \cap V)\leq \Phi_{\tau}(U) \Phi_{\tau}(V)\,. 
\end{equation}
This is because by definition we have
\begin{align*}
    \frac{\Phi_\tau(U)\Phi_\tau(V)}{\Phi_\tau(U\cap V)}=&\ \big(n^{1+4/d}d^{20}\big)^{|U|+|V|-|U\cap V|}\big(qd^6\big)^{\sum_{e\in \operatorname{K}_U}\tau_e+\sum_{e\in \operatorname{K}_V}\tau_e-\sum_{e\in \operatorname{K}_{U\cap V}}\tau_e}\\
    =&\ \big(n^{1+4/d}d^{20}\big)^{|U\cup V|}\big(qd^6\big)^{\sum_{e\in \operatorname{K}_{U}\cup \operatorname{K}_V}\tau_e}\ge\Phi_\tau(U\cup V)\,,
\end{align*}
where the inequality follows from the facts that $E(\operatorname{K}_U\cup \operatorname{K}_V)\subset E(\operatorname{K}_{U\cup V})$ and $0<qd^6<1$.

To prove {\eqref{eq-upp-bound-truncate-prob-proper-realiaztion}}, it is crucial to give more precise characterizations of improper realizations in $\mathcal E$. To get a feeling about this, recalling $\mathsf{1}_{\mathsf E}=(1,\dots,1)\in \{0,1\}^{\mathsf E}$, we claim that for any improper $\omega$ and any $\mathsf e\in \mathsf E$, there exists $W\subset [n]$ with $|W| \leq 2d^2$ such that \begin{equation}\label{eq-discussion-above-lemma-A-3}
    V(\mathsf e)\subset W\text{ and }\Phi_{\omega\oplus\mathsf{1}_{\mathsf E}\oplus\chi_S}(W)<(\log n)^{-1}\,. 
\end{equation}
In addition, if $\omega\in \mathcal E$, then it further holds that $|W|\le d^2$. In other words, for any $\omega\stackrel{*}{\sim}(\mathsf E,\chi_S)$ we have the following property: in the graph with edge indicators $\omega\oplus\mathsf{1}_{\mathsf E}\oplus\chi_S$, each edge $\mathsf e\in \mathsf E$ is contained in a bad subgraph with at most $d^2$ vertices. This is true because otherwise we have (note that $\Gc$ is increasing) 
$$
\mathbf{1}_{ \omega \oplus \chi_{\mathsf E }   \oplus \chi_S   \in \mathcal{G}   } = \mathbf{1}_{ \omega \oplus \chi^{\mathsf e}_{\mathsf E}   \oplus \chi_S   \in \mathcal{G}  }\,,\forall \chi_{\mathsf E}\in\{0,1\}^{\mathsf E}
$$
(where $\chi_{\mathsf E}^{\mathsf{e}}$ differs from $\chi_\mathsf E$ exactly for edge $\mathsf e$), and thus \eqref{eq-def-proper-realization} cancels to $0$, which contradicts the fact that $\omega$ is improper. We enhance this claim to a much stronger assertion, as incorporated in the next lemma.

\begin{lemma}{\label{lemma-containment-maximum-graph}}
For any $\chi_S\in \mathsf A_S$ and $\omega\stackrel{*}{\sim}({\mathsf E},\chi_S)$, there exists $W\subset [n]$ with $|W|\le d^2$ such that $V(\mathsf E)\subset W$ and $\Phi_{ \omega \oplus \mathsf{1}_{\mathsf E}  \oplus \chi_S  } (W)<(\log n)^{-1}$.
\end{lemma}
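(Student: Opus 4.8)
The starting point is the observation recorded just before the statement: for $\omega\stackrel{*}{\sim}(\mathsf E,\chi_S)$, writing $\tau=\omega\oplus\mathsf{1}_{\mathsf E}\oplus\chi_S$ and calling a vertex set $W\subset[n]$ \emph{bad} when $\Phi_\tau(W)<(\log n)^{-1}$, every edge $\mathsf e\in\mathsf E$ lies in a bad set $W_{\mathsf e}$ with $V(\mathsf e)\subseteq W_{\mathsf e}$ and $|W_{\mathsf e}|\le d^2$. The same kind of reasoning shows moreover that, since $\omega$ is improper, the configuration $\tau$ contains no bad subgraph using no edge of $\mathsf E$ (such a subgraph would already be present in $\omega\oplus\mathsf{0}_{\mathsf E}\oplus\chi_S$, hence, by monotonicity, in every $\omega\oplus\chi_{\mathsf E}\oplus\chi_S$, making the alternating sum defining improperness vanish); equivalently, noting that a vertex set spanning no $\tau$-edge has $\Phi_\tau\ge1$, every bad set of size at most $d^2$ spans at least one edge of $\mathsf E$. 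The plan is to glue the local witnesses $W_{\mathsf e}$ into a single bad set of size $\le d^2$ containing all of $V(\mathsf E)$, using the submodularity inequality \eqref{eq-useful-property} for $\Phi_\tau$ and the truncation event $\mathcal E$ of Definition~\ref{def-Ec} (to which $\omega$ belongs) to control the size.

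First I would absorb the edges of $\mathsf E$ one at a time, maintaining a bad set $B$ with $|B|\le d^2$ that contains $V(\mathsf e)$ for every edge treated so far, starting from $B=W_{\mathsf e_1}$. To absorb an edge $\mathsf e$ with $V(\mathsf e)\not\subseteq B$, look at $B\cup W_{\mathsf e}$: by \eqref{eq-useful-property}, $\Phi_\tau(B\cup W_{\mathsf e})\le\Phi_\tau(B)\Phi_\tau(W_{\mathsf e})/\Phi_\tau(B\cap W_{\mathsf e})$, so if $B\cap W_{\mathsf e}$ is not bad then $B\cup W_{\mathsf e}$ is bad and has at most $2d^2$ vertices. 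Deleting the $\tau$-isolated vertices of $B\cup W_{\mathsf e}$ only decreases $\Phi_\tau$ and removes no vertex of $V(\mathsf E)$ (all edges of $\mathsf E$ are present in $\tau$); were the result still to have more than $d^2$ vertices, it would be a bad subgraph with no isolated vertex on some number of vertices in $(d^2,2d^2]$, contradicting $\mathcal E$, so it has at most $d^2$ vertices and becomes the new $B$. Iterating over $\mathsf E$ then yields the required $W$.

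The hard part will be the case $B\cap W_{\mathsf e}$ is itself bad, where the bound above is vacuous and one cannot simply pass to $B\cap W_{\mathsf e}$ (which need not contain $V(\mathsf e)$). To deal with this I would not fix the witnesses in advance, but work with a family $U_1,\dots,U_r$ of bad sets of size $\le d^2$, each reduced to its $\tau$-non-isolated part, whose spanned $\mathsf E$-edges jointly cover $\mathsf E$ (such a family exists by the first paragraph), chosen to minimize $\sum_i|U_i|$. Minimality forces that if two members intersect then their union is not bad (otherwise, after the isolated-vertex cleanup and the $\mathcal E$-bound as above, those two could be merged into a single set, strictly decreasing $\sum_i|U_i|$), whence by \eqref{eq-useful-property} their intersection is bad; but a bad set of size $\le d^2$ must span an edge of $\mathsf E$, and the goal is to leverage this into a contradiction with minimality, so that in the end the $U_i$ must be pairwise disjoint. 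In the disjoint case $\Phi_\tau(\bigcup_iU_i)\le\prod_i\Phi_\tau(U_i)<(\log n)^{-1}$ by multiplicativity of $\Phi_\tau$ over disjoint unions, $\bigcup_iU_i\supseteq V(\mathsf E)$, and $|\bigcup_iU_i|\le d^2$ as well: otherwise the shortest prefix $U_1\cup\dots\cup U_k$ with more than $d^2$ vertices has at most $2d^2$ of them and is bad, contradicting $\mathcal E$ once more. Making the shrinking step precise, i.e.\ showing that a bad intersection always allows the family to be reduced, is the crux of the proof, and it is exactly here that the $1:2$ ratio between the two vertex thresholds in the definition of $\mathcal E$ gets used.
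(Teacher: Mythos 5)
You have correctly extracted the two consequences of improperness that get used (each $\mathsf e\in\mathsf E$ is contained in a bad set of size $\le d^2$, and — a nice observation — every bad set of size $\le d^2$ must span some edge of $\mathsf E$), and your ``absorb edges one at a time via \eqref{eq-useful-property}, then clean up isolated vertices and invoke $\mathcal E$ to keep the size below $d^2$'' step is sound when the intersection with the current accumulator is \emph{not} bad. But you yourself flag the crux — the case where the intersection is bad — as unresolved, and I do not see how to close it along the lines you sketch. The minimization of $\sum_i|U_i|$ does not force disjointness: if $U_1\cap U_2$ is bad, it spans some $\mathsf e_0\in\mathsf E$ (your observation), but $\mathsf e_0$'s being doubly covered gives no contradiction with minimality, and you cannot shrink $U_1$ to $U_1\cap U_2$ because $U_1\setminus U_2$ may host $\mathsf E$-edges covered by no other $U_i$. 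The observation ``every bad set spans some $\mathsf E$-edge'' is too weak: what matters is \emph{which} $\mathsf E$-edges a bad intersection spans, and your argument has no control over that.

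The paper's route is genuinely different and circumvents exactly this obstacle. It does not glue generic witnesses; instead it fixes for each $\mathsf e\in\mathsf E$ a very specific witness $V_{\mathsf e}$ — the $\Phi_\tau$-minimizer over sets of size $\le 2d^2$ containing $V(\mathsf e)$, with ties broken by taking the one of maximal cardinality — and proves the nesting property \eqref{eq-containing}: the $V_{\mathsf e}$'s form a chain, so the desired $W$ is simply the maximal one. The proof of nesting is by contradiction and uses improperness in a much sharper way than your global observation. The extremality of $V_{\mathsf e_1},V_{\mathsf e_2}$ via \eqref{eq-useful-property} forces $\Phi_\tau(V_{\mathsf e_i})>\Phi_\tau(V_{\mathsf e_1}\cap V_{\mathsf e_2})$ and hence $V(\mathsf e_1),V(\mathsf e_2)\not\subset V_{\mathsf e_1}\cap V_{\mathsf e_2}$; one then lets $\mathsf E'$ be the $\mathsf E$-edges lying inside $V_{\mathsf e_1}\cap V_{\mathsf e_2}$ (so $\mathsf e_1,\mathsf e_2\notin\mathsf E'$), and the key technical step is Claim~\ref{Lemma-no-bad-graph}: for \emph{every} locally admissible perturbation $\chi_{\mathsf E'}$ of the indicators on $\mathsf E'$, no bad $\le d^2$-vertex subgraph through $\mathsf e_1$ survives. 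This lets one pair off terms $\chi_{\mathsf E\setminus\mathsf E'}$ and $\chi^1_{\mathsf E\setminus\mathsf E'}$ (flipping only $\mathsf e_1$) in the alternating sum \eqref{eq-def-proper-realization}, forcing it to vanish — contradicting improperness. This ``perturb the interior edges and show $\mathsf e_1$ is non-pivotal'' argument is the ingredient your proposal is missing: it is precisely what controls which $\mathsf E$-edges lie in the problematic intersection, and without it the bad-intersection case cannot be resolved.
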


The proof of Lemma~\ref{lemma-containment-maximum-graph} is quite technical and it is postponed to Section~\ref{subsec-B2}. As a result of this lemma, we see that for any $\chi_S\in \mathsf A_S$, $\widehat\Pb[\omega\stackrel{*}{\sim}({\mathsf E},\chi_S)]$ is upper-bounded by
\[
\widehat\Pb\big[\omega:\exists W\subset [n]\text{ s.t. }|W|\le d^2,W\supset V(\mathsf E)\text{, and }\Phi_{\omega\oplus\mathsf{1}_{\mathsf E}\oplus \chi_S}(W)< (\log n)^{-1}\big]\,.
\]
In order to upper-bound the latter probability, a natural attempt is to employ the union bound. However, a direct union bound does not seem to yield an efficient upper bound due to the intricate intersection patterns of the induced graph on $W$ and $S$, which lead to additional complications. That being said, in the special scenario where the densest induced subgraph of $S$ is $S$ itself (to be precise, when $\Phi_{\omega\oplus\mathsf{1}_{\mathsf E}\oplus\chi_S}(V(S))\le \Phi_{\omega\oplus\mathsf{1}_{\mathsf E}\oplus\chi_S}(U)$ holds for any $U\subset V(S)$), a further reduction on the intersection pattern can be performed and then the union bound essentially works. Specifically, provided that $S$ satisfies the aforementioned condition, one has
\begin{align*}
&\qquad\{\omega:\exists W\text{ s.t. }|W|\le d^2,W\supset V(\mathsf E),\Phi_{\omega\oplus\mathsf{1}_{\mathsf E}\oplus\chi_S}(W)<(\log n)^{-1}\}\subset
\\
&\{\omega:\exists W'\text{ s.t. }|W'|\le 2d^2, W'\supset V(\mathsf E)\cup V(S),\Phi_{\omega\oplus\mathsf{1}_{\mathsf E}\oplus\chi_S}(W')<(\log n)^{-1}\}\,.
\end{align*}
The claim above could be easily verified by taking $W'=W\cup V(S)$ (where $W$ is the witness of the event on the left hand side) and then making use of the densest subgraph condition to show that $W'$ witnesses the event on the right hand side. Based on this, one may take the union bound over $W'$ to obtain an estimate that suffices for our purpose.

Inspired by the preceding observation, our main goal is to somehow reduce the problem to the case that $S$ itself is the densest induced subgraph, and this motivates the following lemma. Recall that we have assumed $|V(S)|\ge 10$ and $S$ contains all edges in $V(S) \setminus \mathsf E$. Hence $\mathsf{1}_{\mathsf E}\oplus \chi_S  $ encodes a graph with vertex set ${V(S)}$ (viewed as a weighted graph with weights $0$ or $1$), and thus for any subset $U$ of $V(S)$ we see that $\Phi_{ \mathsf{1}_{\mathsf E}   \oplus \chi_S   }(U)$ is well-defined (which is independent of the choice of $\omega$). 
\begin{lemma}{ \label{lemma-expand-graph} }
    For any $\chi_S\in\mathsf A_S$, there exists $\xi_S=\{\xi_e\}_{e\in E(S)}\in{[0,\infty)}^{E(S)}$ such that the following hold:  \\
    \noindent (a) $\xi_{e} \geq \chi_e,\forall e\in E(S)$.\\
    \noindent (b)  $\Phi_{ \mathsf{0}_{\mathsf E}   \oplus \xi_S   } (H) \geq (\log n)^{-1},\forall H\subset V(S)$ (recalling that $\mathsf{0}_{\mathsf E}=(0,\dots,0)\in \{ 0,1 \}^{\mathsf E}$).\\
    \noindent (c) $\Phi_{ \mathsf{1}_{\mathsf E}   \oplus \xi_S   } (V(S)) \leq \Phi_{ \mathsf{1}_{\mathsf E}   \oplus \xi_S   } (H),\forall H \subset V(S)$.
\end{lemma}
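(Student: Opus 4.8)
The plan is to obtain $\xi_S$ as the maximizer of a linear functional over the polytope cut out by conditions (a) and (b), and to deduce (c) from the optimality conditions. Recall the standing normalization for this subsection: $E(S)=\operatorname{K}_{V(S)}\setminus\mathsf E$, $|V(S)|\ge 10$, and $S$ has no isolated vertex, so every vertex of $V(S)$ lies on an edge of $E(S)$. Write $a=n^{1+4/d}d^{20}>1$, $b=qd^6\in(0,1)$, $\gamma=\frac{\log a}{|\log b|}$, and $\xi(F)=\sum_{e\in F}\xi_e$ for $F\subseteq E(S)$; since for $H\subseteq V(S)$ one has $\operatorname{K}_H\subseteq \operatorname{K}_{V(S)}=E(S)\sqcup(\mathsf E\cap\operatorname{K}_{V(S)})$, condition (b) reads $\xi(\operatorname{K}_H\cap E(S))\le\beta(H)$ with $\beta(H):=\gamma|H|+\tfrac{\log\log n}{|\log b|}$, a \emph{modular} function of $H$. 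Set $\mathcal K=\{\xi\in[0,\infty)^{E(S)}:\xi\ge\chi_S,\ \xi(\operatorname{K}_H\cap E(S))\le\beta(H)\text{ for all }H\subseteq V(S)\}$. Then $\mathcal K$ is nonempty, since $\chi_S\in\mathcal K$: self-admissibility of $\chi_S$ and the fact that subgraphs of admissible graphs are admissible give $\Phi_{\mathsf{0}_{\mathsf E}\oplus\chi_S}(H)\ge\Phi\big((\text{subgraph of }S\text{ induced on }H\text{ by }\chi_S\text{-present edges})\big)\ge(\log n)^{-1}$ (using $a>1$ to drop isolated vertices). Moreover $\mathcal K$ is compact, since the constraints from pairs $H=\{u,v\}$ bound each coordinate. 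Hence we may take $\xi_S$ to \emph{maximize} $\sum_{e\in E(S)}\xi_e$ over $\mathcal K$; then (a) and (b) hold by construction, and it remains to verify (c).

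\textbf{Optimality consequences.} First, for each $e\in E(S)$ some (b)-constraint through $e$ must be tight, i.e.\ there is $H\subseteq V(S)$ containing both endpoints of $e$ with $\xi_S(\operatorname{K}_H\cap E(S))=\beta(H)$; otherwise $\xi_e$ could be increased by a positive amount while staying in $\mathcal K$, contradicting maximality. Second, $f(H):=\beta(H)-\xi_S(\operatorname{K}_H\cap E(S))$ is submodular---a modular function minus the supermodular function $H\mapsto\xi_S(\operatorname{K}_H\cap E(S))$---and nonnegative, so the family of tight sets $\{H:f(H)=0\}$ is closed under unions and has a largest element $H_{\max}$ containing every tight set. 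Since every $v\in V(S)$ has an incident edge $e\in E(S)$, and $e$ lies in some tight set $\subseteq H_{\max}$, we get $v\in H_{\max}$; thus $H_{\max}=V(S)$, so the $H=V(S)$ constraint of (b) is an equality: $\xi_S(E(S))=\beta(V(S))$.

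\textbf{Deducing (c).} Suppose (c) failed, so some $H_0\subsetneq V(S)$ has $\Phi_{\mathsf{1}_{\mathsf E}\oplus\xi_S}(H_0)<\Phi_{\mathsf{1}_{\mathsf E}\oplus\xi_S}(V(S))$; set $T=V(S)\setminus H_0\neq\emptyset$ and $\tau=\mathsf{1}_{\mathsf E}\oplus\xi_S$. Taking logarithms and rearranging turns this strict inequality into
\[
\tau\big(\operatorname{K}_{V(S)}\setminus\operatorname{K}_{H_0}\big)\ <\ \gamma\,|T|\,.
\]
On the other hand, since $\tau\ge 0$ everywhere and $\tau=\xi_S$ on $E(S)$, and using $\xi_S(E(S))=\beta(V(S))$ together with the (b)-constraint $\xi_S(\operatorname{K}_{H_0}\cap E(S))\le\beta(H_0)$ and the modularity of $\beta$,
\[
\tau\big(\operatorname{K}_{V(S)}\setminus\operatorname{K}_{H_0}\big)\ \ge\ \xi_S(E(S))-\xi_S\big(\operatorname{K}_{H_0}\cap E(S)\big)\ \ge\ \beta(V(S))-\beta(H_0)\ =\ \gamma\,|T|\,,
\]
a contradiction. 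Hence (c) holds, which completes the proof.

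\textbf{Main obstacle.} The crux is choosing the right extremal problem (maximize the total weight subject to (a) and (b)) and showing its optimizer forces $V(S)$ to be the maximal tight set of (b); this rests on the submodularity of $H\mapsto\beta(H)-\xi_S(\operatorname{K}_H\cap E(S))$ and on the structural normalization ensuring $S$ has no isolated vertex---a hypothesis that is genuinely needed, since an isolated vertex would push $\Phi_{\mathsf{1}_{\mathsf E}\oplus\xi_S}(V(S))$ above $\Phi_{\mathsf{1}_{\mathsf E}\oplus\xi_S}(\emptyset)$ and make (c) impossible. The remaining items---checking $\chi_S\in\mathcal K$, verifying modularity of $\beta$ and supermodularity of $H\mapsto\xi_S(\operatorname{K}_H\cap E(S))$, and the two displayed algebraic manipulations---are routine.
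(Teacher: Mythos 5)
Your proof is correct, and it reaches the conclusion by a genuinely different (though thematically parallel) route. The paper constructs $\xi_S$ greedily: it fixes an arbitrary linear order on $E(S)$ and raises $\xi_{\mathtt e}$ one edge at a time to the supremum compatible with the admissibility constraints. It then defines, for each $\mathtt e\in E(S)$, the set $W_{\mathtt e}$ minimizing $\Phi_{\mathsf 0_{\mathsf E}\oplus\xi_S}$ among $W\subset V(S)$ containing $V(\mathtt e)$, shows $\Phi_{\mathsf 0_{\mathsf E}\oplus\xi_S}(W_{\mathtt e})=(\log n)^{-1}$ from the maximality of the step in which $\xi_{\mathtt e}$ was set, and invokes the multiplicative inequality \eqref{eq-useful-property} to force all the $W_{\mathtt e}$'s to coincide, hence equal $V(S)$; finally it passes from $\mathsf 0_{\mathsf E}$ to $\mathsf 1_{\mathsf E}$ by multiplying through by $(qd^6)^{|\mathsf E|}$. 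You instead solve the single global linear program $\max\sum_e\xi_e$ over the polytope cut out by (a)--(b), and the same submodular structure enters additively after taking logs: the slack $f(H)=\beta(H)-\xi_S(\operatorname{K}_H\cap E(S))$ is submodular, so its zero set is a union-closed family with top element $H_{\max}$, and optimality forces every edge of $E(S)$ to lie in some tight set, hence $H_{\max}=V(S)$. Both arguments rest on precisely the same ingredients---log-submodularity of $\Phi$, the normalization $E(S)=\operatorname{K}_{V(S)}\setminus\mathsf E$, and the absence of isolated vertices---and both reach the same pivotal identity $\xi_S(E(S))=\beta(V(S))$, i.e.\ $\Phi_{\mathsf 0_{\mathsf E}\oplus\xi_S}(V(S))=(\log n)^{-1}$. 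The practical difference is that your LP formulation avoids the paper's arbitrary edge ordering and makes the lattice structure of tight constraints explicit, and your derivation of (c) proceeds by a direct algebraic inequality on $\operatorname{K}_{V(S)}\setminus\operatorname{K}_{H_0}$ rather than by comparing $(qd^6)$-powers; conversely the paper's greedy construction is more overtly algorithmic. Either presentation would serve.
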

\begin{remark}
    Lemma~\ref{lemma-expand-graph} may be understood in the following way: it states that any self-admissible graph may be enlarged to a weighted graph (as in (a)) which is still self-admissible (in the sense of (b)), while its densest subgraph is itself (as in (c)). We also point out that it is plausible to impose the additional condition that $\xi_e\in \{0,1\}$ for all $e\in E(S)$, such that we will have a subgraph in the conventional sense. However, introducing the weighted graph does not causing any additional conceptual difficulty, while it helps circumvent several technical issues. Therefore, we opt to work with the weighted version.
\end{remark}

The proof of Lemma~\ref{lemma-expand-graph} is standard and we leave it to the end of Section~\ref{subsec-B2}. We now finish the proof of Lemma~\ref{lemma-prob-proper-realization}.
\begin{proof}[Proof of Lemma~\ref{lemma-prob-proper-realization}.]
    For any $\chi_S\in \mathsf A_S$ and $\omega\stackrel{*}{\sim}({\mathsf E},\chi_S)$, from Lemma~\ref{lemma-containment-maximum-graph} there exists $W\supset V(\mathsf E)$ such that $\Phi_{ \omega \oplus \mathsf{1}_{\mathsf E}   \oplus \chi_S   }(W)<(\log n)^{-1}$. Let $\xi_S$ be defined as in Lemma~\ref{lemma-expand-graph}. Then since $\chi_e\le \xi_e$ for all $e\in E(S)$ we have 
    \[
    \Phi_{\omega\oplus\mathsf{1}_{\mathsf E}\oplus\xi_S}(W)\le  \Phi_{\omega\oplus\mathsf{1}_{\mathsf E}\oplus\chi_S}(W)<(\log n)^{-1}\,.
    \]
    Using Lemma~\ref{lemma-expand-graph} (c), we have $$\Phi_{ \omega \oplus \mathsf{1}_{\mathsf E}   \oplus \xi_S } (W \cap V(S)) \geq \Phi_{ \omega \oplus \mathsf{1}_{\mathsf E} \oplus \xi_S }(V(S))\,.$$
    Combined with \eqref{eq-useful-property}, this yields that
    \begin{align*}
        \Phi_{ \omega \oplus \mathsf{1}_{\mathsf E} \oplus \xi_S } (W \cup V(S)) &\leq \frac{ \Phi_{ \omega \oplus \mathsf{1}_{\mathsf E} \oplus \xi_S }(W) \Phi_{ \omega \oplus \mathsf{1}_{\mathsf E} \oplus \xi_S }(V(S)) }{ \Phi_{ \omega \oplus \mathsf{1}_{\mathsf E} \oplus \xi_S }(W \cap V(S)) } \\
        &\leq \Phi_{ \omega \oplus \mathsf{1}_{\mathsf E}   \oplus \xi_S }(W)<(\log n)^{-1} \,.
    \end{align*}
    In conclusion, we have shown that once $\omega$ is an improper realization, there exists $W'=W\cup V(S) \subset [n]$ such that $W'\supset V(\mathsf E)\cup V(S)$, $|W'|\le 2d^2$ and $\Phi_{ \omega \oplus \mathsf{1}_{\mathsf E}  \oplus \xi_{S}  } (W')<(\log n)^{-1}$ (as outlined in the discussion below Lemma~\ref{lemma-containment-maximum-graph}). This implies that $\widehat\Pb[\omega\stackrel{*}{\sim}({\mathsf E},\chi_S)]$ is upper-bounded by
    \begin{align*}
    \widehat{\mathbb{P}}\big[\omega: \exists W \supset V(\mathsf E)\cup V(S),|W|\le 2d^2,\Phi_{\omega \oplus \mathsf{1}_{\mathsf E}  \oplus \xi_{S} }(W)<(\log n)^{-1} \big] \,.
    \end{align*}
    Denote $m=|V(S)|$ and $v=|V(\mathsf E) \setminus V(S)|$ for brevity. We will apply a union bound on all possible $W \subset [n]$ with $|W|=k\ge m+v$ such that $\Phi_{\omega \oplus \mathsf{1}_{\mathsf E}  \oplus \xi_{S} }(W)< (\log n)^{-1}$. For each $k\ge m+v$ we define
    \begin{equation}{ \label{eq-def-ENUM} }
        \mbox{ENUM}(k) = \Big\{ W \subset [n] : W\supset V(\mathsf E)\cup V(S), |W|= k \Big\} \,.
    \end{equation}
    Clearly for all $W \in \mbox{ENUM}(k)$, $|W \setminus (V(S) \cup V(\mathsf E))| = k-m-v$. Thus,
    \begin{align*}
        |\mbox{ENUM}(k)| = \binom{n}{k-m-v} \leq n^{k-m-v} \,.
    \end{align*}
    We also define
    \begin{equation}{ \label{eq-def-PROB} }
        \mbox{PROB}(k) = \max \Big\{ \widehat{\mathbb{P}}\big[\omega: \Phi_{ \omega \oplus \mathsf{1}_{\mathsf E}  \oplus \xi_{S}  }(W)<(\log n)^{-1} \big] : W \in \mbox{ENUM}(k) \Big\} \,.
    \end{equation}
    We next upper-bound $\mbox{PROB}(k)$. Denote $\|\xi_S\|_1= \sum_{e \in E(S)} \xi_e $. 
    In addition, define
    \begin{equation}{ \label{eq-def-tilde-E(k)} }
        \zeta(k) = \min \big\{ k' \geq 0 : \big(n^{1+4/d} d^{20}\big)^{k} \big(qd^6\big)^{k'+\|\xi_S\|_1+|\mathsf E|} < (\log n)^{-1} \big\} \,.
    \end{equation}
    Then there must be at least $\zeta(k)$ edges in $E(\operatorname{K}_W) \setminus (E(S) \cup \mathsf E)$, and thus
    \begin{align*}
        \widehat{\mathbb{P}} \big[ \Phi_{ \omega \oplus \mathsf{1}_{\mathsf E} \oplus\xi_{S} }(W)<(\log n)^{-1} \big] \leq \mathbb{P} \big[ \mathbf{B}(k^2,p) \geq \zeta(k) \big]\,.
    \end{align*}
    Due to the fact that $k\le 2d^2=n^{o(1)}$ and $p \le qd$ (since we have assumed $\rho \ge 1/d$ at the beginning of Section~\ref{sec-proof-of-sparse}), we see $k^2p=o(1)$ and thus the probability above is upper-bounded by $(k^2p)^{ \zeta(k)} \le (4d^5 q)^{\zeta(k)}$ from Poisson approximation.
    Moreover, using the conditions $\big(n^{1+4/d} d^{20}\big)^{k} \big(q d^6\big)^{\zeta(k)+\|\xi_S\|_1+|\mathsf E|} <(\log n)^{-1}$ by \eqref{eq-def-tilde-E(k)} and $\big(n^{1+4/d} d^{20}\big)^{m} \big(q d^6\big)^{\|\xi_S\|_1}\ge (\log n)^{-1}$ by Lemma~\ref{lemma-expand-graph} (b), we have 
    \begin{align*}
        q^{\zeta(k)} \leq \big(n^{1+4/d} d^{20}\big)^{-k+m} d^{-6(\zeta(k)+|\mathsf E|)}q^{-|\mathsf E|} \leq d^{-6\zeta(k)} \big(nd^{20}\big)^{-k+m} q^{-|\mathsf E|} \,.
    \end{align*}
    Thus, we have $\operatorname{PROB}(k) \leq \big(n d^{20}\big)^{-k+m}q^{-|\mathsf E|}$. By applying a union bound we know that
    \begin{align*}
    &\quad \widehat{\mathbb{P}}\big[\omega: \exists W \supset V(\mathsf E)\cup V(S), |W|\le 2d^2, \Phi_{\omega \oplus \mathsf{1}_{\mathsf E}  \oplus \xi_{S} }(W)<(\log n)^{-1} \big] \\
    \leq&\ \sum_{k=v+m}^{2d^2} |\mbox{ENUM}(k)| \cdot \mbox{PROB}(k)
        \leq \sum_{k=v+m}^{2d^2} n^{k-m-v} \big(n d^{20}\big)^{-k+m}q^{-|\mathsf E|} \\
        =&\ n^{-v} q^{-|\mathsf E|} \sum_{k=v+m}^{2d^2} d^{-20(k-m)}\le 2d^{-20v} n^{-v} q^{-|\mathsf E|}=2\big(n d^{20}\big)^{-|V(\mathsf E)\setminus V(S)| }q^{-|\mathsf E|} \,.
    \end{align*}
    This completes the proof of Lemma~\ref{lemma-prob-proper-realization}.
\end{proof} 

\subsection{Proof of Lemma~\ref{lemma-containment-maximum-graph} and Lemma~\ref{lemma-expand-graph}}\label{subsec-B2}

We now turn to the proof of Lemma~\ref{lemma-containment-maximum-graph}. Fix $\chi_S\in \mathsf A_S$ together with an improper realization $\omega \stackrel{*}{\sim}({\mathsf E},\chi_S)$. 
For each edge $\mathsf e\in \mathsf E$, 
we define $V_{\mathsf e}\subset [n]$ as
\begin{equation}{ \label{eq-def-V_k} }
    V_{\mathsf e} = \arg \min_{ \substack{ W\supset V(\mathsf e) \\ {|W| \leq 2d^2} } } \Phi_{ \omega \oplus \mathsf{1}_{\mathsf E}   \oplus \chi_S   } (W) \,,
\end{equation}
and if there are multiple minimizers, we arbitrarily choose one from those with the maximal size.
Recalling \eqref{eq-discussion-above-lemma-A-3}, it must hold that $\Phi_{ \omega \oplus \mathsf{1}_{\mathsf E}   \oplus \chi_S   }( V_{\mathsf e} )<(\log n)^{-1}$ for any $\mathsf e\in \mathsf E$. In particular, this implies $|V_{\mathsf e}|\le d^2$ for all $\mathsf e\in \mathsf E$ since $\omega\in \mathcal E$. The proof of Lemma~\ref{lemma-containment-maximum-graph} will be completed once we prove the following fact:
\begin{equation}\label{eq-containing}
V_{\mathsf e_1}\subset V_{\mathsf e_2}\text{ or }V_{\mathsf e_2}\subset V_{\mathsf e_1},\forall \mathsf e_1\neq\mathsf e_2\in \mathsf E\,.
\end{equation}
This is because given \eqref{eq-containing}, we can just take $W$ to be the maximal set among $\{V_{\mathsf e}:\mathsf e\in \mathsf E\}$.
\begin{proof}[Proof of \eqref{eq-containing}]
Suppose on the contrary that there exists $\mathsf e_1\neq \mathsf e_2\in \mathsf E$ such that $V_{\mathsf e_1}\not \subset V_{\mathsf e_2}$ and $V_{\mathsf e_2} \not \subset V_{\mathsf e_1}$. We first show that 
\begin{equation}{ \label{eq-worse-intersection} }
    \Phi_{ \omega \oplus \mathsf{1}_{\mathsf E}  \oplus \chi_S   } ( V_{\mathsf e_1} ) > \Phi_{ \omega \oplus \mathsf{1}_{\mathsf E}  \oplus \chi_S   } (V_{\mathsf e_1} \cap V_{\mathsf e_2}) \,.
\end{equation}
Otherwise, supposing \eqref{eq-worse-intersection} fails, we apply \eqref{eq-useful-property} and get that
\begin{align*}
    \Phi_{ \omega \oplus \mathsf{1}_{\mathsf E}   \oplus \chi_S   } (V_{\mathsf e_1} \cup V_{\mathsf e_2}) \leq \frac{ \Phi_{ \omega \oplus \mathsf{1}_{\mathsf E}   \oplus \chi_S   } (V_{\mathsf e_1})\Phi_{ \omega \oplus \mathsf{1}_{\mathsf E}   \oplus \chi_S   } (V_{\mathsf e_2}) }{ \Phi_{ \omega \oplus \mathsf{1}_{\mathsf E}   \oplus \chi_S   } (V_{\mathsf e_1} \cap V_{\mathsf e_2}) } \le \Phi_{ \omega \oplus \mathsf{1}_{\mathsf E}   \oplus \chi_S   } (V_{\mathsf e_2}) \,.
\end{align*}
But $|V_{\mathsf e_2}|<|V_{\mathsf e_1}\cup V_{\mathsf e_2}|\le 2d^2$, contradicting the choice of $V_{\mathsf e_2}$. Similarly we can show that 
\begin{align}
    \Phi_{ \omega \oplus \mathsf{1}_{\mathsf E}  \oplus \chi_S   } ( V_{\mathsf e_2} ) > \Phi_{ \omega \oplus \mathsf{1}_{\mathsf E}  \oplus \chi_S   } (V_{\mathsf e_1} \cap V_{\mathsf e_2}) \,. \label{eq-useful-property-symmetry-part}
\end{align}

An immediate consequence of \eqref{eq-worse-intersection} and \eqref{eq-useful-property-symmetry-part} is that neither $V(\mathsf e_1)$ nor $V(\mathsf e_2)$ is contained in $ V_{\mathsf e_1} \cap V_{\mathsf e_2}$ (otherwise it would violate the choice of $V_{\mathsf e_1}$ or $V_{\mathsf e_2}$). Now we define
$
\mathsf{E}'=\{\mathsf{e}\in \mathsf E:V(\mathsf{e}) \subset V_{\mathsf e_1} \cap V_{\mathsf e_2}\}
$ (it is possible that $\mathsf E'=\emptyset$), 
and we see that $\mathsf e_1,\mathsf e_2\notin \mathsf E'$.
For $\chi_{\mathsf E'}\in \{0,1\}^{\mathsf E'}$, we say that $\chi_{\mathsf E'}$ is locally admissible with respect to $\omega$ and $\chi_S$, if the induced subgraph on $V_{\mathsf e_1}\cap V_{\mathsf e_2}$ of the graph encoded by $\omega\oplus\chi_{\mathsf E'}\oplus\mathsf{1}_{\mathsf E\setminus \mathsf E'}\oplus\chi_S$ is admissible, where $\mathsf{1}_{\mathsf E\setminus \mathsf E'}$ denotes for the all-one vector in $\{0,1\}^{\mathsf{E}\setminus\mathsf{E}'}$. 

The key point of deriving a contradiction lies in the following claim:

\begin{claim}{\label{Lemma-no-bad-graph}}
For any $\chi_{\mathsf E'}$ which is locally admissible with respect to $\omega$ and $\chi_S$, let $G(\chi_{\mathsf E'})$ be the graph encoded by $\omega \oplus \chi_{\mathsf{E}'}\oplus\mathsf{1}_{\mathsf E\setminus \mathsf E'}   \oplus \chi_S  $. Then there is no bad subgraph of $G(\chi_{\mathsf E'})$ with at most $d^2$ vertices that contains $\mathsf e_1$.
\end{claim}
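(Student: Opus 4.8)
The strategy is to argue by contradiction: suppose there were a bad subgraph $H$ of $G(\chi_{\mathsf E'})$ with $|V(H)|\le d^2$ that contains the edge $\mathsf e_1$. The idea is to use $H$ to build a witness set that contradicts the minimality in the definition of $V_{\mathsf e_1}$. First I would record what ``$H$ bad'' means in the language of the weighted potential: if we let $\tau = \omega\oplus\chi_{\mathsf E'}\oplus\mathsf 1_{\mathsf E\setminus\mathsf E'}\oplus\chi_S$ be the edge-indicator vector of $G(\chi_{\mathsf E'})$, then $H$ being a bad subgraph containing $\mathsf e_1$ with $|V(H)|\le d^2$ gives $\Phi_\tau(V(H))<(\log n)^{-1}$ (since $\Phi$ is monotone in added edges, we may as well pass to the induced subgraph on $V(H)$), and $V(\mathsf e_1)\subset V(H)$. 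The point is that $\tau$ and the ``true'' indicator vector $\omega\oplus\mathsf 1_{\mathsf E}\oplus\chi_S$ differ only on edges in $\mathsf E'$, all of which have both endpoints inside $V_{\mathsf e_1}\cap V_{\mathsf e_2}$.

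Next I would take the union $W^\ast = V(H)\cup(V_{\mathsf e_1}\cap V_{\mathsf e_2})$ and apply the submodularity inequality \eqref{eq-useful-property} to the vector $\tau$:
\[
\Phi_\tau(W^\ast)\,\Phi_\tau\big(V(H)\cap V_{\mathsf e_1}\cap V_{\mathsf e_2}\big)\le \Phi_\tau(V(H))\,\Phi_\tau\big(V_{\mathsf e_1}\cap V_{\mathsf e_2}\big).
\]
Since $\chi_{\mathsf E'}$ is locally admissible, the induced subgraph of $G(\chi_{\mathsf E'})$ on $V_{\mathsf e_1}\cap V_{\mathsf e_2}$ is admissible, hence every subset of $V_{\mathsf e_1}\cap V_{\mathsf e_2}$ — in particular $V(H)\cap V_{\mathsf e_1}\cap V_{\mathsf e_2}$ — has $\Phi_\tau\ge(\log n)^{-1}$, and likewise $\Phi_\tau(V_{\mathsf e_1}\cap V_{\mathsf e_2})\ge(\log n)^{-1}$. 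Combining, $\Phi_\tau(W^\ast)\le \Phi_\tau(V(H))<(\log n)^{-1}$. Now the crucial bookkeeping step: because $\tau$ and $\omega\oplus\mathsf 1_{\mathsf E}\oplus\chi_S$ agree off $\mathsf E'\subset E(\operatorname{K}_{V_{\mathsf e_1}\cap V_{\mathsf e_2}})\subset E(\operatorname{K}_{W^\ast})$ and $\tau\le\omega\oplus\mathsf 1_{\mathsf E}\oplus\chi_S$ coordinatewise (on $\mathsf E'$ we have $\chi_{\mathsf e}\le 1$), we get $\Phi_{\omega\oplus\mathsf 1_{\mathsf E}\oplus\chi_S}(W^\ast)\le\Phi_\tau(W^\ast)<(\log n)^{-1}$. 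Also $|W^\ast|\le |V(H)|+|V_{\mathsf e_1}\cap V_{\mathsf e_2}|\le 2d^2$, and $V(\mathsf e_1)\subset V(H)\subset W^\ast$. I would then split on whether $V_{\mathsf e_1}\subset W^\ast$ or not: in either case, one of $W^\ast$ or $W^\ast\cup V_{\mathsf e_1}$ is a set of size at most $2d^2$ containing $V(\mathsf e_1)$ with $\Phi_{\omega\oplus\mathsf 1_{\mathsf E}\oplus\chi_S}$-value strictly below $(\log n)^{-1}$; and if it strictly contains $V_{\mathsf e_1}$ or has strictly smaller potential it contradicts the (maximal-size) choice of $V_{\mathsf e_1}$ in \eqref{eq-def-V_k}. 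Here I would need to use \eqref{eq-worse-intersection}, i.e. that $\Phi(V_{\mathsf e_1})>\Phi(V_{\mathsf e_1}\cap V_{\mathsf e_2})$, to guarantee that $V(H)$ is not entirely inside $V_{\mathsf e_1}\cap V_{\mathsf e_2}$ (otherwise $V(H)$ would already contradict the choice of $V_{\mathsf e_1}$), so that $W^\ast$ genuinely extends beyond $V_{\mathsf e_1}$ or has strictly smaller potential.

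The main obstacle I anticipate is the last combinatorial case analysis: carefully exhibiting a set that strictly beats $V_{\mathsf e_1}$ either in potential or (at equal potential) in size, which is exactly where the ``maximal size among minimizers'' tie-breaking convention in \eqref{eq-def-V_k} and the strict inequality \eqref{eq-worse-intersection} have to be invoked in just the right way. A secondary subtlety is making sure the local admissibility hypothesis is used in full strength — it must control \emph{all} subsets of $V_{\mathsf e_1}\cap V_{\mathsf e_2}$, not merely the set itself — so that the submodularity bound closes. Everything else is routine manipulation of the multiplicative potential $\Phi_\tau$ and its monotonicity and submodularity properties already established in \eqref{eq-useful-property}.
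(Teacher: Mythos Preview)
Your submodularity step does not close. From
\[
\Phi_\tau(W^\ast)\,\Phi_\tau\big(V(H)\cap V_{\mathsf e_1}\cap V_{\mathsf e_2}\big)\le \Phi_\tau(V(H))\,\Phi_\tau\big(V_{\mathsf e_1}\cap V_{\mathsf e_2}\big)
\]
and the two lower bounds $\Phi_\tau(V(H)\cap V_{\mathsf e_1}\cap V_{\mathsf e_2})\ge(\log n)^{-1}$, $\Phi_\tau(V_{\mathsf e_1}\cap V_{\mathsf e_2})\ge(\log n)^{-1}$ you \emph{cannot} conclude $\Phi_\tau(W^\ast)\le\Phi_\tau(V(H))$; knowing both factors are $\ge(\log n)^{-1}$ says nothing about their ratio. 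What you \emph{do} get, by using $\Phi_\tau(V(H))<(\log n)^{-1}\le\Phi_\tau(V(H)\cap V_{\mathsf e_1}\cap V_{\mathsf e_2})$, is
\[
\Phi_\tau(W^\ast)<\Phi_\tau(V_{\mathsf e_1}\cap V_{\mathsf e_2}),
\]
which is weaker than $(\log n)^{-1}$ and so does not feed your subsequent case analysis. The case split you outline (on whether $V_{\mathsf e_1}\subset W^\ast$) then has no traction: merely having $\Phi_\sigma(W^\ast)<(\log n)^{-1}$ would not contradict the choice of $V_{\mathsf e_1}$, since $\Phi_\sigma(V_{\mathsf e_1})$ is itself below $(\log n)^{-1}$.

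The fix is short and makes your route cleaner than the paper's. From $\Phi_\tau(W^\ast)<\Phi_\tau(V_{\mathsf e_1}\cap V_{\mathsf e_2})$, note that $\tau$ and $\sigma:=\omega\oplus\mathsf 1_{\mathsf E}\oplus\chi_S$ differ only on $\mathsf E'$, and every edge of $\mathsf E'$ lies in $V_{\mathsf e_1}\cap V_{\mathsf e_2}\subset W^\ast$. Hence $\Phi_\sigma(W^\ast)/\Phi_\tau(W^\ast)=\Phi_\sigma(V_{\mathsf e_1}\cap V_{\mathsf e_2})/\Phi_\tau(V_{\mathsf e_1}\cap V_{\mathsf e_2})$, and therefore $\Phi_\sigma(W^\ast)<\Phi_\sigma(V_{\mathsf e_1}\cap V_{\mathsf e_2})$. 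Now invoke \eqref{eq-worse-intersection} to get $\Phi_\sigma(W^\ast)<\Phi_\sigma(V_{\mathsf e_1})$; since $V(\mathsf e_1)\subset V(H)\subset W^\ast$ and $|W^\ast|\le |V(H)|+|V_{\mathsf e_1}\cap V_{\mathsf e_2}|\le 2d^2$, this directly contradicts the definition of $V_{\mathsf e_1}$ in \eqref{eq-def-V_k}. No case analysis is needed. For comparison, the paper first passes to the minimizer $\widetilde V_{\mathsf e_1}$ for the $\tau$-potential and spends a paragraph proving $\widetilde V_{\mathsf e_1}\subset V_{\mathsf e_1}$ before running essentially the same endgame; your union-with-$V_{\mathsf e_1}\cap V_{\mathsf e_2}$ at the outset sidesteps that detour.
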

\begin{proof}[Proof of Claim~\ref{Lemma-no-bad-graph}]
Suppose the statement is false for some $\chi_{\mathsf E'}$ locally admissible with respect to $\omega$ and $\chi_S$.
Define (note that $V(\mathsf e_1)$ below means the collection of the endpoints for the edge $\mathsf e_1$, which is drastically different from $V_{\mathsf e_1}$)
\begin{align}
    \widetilde{V}_{\mathsf e_1} = \arg \min_{ \substack{ V(\mathsf e_1) \subset W \\ |W| \leq 2d^2 } } \Phi_{ \omega \oplus \chi_{\mathsf E'}\oplus\mathsf{1}_{\mathsf E\setminus \mathsf E'}   \oplus \chi_S   } (W)\,,
\end{align}
and again if there are multiple minimizers, we arbitrarily choose one that maximizes the size of $\widetilde{V}_{\mathsf e_1}$.
Then it holds that 
\begin{equation}\label{eq-<}
    \Phi_{\omega\oplus\chi_{\mathsf E'}\oplus\mathsf{1}_{\mathsf E\setminus \mathsf E'}\oplus\chi_S}(\widetilde V_{\mathsf e_1})<(\log n)^{-1}
\end{equation} by our assumption. This implies $|\widetilde V_{\mathsf e_1}|\le d^2$, since otherwise we have $d^2<|\widetilde V_{\mathsf e_1}|\le 2d^2$ and $$\Phi_{\omega\oplus\mathsf {1}_{\mathsf{E}}\oplus\chi_S}(\widetilde V_{\mathsf e_1})\le \Phi_{\omega\oplus\chi_{\mathsf E'}\oplus\mathsf{1}_{\mathsf E\setminus \mathsf E'}\oplus \chi_{S}}(\widetilde V_{\mathsf e_1})<(\log n)^{-1}\,,$$
contradicting our assumption that $\omega\stackrel{*}{\sim}(\mathsf E,\chi_S)$. 

We now prove that $\widetilde{V}_{\mathsf e_1} \subset V_{\mathsf e_1}$. By our choice of $\widetilde{V}_{\mathsf e_1}$ we have
\begin{align*}
    \Phi_{ \omega \oplus \chi_{\mathsf E'}\oplus\mathsf{1}_{\mathsf E\setminus \mathsf E'}   \oplus \chi_S   } (\widetilde{V}_{\mathsf e_1}) \le \Phi_{ \omega \oplus \chi_{\mathsf E'}\oplus \mathsf{1}_{\mathsf E\setminus \mathsf E'}   \oplus \chi_S   } ( \widetilde{V}_{\mathsf e_1} \cap V_{\mathsf e_1} ) \,.
\end{align*}
Thus (using $V(\mathsf e) \subset V_{\mathsf e_1} \cap V_{\mathsf e_2} \subset V_{\mathsf e_1}$ for $\mathsf e\in \mathsf E'$),
\begin{align*}
    \Phi_{ \omega \oplus \mathsf{1}_{\mathsf E}   \oplus \chi_S   } ( \widetilde{V}_{\mathsf e_1} ) =&\ \Phi_{ \omega \oplus \chi_{\mathsf{E}'}\oplus\mathsf{1}_{\mathsf E\setminus \mathsf E'}   \oplus \chi_S   } ( \widetilde{V}_{\mathsf e_1} ) \cdot \big(qd^6\big)^{ \#\{ \mathsf e\in \mathsf E':V(\mathsf e) \subset \widetilde{V}_{\mathsf e_1}, \chi_{\mathsf e} = 0 \} }  \\
    \leq&\ \Phi_{ \omega \oplus \chi_{\mathsf E'}\oplus\mathsf{1}_{\mathsf E\setminus \mathsf E'}   \oplus \chi_S   } ( \widetilde{V}_{\mathsf e_1} \cap V_{\mathsf e_1} ) \cdot \big(qd^6\big)^{ \#\{ \mathsf e\in \mathsf E':V(\mathsf e) \subset \widetilde{V}_{\mathsf e_1}, \chi_{\mathsf e} = 0  \}}  \\
    =&\ \Phi_{ \omega \oplus \chi_{\mathsf{E}'}\oplus \mathsf{1}_{\mathsf E\setminus \mathsf E'}   \oplus \chi_S   } ( \widetilde{V}_{\mathsf e_1} \cap V_{\mathsf e_1} ) \cdot \big(qd^6\big)^{ \#\{  \mathsf e\in \mathsf E':V(\mathsf e) \subset \widetilde{V}_{\mathsf e_1}\cap V_{\mathsf e_1}, \chi_{\mathsf e} = 0  \} }  \\
    =&\ \Phi_{ \omega \oplus \mathsf{1}_{\mathsf E} \oplus \chi_S } (\widetilde{V}_{\mathsf e_1} \cap V_{\mathsf e_1}) \,.
\end{align*}
Combined with \eqref{eq-useful-property}, this implies that
\begin{align*}
    \Phi_{ \omega \oplus \mathsf{1}_{\mathsf E} \oplus  \chi_S  } ( V_{\mathsf e_1} \cup \widetilde{V}_{\mathsf e_1} ) \leq \frac{ \Phi_{ \omega \oplus \mathsf{1}_{\mathsf E} \oplus \chi_S } (V_{\mathsf e_1}) \Phi_{ \omega \oplus \mathsf{1}_{\mathsf E}   \oplus \chi_S } (\widetilde{V}_{\mathsf e_1}) }{ \Phi_{ \omega \oplus \mathsf{1}_{\mathsf E} \oplus \chi_S  } (V_{\mathsf e_1} \cap \widetilde{V}_{\mathsf e_1}) } \leq \Phi_{ \omega \oplus \mathsf{1}_{\mathsf E} \oplus \chi_S } (V_{\mathsf e_1}) \,,
\end{align*}
which implies $\tilde V_{e_1} \subset V_{e_1}$ by the choice of $V_{e_1}$ and the fact that $|V_{e_1} \cup \tilde V_{e_1}| \leq 2d^2$. 

Having proved that $\widetilde V_{\mathsf e_1}\subset V_{\mathsf e_1}$, we note that $\widetilde{V}_{\mathsf e_1}\cap V_{\mathsf e_2}=\widetilde{V}_{\mathsf e_1} \cap (V_{\mathsf e_1} \cap V_{\mathsf e_2})$ induces an admissible graph in $\omega \oplus \chi_{\mathsf E'}\oplus \mathsf{1}_{\mathsf E\setminus \mathsf E'}   \oplus \chi_S  $ (by the definition of local admissibility). Thus, 
\begin{align}
    \Phi_{ \omega \oplus \chi_{\mathsf E'}\oplus \mathsf{1}_{\mathsf E\setminus \mathsf E'}  \oplus \chi_S   } (\widetilde{V}_{\mathsf e_1} \cap V_{\mathsf e_2})\ge (\log n)^{-1} \,.
    \label{eq->}
\end{align}
In addition, using $V(\mathsf e) \subset V_{\mathsf e_1} \cap V_{\mathsf e_2}$ for $\mathsf e\in \mathsf E'$, we get that
\begin{align}
    \nonumber\frac{ \Phi_{ \omega \oplus \mathsf{1}_{\mathsf E} \oplus  \chi_S } (\widetilde{V}_{\mathsf e_1}) }{ \Phi_{ \omega \oplus \chi_{\mathsf E'}\oplus\mathsf{1}_{\mathsf E\setminus \mathsf E'} \oplus  \chi_S } (\widetilde{V}_{\mathsf e_1}) } &= \big(qd^6\big)^{ \# \{ \mathsf e\in \mathsf E' : V(\mathsf e) \subset \widetilde{V}_{\mathsf e_1}, \chi_{\mathsf e} = 0\} } = \big(qd^6\big)^{ \# \{ \mathsf e\in \mathsf E': V(\mathsf e) \subset \widetilde{V}_{\mathsf e_1} \cap V_{\mathsf e_2}, \chi_{\mathsf e}=0 \} } \\
    &= \frac{ \Phi_{ \omega \oplus  \mathsf{1}_{\mathsf E}  \oplus  \chi_S  } (\widetilde{V}_{\mathsf e_1} \cap V_{\mathsf e_2}) }{ \Phi_{ \omega \oplus \chi_{\mathsf E'}\oplus \mathsf{1}_{\mathsf E\setminus \mathsf E'} \oplus \chi_S  } (\widetilde{V}_{\mathsf e_1} \cap V_{\mathsf e_2}) } \,.\label{eq-=}
\end{align}    
Combining \eqref{eq-=} with \eqref{eq-<}, \eqref{eq->} yields that
\begin{align*}
    \frac{ \Phi_{ \omega \oplus \mathsf{1}_{\mathsf E}   \oplus \chi_S   } (\widetilde{V}_{\mathsf e_1}) }{ \Phi_{ \omega \oplus \mathsf{1}_{\mathsf E}   \oplus \chi_S   } (\widetilde{V}_{\mathsf e_1} \cap V_{\mathsf e_2}) } = \frac{ \Phi_{ \omega \oplus \chi_{\mathsf E'}\oplus \mathsf{1}_{\mathsf E\setminus \mathsf E'}  \oplus \chi_S   } (\widetilde{V}_{\mathsf e_1}) }{ \Phi_{ \omega \oplus \chi_{\mathsf E'}\oplus \mathsf{1}_{\mathsf E\setminus \mathsf E'}  \oplus \chi_S   } (\widetilde{V}_{\mathsf e_1} \cap V_{\mathsf e_2}) } <\frac{(\log n)^{-1}}{(\log n)^{-1}} = 1 \,. 
\end{align*}
Thus, from \eqref{eq-useful-property} (keep in mind that $\widetilde V_{\mathsf e_1}\cap (V_{\mathsf e_1}\cap V_{\mathsf e_2})=\widetilde V_{\mathsf e_1}\cap V_{\mathsf e_2}$),
\begin{align*}
    \Phi_{ \omega \oplus \mathsf{1}_{\mathsf E}   \oplus \chi_S   } ( \widetilde{V}_{\mathsf e_1} \cup (V_{\mathsf e_1} \cap V_{\mathsf e_2}) ) & \leq \frac{ \Phi_{ \omega \oplus \mathsf{1}_{\mathsf E}   \oplus \chi_S   }(V_{\mathsf e_1} \cap V_{\mathsf e_2}) \cdot \Psi_{ \omega \oplus \mathsf{1}_{\mathsf E}  \oplus \chi_S   }(\widetilde{V}_{\mathsf e_1}) }{ \Phi_{ \omega \oplus \mathsf{1}_{\mathsf E}   \oplus \chi_S   }(\widetilde{V}_{\mathsf e_1} \cap V_{\mathsf e_2}) }  \\
    &<\Phi_{ \omega \oplus \mathsf{1}_{\mathsf E}   \oplus \chi_S   }(V_{\mathsf e_1} \cap V_{\mathsf e_2}) \overset{\eqref{eq-worse-intersection}}{<} \Phi_{ \omega \oplus \mathsf{1}_{\mathsf E}   \oplus \chi_S   }(V_{\mathsf e_1}) \,,
\end{align*}
which contradicts the choice of $V_{\mathsf e_1}$.
\end{proof}
With the claim in hand, we now derive a contradiction (under the assumption that \eqref{eq-containing} fails as supposed at the beginning) as follows. Firstly, in the summation of \eqref{eq-def-proper-realization}, to make it possible for $\omega \oplus \chi_{\mathsf E}   \oplus \chi_S   \in \mathcal{G}$ where $\chi_{\mathsf E}=\chi_{\mathsf E'}\oplus\chi_{\mathsf E\setminus \mathsf E'}$, one must choose $\chi_{\mathsf E'} $ to be locally admissible. Hence the left hand side of \eqref{eq-def-proper-realization} equals 
\begin{align*}
    \sum_{  \chi_{\mathsf E'}\ \textup{locally admissible}  } \Big( \sum_{ \chi_{\mathsf E\setminus \mathsf E'} \in \{  0,1\}^{\mathsf E\setminus \mathsf E'} }  (-1)^{\# \{ \mathsf e\in \mathsf E':\chi_{\mathsf e}=0 \}} \mathbf{1}_{ \omega \oplus \chi_{\mathsf E},\chi_{\mathsf E\setminus \mathsf E'} \oplus \chi_S   \in \mathcal{G}   } \Big) \,.
\end{align*}
For any $\chi_{\mathsf E\setminus \mathsf E'}\in \{0,1\}^{\mathsf E\setminus \mathsf E'}$, let $\chi_{\mathsf E\setminus \mathsf E'}^1\in \{0,1\}^{\mathsf E\setminus \mathsf E'}$ be defined as $\chi^1_{\mathsf e_1}=1-\chi_{\mathsf e_1}$ and $\chi_{\mathsf e}^1=\chi_{\mathsf e}$ for $\mathsf e\neq \mathsf e_1$. For each fixed locally admissible $\chi_{\mathsf E'}$, we claim that
\begin{align}\label{eq-cancelation}
    \mathbf{1}_{ \omega \oplus \chi_{\mathsf E'}\oplus\chi_{\mathsf E\setminus \mathsf E'}  \oplus \chi_S   \in \mathcal{G}   } = \mathbf{1}_{ \omega \oplus \chi_{\mathsf E'}\oplus \chi_{\mathsf E\setminus \mathsf E'}^1   \oplus \chi_S   \in \mathcal{G}   } \,,\forall \chi_{\mathsf E\setminus \mathsf E'}\in \{0,1\}^{\mathsf E\setminus \mathsf E'}\,.
\end{align}
Otherwise $\mathsf e_1$ would be pivotal for $\Gc$ under some realization $\omega\oplus\chi_{\mathsf E'}\oplus\chi_{\mathsf E\setminus \mathsf E'}\oplus\chi_S$ and thus there must be a bad subgraph with size at most $d^2$ containing $\mathsf e_1$. This would also give rise to a bad subgraph with no more than $d^2$ vertices in $\omega\oplus\chi_{\mathsf E'}\oplus\mathsf{1}_{\mathsf E\setminus \mathsf E'}\oplus\chi_S$ which contradicts Claim~\ref{Lemma-no-bad-graph}. Having verified
\eqref{eq-cancelation}, we see that the left hand side of \eqref{eq-def-proper-realization} cancels to $0$, contradicting the fact that $\omega$ is an improper realization. This implies that \eqref{eq-containing} is true and thus finishes the proof of Lemma~\ref{lemma-containment-maximum-graph}.
\end{proof}

To complete the proof of Proposition~\ref{prop-another-bound-conditional-expectation}, we are left with the proof of Lemma~\ref{lemma-expand-graph}.
\begin{proof}[Proof of Lemma~\ref{lemma-expand-graph}]
    If $E(S)=\emptyset$ (i.e. all the edges in $K_{V(S)}$ belongs to $\mathsf E$), we simply take $\xi_S=\chi_S$, and thus (a), (b) trivially hold and (c) is also true since we have assumed $|V(S)| \geq 10$. We now assume $E(S) \neq \emptyset$. We consider an arbitrary order on $E(S)$ and we will define $\xi_S$ inductively as follows: suppose that have we defined $\xi_S$ on $\mathsf F$ as $\xi_{\mathsf F}$, and suppose that $\mathtt e$ is the minimal edge in $E(S) \setminus \mathsf F$. We then define
    \begin{equation}
        \Xi_{\mathtt e}=\Big\{ \xi \geq 0 : \forall \ W \subset V(S), \Phi_{ \mathsf{0}_{\mathsf E} \oplus \xi \mathsf{1}_{\mathtt e} \oplus \xi_{\mathsf F} \oplus \chi_{E(S)\setminus(\mathsf F\cup \{\mathtt e\})}   }(W) \ge (\log n)^{-1} \Big\} \,,\label{equ-def-lambda^(i)_k}
    \end{equation}
    and set $\xi_{\mathtt e}=\sup{\Xi_{\mathtt e}}$. By repeating the procedure above, we claim that it will end up with some $\xi_S\in [0,\infty)^{E(S)}$ satisfying the desired properties.
    
    Denote $\mathtt e_0$ as the minimal element in $E(S)$. Initially, by the fact that $\chi_S\in\mathsf A_S$, we see $\chi_{\mathtt e_0}\in \Xi_{\mathtt e_0}$. In addition, by taking $W=V(\mathtt e_0)$ we see $\Xi_{\mathtt e_0}$ is bounded. Also it is clearly closed,
   and thus we conclude that $\Xi_{\mathtt e_0}$ is a compact set which contains $\chi_{\mathtt e_0}$, so $\xi_{\mathtt e_0}\in \Xi_{\mathtt e_0}\subset [0,\infty)$ and satisfies $\xi_{\mathtt e_0}\ge \chi_{\mathtt e_0}$. Given this,
    it can be easily shown by induction that in any step of the procedure, $\Xi_{\mathtt e}$ is a compact set which contains $\chi_{\mathtt e}$, so $\xi_{\mathtt e}\in \Xi_{\mathtt e}\subset  [0,\infty)$ and $\xi_{\mathtt e}\ge \chi_{\mathtt e}$. This suggests that $\xi_S=(\xi_{e})_{e\in E(S)}\in [0,\infty)^{E(S)}$ satisfies Item (a), and we also note that it satisfies Item (b) by the definition of the $\Xi_{\mathtt e}$ for the final edge $\mathtt e$.

    It remains to verify that $\xi_S$ defined as above satisfies Item (c). To this end, we define for each $\mathtt e\in E(S)$,
    \begin{align*}
        W_{\mathtt e} = \arg \min \Big\{ \Phi_{ \mathsf{0}_{\mathsf{E}} \oplus \xi_S  } (W) : V(\mathtt e) \subset W \subset V(S) \Big\} \,,
    \end{align*}
    and if there are multiple minimizers, we arbitrarily choose one among those with the maximal number of vertices. For each $\mathtt e\in E(S)$, denote $\mathsf F_{\mathtt e}=\{\mathtt e'\in E(S):\mathtt e'\preceq\mathtt e\}$. Then by the definition of $\xi_{\mathtt e}$, we see
    \[
    (\log n)^{-1}=\Phi_{\mathsf{0}_{\mathsf E} \oplus\xi_{\mathsf F_{\mathtt e}}\oplus\chi_{E(S)\setminus\mathsf F_{\mathtt e}} }(W_{\mathtt e})\ge \Phi_{\mathsf{0}_{\mathsf E} \oplus\xi_S }(W_{\mathtt e})\ge (\log n)^{-1}\,,
    \]
    and thus $\Phi_{\mathsf{0}_{\mathsf E} \oplus\xi_S }(W_{\mathtt e})=(\log n)^{-1}$ for each $\mathtt e\in E(S)$. In addition, by Item (b) (which we have already verified) it holds that for any $\mathtt e_1,\mathtt e_2\in E(S)$, $\Phi_{\mathsf{0}_{\mathsf E} \oplus\xi_S }(W_{\mathtt e_1}\cap W_{\mathtt e_2})\ge (\log n)^{-1}$, $\Phi_{\mathsf{0}_{\mathsf E} \oplus\xi_S }(W_{\mathtt e_1}\cup W_{\mathtt e_2})\ge (\log n)^{-1}$. Hence by \eqref{eq-useful-property},
    \begin{align*}
    (\log n)^{-2}\le&\ \Phi_{\mathsf{0}_{\mathsf E} \oplus\xi_S }(W_{\mathtt e_1}\cup W_{\mathtt e_2}){\Phi_{\mathsf{0}_{\mathsf E} \oplus\xi_S }(W_{\mathtt e_1}\cap W_{\mathtt e_2})}\\
    \le&\ {\Phi_{\mathsf{0}_{\mathsf E} \oplus
    \xi_S }(W_{\mathtt e_1})\Phi_{\mathsf{0}_{\mathsf E} \oplus\xi_S }(W_{\mathtt e_2})}=(\log n)^{-2}\,,
    \end{align*}
    and thus the equality must hold. Then from our choice of $W_{\mathtt e},\mathtt e\in E(S)$, we have $W_{\mathtt e_1}=W_{\mathtt e_1}\cup W_{\mathtt e_2}=W_{\mathtt e_2}$. Since this is true for any $\mathtt e_1,\mathtt e_2\in E(S)$, we conclude that all the $W_{\mathtt e}$'s are the same, and the common set must be $V(S)$ since it contains $V(\mathtt e)$ for each $\mathtt e\in E(S)$ and $S$ has no isolated vertex (also recalling $W_{\mathtt e}\subset V(S)$ by definition). As a result, we have for any $H\subset V(S)$, it holds that
    \[
    \Phi_{\mathsf{0}_{\mathsf E} \oplus\xi_S }(V(S))=(\log n)^{-1} \le \Phi_{\mathsf{0}_{\mathsf E} \oplus\xi_S }(H)\,.
    \]
    This yields
    \begin{align*}
    \Phi_{\mathsf{1}_{\mathsf E} \oplus\xi_S }(V(S)) =\Phi_{\mathsf{0}_{\mathsf E} \oplus\xi_S }(V(S)) \big(qd^6\big)^{|\mathsf E|} 
    \leq \Phi_{\mathsf{0}_{\mathsf E} \oplus\xi_S }(H) \big(qd^6\big)^{|\mathsf E|} \le \Phi_{\mathsf{1}_{\mathsf E} \oplus\xi_S }(H)\,,
    \end{align*}
    which verifies Item (c) and thus completes the proof.
\end{proof}

\subsection{Verification of \ref{eq-quanlitative-estimation}.}\label{subsec-B3}
Finally, we detail the verification of \eqref{eq-quanlitative-estimation}. When $4(|E(S_1)|+|E(S_2)|-2|E(S_0)|) \leq 5(|V(S_1)|+|V(S_2)|-2 |V(S_0)|)-1$, we have (recall that $d\ge 100$)
    \begin{align*}
        &\big(n d^{20}\big)^{-\frac{|V(S_1)|+|V(S_2)|}{2}+|V(S_0)|} 2^{2+|E(S_1)|+|E(S_2)|-2|E(S_0)|} \\
        \leq &\ 4d^{-2}\cdot n^{-\frac{|V(S_1)|+|V(S_2)|}{2}+|V(S_0)|} \left({d^8}/{2}\right)^{-(|E(S_1)|+|E(S_2)|-2|E(S_0)| )} \\
        \le &\ n^{-\frac{|V(S_1)|+|V(S_2)|}{2}+|V(S_0)|} d^{-7( |E(S_1)|+|E(S_2)|-2|E(S_0)| )} \,.
    \end{align*}
    And when $4(|E(S_1)|+|E(S_2)|-2|E(S_0)|) \geq 5( |V(S_1)|+|V(S_2)|-2|V(S_0)| )$, we have (note that $q=n^{-1+o(1)}$ and so $4q\le n^{-4/5}d^{-14}$ for large $n$)
    \begin{align*}
        &(4q)^{ \frac{|E(S_1)|+|E(S_2)|}{2}-2|E(S_0)|}\leq ( n^{4/5}d^{14})^{ -\frac{|E(S_1)|+|E(S_2)|}{2}-2|E(S_0)|} \\
        \leq &\ n^{ -\frac{2}{5}( |E(S_1)|+|E(S_2)|-2|E(S_0)| ) } d^{-7( |E(S_1)|+|E(S_2)|-2|E(S_0)| )} \\
        \leq &\ n^{-\frac{|V(S_1)|+|V(S_2|)}{2}+|V(S_0)|} d^{-7( |E(S_1)|+|E(S_2)|-2|E(S_0)| )}\,.
    \end{align*}
This verifies \eqref{eq-quanlitative-estimation} as desired.
\bibliographystyle{plain}
\small

\end{document}